\providecommand{\U}[1]{\protect\rule{.1in}{.1in}}
\numberwithin{equation}{section}
\newtheorem{theorem}{Theorem}[section]
\newtheorem{definition}[theorem]{Definition}
\newtheorem{example}[theorem]{Example}
\newtheorem{lemma}[theorem]{Lemma}
\newenvironment{proof}[1][Proof]{\noindent \textbf{#1.} }{\  \rule{0.5em}{0.5em}}
\begin{document}

\title{Generalized Cops and Robbers:\\A Multi-Player Pursuit Game on Graphs}
\author{Ath. Kehagias\thanks{The author thanks Steve Alpern and Pascal Schweitzer for
several useful and inspiring discussions.}}
\date{\today}
\maketitle

\begin{abstract}
We introduce and study the \emph{Generalized Cops and Robbers} game (GCR), an
$N$-player pursuit game in graphs. The two-player version is essentially
equivalent to the classic \emph{Cops and Robbers} (CR)\ game. The three-player
version can be understood as two CR games played simultaneously on the same
graph; a player can be at the same time both \emph{pursuer and evader}. The
same is true for four or more players. We formulate GCR as a \emph{discounted
stochastic game of perfect information } and prove that, for three or more
players, it has at least two \emph{Nash Equilibria}:\ one in positional
deterministic strategies and another in non-positional ones. We also study the
capturing properties of GCR Nash Equilibria\ in connection to the
\emph{cop-number} of a graph. Finally, we briefly discuss GCR as a member of a
wider family of multi-player graph pursuit games with rather interesting properties.

\end{abstract}

\section{Introduction\label{sec01}}

We introduce and study \emph{Generalized Cops and Robbers} (\emph{GCR}); it
is\ a \emph{multi-player} \emph{pursuit game} closely related to the classic
two-player \emph{Cops and Robbers} (\emph{CR})\ game
\cite{nowakowski1983,quillot1979}.

GCR is played on a finite, simple, undirected graph $G$ by $N$ players
$P_{1},P_{2},...,P_{N}$ (with $N\geq2$). The players start at given vertices
of the graph and at each turn one player moves to a vertex in the closed
neighborhood of his current position; the other players stay at their current
vertices. The game effectively terminates when, for some $n\in\left\{
1,2,...,N-1\right\}  $,\ $P_{n}$ \emph{captures} $P_{n+1}$, i.e., when they
are located in the same vertex; if no capture ever takes place, the game
continues \emph{ad infinitum}.

Let us denote the GCR game with $N$ players by $\Gamma_{N}$. Then $\Gamma_{2}$
is very similar to the classic CR\ game, where $P_{1}$ (the \textquotedblleft
cop\textquotedblright)\ tries to capture $P_{2}$ (the \textquotedblleft
robber\textquotedblright). In $\Gamma_{3}$, $P_{1}$ tries to capture $P_{2}$
who tries to evade $P_{1}$ and capture $P_{3}$; $P_{1}$ can never be captured
and $P_{3}$ can never capture. Hence $\Gamma_{3}$ can be understood as two CR
games played simultaneously on the same graph; a player is both pursuer and
evader \emph{at the same time}. The situation is extended similarly for higher
$N$ values.

As we will show, $\Gamma_{2}$ can be formulated as a \emph{zero-sum stochastic
game} which has a \emph{value} (and both players have \emph{optimal
strategies}). On the other hand, for $N\geq3$, $\Gamma_{N}$ is a non-zero sum
game and the main question is the existence of Nash Equilibria (NE). As we
will show, more than one such equilibria always exist and they sometimes lead
to surprising player behavior. In this sense, GCR presents novel and (we
hope)\ mathematically interesting problems.

There is a rich literature on pursuit games in graphs, Euclidean spaces and
other more general structures but it is generally confined to two-player games.

The seminal works on pursuit games in graphs are
\cite{nowakowski1983,quillot1979} in which the classic CR game was introduced.
A great number of variations of the classic game have been studied; an
extensive and recent review of the related literature appears in the book
\cite{bonato2011}. However, practically all of this literature concerns
\emph{two-player} games. Classic CR and its variants may involve more than one
cops, but all of them are \emph{tokens} controlled by a single \emph{cop
player}. A very interesting paper \cite{bonato2017} deals with
\textquotedblleft generalized cops and robber games\textquotedblright\ but
again the scope is restricted to two-player games. In fact, the only previous
work (of which we are aware)\ dealing with \emph{multi-player} games of
pursuit in graphs is our own \cite{kehagias2017}. It is also remarkable that,
while classic CR\ and many of its variants admit a natural game theoretic
formulation and study, this has not been exploited in the CR\ literature.

Regarding pursuit in Euclidean spaces, the predominant approach is in terms of
\emph{differential games} as introduced in the seminal book \cite{isaacs1965}.
There is a flourishing literature on the subject, which contains many works
involving multiple pursuers, but they are generally assumed to be
\emph{collaborating} \cite{chen2016,jang2005,pham2008,souidi2015,sun2017}. The
case of antagonistic pursuers has been studied in some papers
\cite{foley1974,ho1969} but the methods used in these works do not appear to
be easily applicable to the study of pursuit / evasion on \emph{graphs}.

This paper is organized as follows. Section \ref{sec02} is preliminary:\ we
introduce notation, define \emph{states}, \emph{histories} and
\emph{strategies} and give a general form of the \emph{payoff} function. In
Section \ref{sec03} we prove that, for any graph and any number of players,
GCR\ has a NE in \emph{deterministic positional strategies}; this result is
applicable not only to GCR\ but to a wider family of pursuit games, as will be
discussed later. In Section \ref{sec04} we show that in the two-player GCR
game: (i)\ the \emph{value} of the game exists (essentially it is the
logarithm of the \emph{optimal capture time)\ }and (ii)\ both players have
optimal deterministic positional strategies. Because of the close connection
of GCR\ to the classical CR\ game, these results also hold for CR; while they
have been previously established by graph theoretic methods, we believe our
proof is the first game-theoretic one. In Section \ref{sec05} we study the
three-player GCR game and prove: (i)\ the existence of a NE in deterministic
positional strategies; (ii)\ the existence of an additional \ NE in
deterministic but non-positional strategies; (iii) various results connecting
the classic \emph{cop number} of a graph to \emph{capturability}. In Section
\ref{sec06} we briefly discuss $N$-players GCR when $N\geq4$. In Section
\ref{sec07} we show that the ideas behind GCR\ can be generalized to obtain a
large family of multi-player pursuit games on graphs. Finally, in Section
\ref{sec08} we summarize, present our conclusions and discuss future research directions.

\section{Preliminaries\label{sec02}}

The following notations will be used throughout the paper.

\begin{enumerate}
\item Given a graph $G=\left(  V,E\right)  $, for any $x\in V$, $N\left(
x\right)  $ is the \emph{neighborhood} of $x$: $N\left(  x\right)  =\left\{
y:\left\{  x,y\right\}  \in E\right\}  $; $N\left[  x\right]  $ is the
\emph{closed neighborhood} of $x$: $N\left[  x\right]  =N\left(  x\right)
\cup\left\{  x\right\}  $.

\item The cardinality of set $A$ is denoted by $\left\vert A\right\vert $; the
set of elements of $A$ which are not elements of $B$ is denoted by
$A\backslash B$.

\item $\mathbb{N}$ is the set of natural numbers $\left\{  1,2,3,...\right\}
$ and $\mathbb{N}_{0}$ is $\left\{  0,1,2,3,...\right\}  $. For any
$M\in\mathbb{N}$ we define $\left[  M\right]  =\left\{  1,2,...,M\right\}  $.

\item The \emph{graph distance} (length of shortest path in $G$)\ between
$x,y\in V$ is denoted by $d_{G}\left(  x,y\right)  $ or simply by $d\left(
x,y\right)  $.
\end{enumerate}

In Section \ref{sec01} we have described GCR\ informally; now we define the
elements of the game rigorously.

The game proceeds at discrete \emph{turns} (time steps) and at every turn all
players except one must remain at their locations. In other words, at every
turn $t\in\mathbb{N}$, for every player except one, the \emph{action set} (see
(\ref{eq007}) below)\ is a singleton. This, in addition to the fact that all
players are aware of all previously executed moves, means that GCR\ is a
\emph{perfect information game}.

Any player $P_{n}$ can have the first move, but afterwards they move in the
sequence implied by their numbering:
\[
P_{n}\rightarrow P_{n+1}\rightarrow...\rightarrow P_{N}\rightarrow
P_{1}\rightarrow P_{2}\rightarrow....
\]
When a player has the move, he can either move to a vertex adjacent to his
current one or stay in place. Hence the \emph{game position} or \emph{game
state} has the form $s=\left(  x^{1},x^{2},...,x^{N},p\right)  $ where
$x^{n}\in V$ is the position (vertex) of the $n$-th player and $p\in\left[
N\right]  $ is the number of the player who has the next move. The set of
\emph{nonterminal} states is%
\[
S=\left\{  \left(  x^{1},x^{2},...,x^{N},p\right)  :\left(  x^{1}%
,x^{2},...,x^{N}\right)  \in V\times V\times...\times V\text{ and }p\in\left[
N\right]  \right\}  .
\]
We introduce an additional \emph{terminal state} $\overline{s}$. Hence the
full state set is
\[
\overline{S}=S\cup\left\{  \overline{s}\right\}  .
\]
We define $S_{n}$ to be the set of states in which $P_{n}$ has the next move:%
\[
\text{for each }n\in\left[  N\right]  :S^{n}=\left\{  s:s=\left(  x^{1}%
,x^{2},...,x^{N},n\right)  \in S\right\}  ,
\]
Hence the set of nonterminal states can be partitioned as follows:%
\[
S=S^{1}\cup S^{2}\cup...\cup S^{N}.
\]
For any $n\in\left[  N-1\right]  $, we say that $P_{n}$ \emph{captures}
$P_{n+1}$ iff they are located in the same vertex; the set of $P_{n}%
$\emph{-capture states}, i.e., those in which $P_{n}$ captures $P_{n+1}$ is
$\widetilde{S}^{n}$:
\[
\text{for each }n\in\left[  N-1\right]  :\widetilde{S}^{n}=\left\{
s:s=(x^{1},x^{2},...,x^{N},p)\in S\text{ and }x^{n}=x^{n+1}\right\}  .
\]
Hence nonterminal states can be partitioned into two sets:%
\[%
\begin{tabular}
[c]{ll}%
$\text{capture states: }$ & $S_{C}=\widetilde{S}^{_{1}}\cup\widetilde{S}%
^{_{2}}\cup...\cup\widetilde{S}^{N-1}\text{ ,}$\\
non-capture states: & $S_{NC}=S\backslash S_{C}.$%
\end{tabular}
\ \ \
\]

As already mentioned, when $P_{n}$ has the move, he can move to any vertex in
the closed neighborhood of $x^{n}$; when another player has the move, $P_{n}$
can only stay in place; when the game is in a capture state or in the terminal
state, every player has only the \textquotedblleft null move\textquotedblright%
\ $\lambda$. Formally, when the game state is $s$, the $n$-th player's
\emph{action set} is denoted by $A^{n}\left(  s\right)  $ and defined by
\begin{equation}
A^{n}\left(  s\right)  =\left\{
\begin{array}
[c]{ll}%
N\left[  x^{n}\right]  & \text{when }s=\left(  x^{1},x^{2},...,x^{N},n\right)
\in S^{n}\cap S_{NC},\\
\left\{  x^{n}\right\}  & \text{when }s=\left(  x^{1},x^{2},...,x^{N}%
,m\right)  \in S^{m}\cap S_{NC}\text{ with }m\neq n,\\
\left\{  \lambda\right\}  & \text{when }s\in S_{C}\cup\left\{  \overline
{s}\right\}  .
\end{array}
\right.  \label{eq007}%
\end{equation}
The players' \emph{actions} (i.e., \emph{moves}) effect state-to-state
transitions in the obvious manner. Suppose the game is at position $s\in
S^{n}$ and $P_{n}$ makes the move $a^{n}\in A_{n}\left(  s\right)  $; then
$\mathbf{T}\left(  s,a^{n}\right)  $ denotes the resulting game position. A
capture state always transits to $\overline{s}$ and $\overline{s}$ always
transits to itself:%
\[
\forall s\in S_{C}:\mathbf{T}\left(  s,\lambda\right)  =\overline{s}%
\qquad\text{ and}\qquad\mathbf{T}\left(  \overline{s},\lambda\right)
=\overline{s}.
\]
We define the \emph{capture time} to be
\[
T_{C}=\min\left\{  t:x_{t}^{1}=x_{t}^{2}\text{ or }x_{t}^{2}=x_{t}^{3}\text{
or ... or }x_{t}^{N-1}=x_{t}^{N}\right\}  .
\]
If no capture takes place, the \emph{capture time} is $T_{C}=\infty$. Hence
the game can evolve as follows.

\begin{enumerate}
\item \noindent If $T_{C}=0$ then the initial state $s_{0}$ is a capture state
and $s_{t}=\overline{s}$ for every $t\in\mathbb{N}=\left\{  1,2,...\right\}  $.

\item If $0<T_{C}<\infty$ then:

\begin{enumerate}
\item at the $0$-th turn the game starts at some preassigned state $s_{0}\in
S_{NC}$;

\item at the $t$-th turn (for $0<t<T_{C}$), the game moves to some state
$s_{t}\in S_{NC}$;

\item at the $T_{C}$-th turn the game moves to some capture state $s_{T_{C}%
}\in S_{C}$ and

\item at $t=T_{C}+1$ the game moves to the terminal state and stays there: for
every $t>T_{C}$, $s_{t}=\overline{s}$.
\end{enumerate}

\item \noindent Finally, i\noindent f $T_{C}=\infty$ then $s_{t}\in S_{NC}$
for every $t\in\mathbb{N}_{0}=\left\{  0,1,2,...\right\}  $.
\end{enumerate}

\noindent According to the above, the game starts at some preassigned state
$s_{0}=\left(  x_{0}^{1},x_{0}^{2},x_{0}^{3},p_{0}\right)  $ and at the $t$-th
turn ($t\in\mathbb{N}$) is in the state $s_{t}=\left(  x_{t}^{1},x_{t}%
^{2},...,x_{t}^{N},p_{t}\right)  $. This results in a \emph{game history}
$\mathbf{s=}s_{0}s_{1}s_{2}...$ . In other words, we assume each play of the
game lasts an infinite number of turns; however, if $T_{C}<$ $\infty$ then
$s_{t}=\overline{s}$ for every $t>T_{C}$; hence, while the game lasts an
infinite number of turns, it \emph{effectively} ends at $T_{C}$. We define the
following history sets.

\begin{enumerate}
\item Histories of length $k:H_{k}=\left\{  \mathbf{s}=s_{0}s_{1}%
...s_{k}\right\}  $;

\item Histories of finite length$:H_{\ast}=\cup_{k=1}^{\infty}H_{k}$;

\item Histories of infinite length$:H_{\infty}=\left\{  \mathbf{s}=s_{0}%
s_{1}...s_{k}...\right\}  $.
\end{enumerate}

A \emph{deterministic strategy} is a function $\sigma^{n}$ which assigns a
move to each finite-length history:
\[
\sigma^{n}:H_{\ast}\rightarrow V
\]
At the start of the game $P_{n}$ selects a $\sigma^{n}$\emph{ }which
determines all his subsequent moves. We will only consider \emph{legal}%
\footnote{I.e., they never produce moves outside the player's action set.}
deterministic strategies\footnote{As will be seen, since GCR is a game of
perfect information, the player loses nothing by using only deterministic
strategies.}. A \emph{strategy profile} is a tuple $\sigma=\left(  \sigma
^{1},\sigma^{2},...,\sigma^{N}\right)  $, which specifies one strategy for
each player. We are particularly interested in \emph{positional }strategies,
i.e., $\sigma^{n}$ such that the next move depends only on the current state
of the game (but not on previous states or current time):%
\[
\sigma^{n}\left(  s_{0}s_{1}...s_{t}\right)  =\sigma^{n}\left(  s_{t}\right)
.
\]
We define $\sigma^{-n}=\left(  \sigma^{j}\right)  _{j\in\left[  N\right]
\backslash\left\{  n\right\}  }$; for instance, if $\sigma=\left(  \sigma
^{1},\sigma^{2},\sigma^{3}\right)  $ then $\sigma^{-1}=\left(  \sigma
^{2},\sigma^{3}\right)  $.

To complete the description of GCR we must specify the players' \emph{payoff
functions}; we will do this in several steps. In this section we give a
general form of the payoff function, which applies not only to GCR,\ but to a
broader family of $N$-player games (with $N\geq2$). In the next section we
will prove that any game of this family admits at least one NE in
\emph{positional deterministic }strategies. In subsequent sections we will
treat separately the cases of GCR$\ $with $N=2$, $N=3$ and $N\geq4$ players;
in each case, by completely specifying the payoff function, we will reach
additional conclusions regarding the properties of the respective game.

For the time being we only specify that the \emph{total payoff} function of
the $n$-th player ($n\in\left[  N\right]  $)\ has the form%
\begin{equation}
Q^{n}\left(  s_{0},\sigma\right)  =\sum_{t=0}^{\infty}\gamma^{t}q^{n}\left(
s_{t}\right)  , \label{eq02001}%
\end{equation}
where: $q^{n}$ is the \emph{turn payoff }(it depends on $s_{t}$, the game
state at time $t$) which is assumed to be bounded:%
\[
\exists M:\forall n\in\left[  N\right]  ,\forall s\in S:\left\vert
q^{n}\left(  s\right)  \right\vert \leq M;
\]
and $\gamma\in\left(  0,1\right)  $ is the \emph{discount factor}.

Since the total payoff is the sum of the discounted turn payoffs, GCR\ is a
\emph{multi-player discounted stochastic game }\cite{filar1996}. Recall that a
stochastic game is one which consists of a sequence of one-shot games, each of
which depends on the previous game played and the actions of the players. In
GCR the players can limit themselves to deterministic strategies; since the
state transitions are also deterministic, while GCR is a \textquotedblleft
stochastic game\textquotedblright\ in the above sense, in all cases of
interest it will actually evolve in a deterministic manner.

We will denote by $\Gamma_{N}\left(  G|s_{0}\right)  $ the GCR\ game played by
$N$ players on graph $G$, starting from state $s_{0}$. Our results hold for
any $\gamma\in\left(  0,1\right)  $ so, for simplicity of notation, we omit
the $\gamma$ dependence. In addition, $\gamma$ will be omitted from statements
of theorem, lemmas etc. in the rest if the paper, since$\ $all the results
presented hold for any $\gamma\in\left(  0,1\right)  $.

\section{Nash Equilibria for Perfect Information Discounted
Games\ \label{sec03}}

The following theorem shows that every $\Gamma_{N}\left(  G|s_{0}\right)  $
has a Nash Equilibrium in deterministic positional strategies.

\begin{theorem}
\label{prop0301}For every graph $G$, every $N\geq2$ and every initial state
$s_{0}\in S$ the game $\Gamma_{N}\left(  G|s_{0}\right)  $ admits a profile of
deterministic positional strategies $\widehat{\sigma}=\left(  \widehat{\sigma
}^{1},\widehat{\sigma}^{2},...,\widehat{\sigma}^{N}\right)  $ such that%
\begin{equation}
\forall n\in\left[  N\right]  ,\forall s_{0}\in S,\forall\sigma^{n}%
:Q^{n}\left(  s_{0},\widehat{\sigma}^{n},\widehat{\sigma}^{-n}\right)  \geq
Q^{n}\left(  s_{0},\sigma^{n},\widehat{\sigma}^{-n}\right)  . \label{eq02011}%
\end{equation}
For every $s$ and $n$, let $u^{n}\left(  s\right)  =Q^{n}\left(
s,\widehat{\sigma}\right)  $. Then the \ following equations are satisfied%
\begin{align}
\forall n,\forall s  &  \in S^{n}:\widehat{\sigma}^{n}\left(  s\right)
=\arg\max_{a^{n}\in A^{n}\left(  s\right)  }\left[  q^{n}\left(  s\right)
+\gamma u^{n}\left(  \mathbf{T}\left(  s,a^{n}\right)  \right)  \right]
,\label{eq02012}\\
\forall n,m,\forall s  &  \in S^{n}:u^{m}\left(  s\right)  =q^{m}\left(
s\right)  +\gamma u^{m}\left(  \mathbf{T}\left(  s,\widehat{\sigma}^{n}\left(
s\right)  \right)  \right)  . \label{eq02013}%
\end{align}

\end{theorem}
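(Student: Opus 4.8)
The plan is to produce the value functions $u^{n}$ and the positional profile $\widehat{\sigma}$ directly as a solution of the Bellman system (\ref{eq02012})--(\ref{eq02013}), and only afterwards to deduce the equilibrium inequality (\ref{eq02011}); this reverses the order of the statement but is logically cleaner. First I would carry out the \emph{verification} half, which is routine. Fix $n$ and freeze the opponents at the positional strategies $\widehat{\sigma}^{-n}$. Since every other player now reacts to the current state alone, player $n$ faces a single-agent, deterministic, discounted Markov decision process on $\overline{S}$ in which he controls the transition exactly at the states of $S^{n}$ and is passively carried along at all other states. Standard discounted-MDP theory then gives a unique bounded optimal value $V^{n}$ characterized by $V^{n}(s)=\max_{a\in A^{n}(s)}[q^{n}(s)+\gamma V^{n}(\mathbf{T}(s,a))]$ for $s\in S^{n}$ and $V^{n}(s)=q^{n}(s)+\gamma V^{n}(\mathbf{T}(s,\widehat{\sigma}^{c(s)}(s)))$ for $s\notin S^{n}$, where $c(s)$ denotes the unique player who moves at $s$; the greedy stationary policy attains $V^{n}$ and, by the one-shot-deviation (verification) theorem, \emph{no} strategy --- positional or not --- does better from any initial state. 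These two families of equations are precisely (\ref{eq02012}) (at the states of $S^{n}$) and (\ref{eq02013}) with $m=n$ (elsewhere); hence if $(u,\widehat{\sigma})$ solves the Bellman system then $u^{n}=V^{n}$, the strategy $\widehat{\sigma}^{n}$ is optimal, and (\ref{eq02011}) follows. The remaining equations (\ref{eq02013}) with $m\neq n$ are the policy-evaluation recursion for the fixed profile $\widehat{\sigma}$, whose unique bounded solution is $Q^{m}(\cdot,\widehat{\sigma})$; this identifies $u^{m}=Q^{m}(\cdot,\widehat{\sigma})$ and makes the statement's definition $u^{n}(s)=Q^{n}(s,\widehat{\sigma})$ consistent with the equations.

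Everything therefore reduces to the \emph{existence} of a fixed point of the operator $F$ on bounded value vectors defined, state by state, as follows: at a state $s$ with controller $c(s)=m$, let the controller pick $a^{*}(s)\in\arg\max_{a\in A^{m}(s)}u^{m}(\mathbf{T}(s,a))$ (a nonempty finite set, with ties broken by a fixed rule), and then set $(Fu)^{k}(s)=q^{k}(s)+\gamma u^{k}(\mathbf{T}(s,a^{*}(s)))$ for every player $k$. A fixed point of $F$ is exactly a solution of (\ref{eq02012})--(\ref{eq02013}). The tempting route is Banach's theorem, and indeed for each fixed controller the self-referential part $u^{m}(s)=q^{m}(s)+\gamma\max_{a}u^{m}(\mathbf{T}(s,a))$ is a $\gamma$-contraction in the controller's \emph{own} coordinate.

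The hard part, and the reason Banach does not apply directly, is the \emph{cross-player coupling}: the action $a^{*}(s)$ selected to maximize the controller's own value also fixes how every \emph{other} player's value is propagated through $s$, and since $\arg\max$ is only upper-semicontinuous, an arbitrarily small change in $u^{m}$ can flip $a^{*}(s)$ and move the other coordinates $(Fu)^{k}(s)$ by an amount of order one. Thus $F$ is in general neither a contraction nor even continuous, and naive value iteration need not converge. To overcome this I would build the fixed point by \emph{finite-horizon backward induction}: for each horizon $k$ the truncated game is a finite game of perfect information, so Zermelo-style backward induction yields pure (generally time-dependent) subgame-perfect strategies and value vectors $w_{k}$ satisfying the same recursion with $u$ replaced by $w_{k-1}$. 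I would then pass to the limit $k\to\infty$, using that $\overline{S}$ and all action sets are finite together with $\gamma\in(0,1)$ to control the tails $\gamma^{k}M/(1-\gamma)\to0$. The genuinely delicate step --- the crux of the theorem --- is to show that the greedy actions eventually \emph{stabilize} (equivalently, that the $w_{k}$ converge despite the oscillation flagged above); here the perfect-information structure is essential, since it makes each local decision a single-agent optimization, and I expect to close the argument with a tie-breaking/subsequence selection that pins down a consistent limiting positional profile $\widehat{\sigma}$ and limiting values $u$ solving (\ref{eq02012})--(\ref{eq02013}). Feeding this back into the verification half of the first paragraph completes the proof.
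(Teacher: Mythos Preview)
Your verification half is sound, and your identification of the cross-player coupling as the obstruction to a naive contraction argument is exactly right. But your route to \emph{existence} is both harder than needed and, as written, genuinely incomplete. You yourself flag the crux: you need the finite-horizon greedy actions to ``eventually stabilize,'' and you offer only the hope that a tie-breaking or subsequence argument will close it. This is not a detail. The horizon-$k$ strategies are time-dependent, not positional, so even extracting a subsequence on which the \emph{first-step} action at every state is constant does not give you a positional profile whose value vector is the limit of the $w_{k}$; the indices $k$ and $k-1$ in the recursion $w_{k}=q+\gamma\,w_{k-1}\circ\mathbf{T}(\cdot,a^{*}_{k})$ lie in different subsequences, and the argument stalls. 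One can make a limiting argument work, but it takes real effort and is essentially a reproof of Fink's equilibrium theorem in the perfect-information special case.

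The paper bypasses all of this by \emph{citing} Fink's theorem \cite{fink1963}: every $N$-player discounted stochastic game has a positional Nash equilibrium in (possibly) mixed strategies, together with the coupled optimality equations (\ref{eq02015}). The paper then simply specializes those equations to the perfect-information structure: at a state $s\in S^{n}$ every player $m\neq n$ has a singleton action set, and transitions are deterministic, so (\ref{eq02015}) collapses to a one-dimensional maximization over $A^{n}(s)$ for which a pure maximizer $\widehat{\sigma}^{n}(s)$ always exists. This immediately yields (\ref{eq02012})--(\ref{eq02013}) and purifies the equilibrium. In short, the paper outsources the hard fixed-point step to Fink and keeps only the easy purification; your proposal tries to redo the hard step from scratch and does not quite close it.
\bigskip
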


\begin{proof}
Fink has proved in \cite{fink1963} that every $N$-player discounted stochastic
game has a positional NE in \emph{probabilistic} strategies; this result holds
for the general game (i.e., with \emph{concurrent} moves and probabilistic
strategies and state transitions). According to \cite{fink1963}, at
equilibrium the following equations must be satisfied for all $m$ and $s$:
\begin{equation}
\mathfrak{u}^{m}\left(  s\right)  =\max_{\mathbf{p}^{m}\left(  s\right)  }%
\sum_{a^{1}\in A^{1}\left(  s\right)  }\sum_{a^{2}\in A^{2}\left(  s\right)
}...\sum_{a^{N}\in A^{N}\left(  s\right)  }p_{a^{1}}^{1}\left(  s\right)
p_{a^{2}}^{2}\left(  s\right)  ...p_{a^{N}}^{N}\left(  s\right)  \left[
q^{m}\left(  s\right)  +\gamma\sum_{s^{\prime}}\Pi\left(  s^{\prime}%
|s,a^{1},a^{2},...,a^{N}\right)  \mathfrak{u}^{m}\left(  s^{\prime}\right)
\right]  , \label{eq02015}%
\end{equation}
where we have modified Fink's original notation to fit our own; in particular:

\begin{enumerate}
\item $\mathfrak{u}^{m}\left(  s\right)  $ is the expected value of
$u^{m}\left(  s\right)  $;

\item $p_{a^{m}}^{m}\left(  s\right)  $ is the probability that, given the
current game state is $s$, the $m$-th player plays action $a^{m}$;

\item $\mathbf{p}^{m}\left(  s\right)  =\left(  p_{a^{m}}^{m}\left(  s\right)
\right)  _{a^{m}\in A^{m}\left(  s\right)  }$ is the vector of all such
probabilities (one probability per available action);

\item $\Pi\left(  s^{\prime}|s,a^{1},a^{2},...,a^{N}\right)  $ is the
probability that, given the current state is $s$ and the player actions are
$a^{1},a^{2},...,a^{N}$, the next state is $s^{\prime}$ .
\end{enumerate}

\noindent Now choose any $n$ and any $s\in S^{n}$. For all $m\neq n$, the
$m$-th player has a single move, i.e., we have $A^{m}\left(  s\right)
=\left\{  a^{m}\right\}  $, and so $p_{a_{m}}^{m}\left(  s\right)  =1$. Also,
since transitions are deterministic,%
\[
\sum_{s^{\prime}}\Pi\left(  s^{\prime}|s,a^{1},a^{2},...,a^{N}\right)
\mathfrak{u}^{n}\left(  s^{\prime}\right)  =\mathfrak{u}^{n}\left(
\mathbf{T}\left(  s,a^{n}\right)  \right)  .
\]
Hence, for $m=n$, (\ref{eq02015})\ becomes
\begin{equation}
\mathfrak{u}^{n}\left(  s\right)  =\max_{\mathbf{p}^{n}\left(  s\right)  }%
\sum_{a^{n}\in A^{n}\left(  s\right)  }p_{a^{n}}^{n}\left(  s\right)  \left[
q^{n}\left(  s\right)  +\gamma\mathfrak{u}^{n}\left(  \mathbf{T}\left(
s,a^{n}\right)  \right)  \right]  . \label{eq02016}%
\end{equation}
Furthermore let us define $\widehat{\sigma}^{n}\left(  s\right)  $ (for the
specific $s$ and $n$) by
\begin{equation}
\widehat{\sigma}^{n}\left(  s\right)  =\arg\max_{a^{n}\in A^{n}\left(
s\right)  }\left[  q^{n}\left(  s\right)  +\gamma\mathfrak{u}^{n}\left(
\mathbf{T}\left(  s,a^{n}\right)  \right)  \right]  . \label{eq02017}%
\end{equation}
If (\ref{eq02016}) is satisfied by more than one $a^{n}$, we set
$\widehat{\sigma}^{n}\left(  s\right)  $ to one of these arbitrarily. Then, to
maximize the sum in (\ref{eq02016}) the $n$-th player can set $p_{\widehat
{\sigma}^{n}\left(  s\right)  }^{n}\left(  s\right)  =1$ and $p_{a}^{n}\left(
s\right)  =0$ for all $a\neq\widehat{\sigma}^{n}\left(  s\right)  $. Since
this is true for all states and all players (i.e., every player can, without
loss, use deterministic strategies) we also have $\mathfrak{u}^{n}\left(
s\right)  =u^{n}\left(  s\right)  $. Hence (\ref{eq02016}) becomes
\begin{equation}
u^{n}\left(  s\right)  =\max_{a^{n}\in A^{n}\left(  s\right)  }\left[
q^{n}\left(  s\right)  +\gamma u^{n}\left(  \mathbf{T}\left(  s,a^{n}\right)
\right)  \right]  =q^{n}\left(  s\right)  +\gamma u^{n}\left(  \mathbf{T}%
\left(  s,\widehat{\sigma}^{n}\left(  s\right)  \right)  \right)  .
\label{eq02018}%
\end{equation}
For $m\neq n$, the $m$-th player has no choice of action and (\ref{eq02016})
becomes
\begin{equation}
u^{m}\left(  s\right)  =q^{m}\left(  s\right)  +\gamma u^{m}\left(
\mathbf{T}\left(  s,\widehat{\sigma}^{n}\left(  s\right)  \right)  \right)  .
\label{eq02019}%
\end{equation}
We recognize that (\ref{eq02017})-(\ref{eq02019}) are (\ref{eq02012}%
)-(\ref{eq02013}). Also, (\ref{eq02017}) defines $\widehat{\sigma}^{n}\left(
s\right)  $ for every $n$ and $s$ and so we have obtained the required
deterministic positional strategies $\widehat{\sigma}=\left(  \widehat{\sigma
}^{1},\widehat{\sigma}^{2},\widehat{\sigma}^{3}\right)  $.
\end{proof}

Note that the initial state $s_{0}$ plays no special role in the system
(\ref{eq02012})-(\ref{eq02013}). In other words, using the notation $u\left(
s\right)  =\left(  u^{1}\left(  s\right)  ,u^{2}\left(  s\right)
,...,u^{N}\left(  s\right)  \right)  $ and $\mathbf{u}=\left(  u\left(
s\right)  \right)  _{s\in S}$, we see that $\mathbf{u}$ and $\widehat{\sigma}$
are the same for every starting position $s_{0}$ and every game $\Gamma
_{N}\left(  G|s_{0}\right)  $ (when $N,G$ and $\gamma$ are fixed).

Fink's proof requires that, for every $n$, the total payoff is $Q^{n}\left(
s_{0},\sigma\right)  =\sum_{t=0}^{\infty}\gamma^{t}q^{n}\left(  s_{t}\right)
$; but does not place any restrictions (except boundedness)\ on $q^{n}$. The
same is true of our proof; hence Theorem \ref{prop0301} applies not only to
the GCR\ game, for which the form of $q^{n}$\ will be specified in Sections
\ref{sec04}, \ref{sec05} and \ref{sec06}, but to a wider family of games,
which will be discussed in Section \ref{sec07}.

\section{GCR\ with Two Players and (Classic CR)\label{sec04}}

We now proceed to a more detailed study of $\Gamma_{2}\left(  G|s_{0}\right)
$. To this end, we first specify the form of the turn payoff functions $q^{1}$
and $q^{2}$:%
\begin{equation}
q^{1}\left(  s\right)  =-q^{2}\left(  s\right)  =\left\{
\begin{array}
[c]{rl}%
1 & \text{iff }s\in\widetilde{S}^{1}\\
0 & \text{else.}%
\end{array}
\right.  .
\end{equation}
Recalling that $T_{C}$ is the capture time (and letting $\gamma^{\infty}=0$),
for every $s_{0}$ and deterministic $\sigma$ which result in capture at time
$T_{C}$, we clearly have:%
\[
Q^{1}\left(  s_{0},\sigma\right)  =-Q^{2}\left(  s_{0},\sigma\right)
=\gamma^{T_{C}}.
\]
So $\Gamma_{2}\left(  G|s_{0}\right)  $ is a zero-sum game. Furthermore, since
$\log Q^{1}\left(  s_{0},\sigma\right)  =T_{C}\log\gamma$ and $\log\gamma<0$,
it follows that $P_{1}$ (resp. $P_{2}$)\ will maximize his payoff by
minimizing (resp. maximizing)\ capture time $T_{c}$. Hence we have the
following simple description:\ 

\begin{quotation}
\noindent$\Gamma_{2}\left(  G|s_{0}\right)  $ is a two-player game in which,
starting from an initial position $s_{0}=\left(  x^{1},x^{2},p\right)  $,
$P_{1}$ attempts to capture $P_{2}$ in the shortest possible time and $P_{2}$
attempts to delay capture as long as possible.
\end{quotation}

\noindent This is true whenever both players use deterministic strategies,
which they can do without loss since $\Gamma_{2}\left(  G|s_{0}\right)  $ is a
perfect information \ game. In particular, according to Theorem \ref{prop0301}%
, this holds when they play optimally. In fact, according to Theorem
\ref{prop0301} (for every $G$ and $s_{0}$) $\Gamma_{2}\left(  G|s_{0}\right)
$ has a NE $\widehat{\sigma}=\left(  \widehat{\sigma}^{1},\widehat{\sigma}%
^{2}\right)  $ in deterministic positional strategies. And, since $\Gamma
_{2}\left(  G|s_{0}\right)  $ is a zero-sum game, \ it follows that
$\widehat{\sigma}^{1},\widehat{\sigma}^{2}$ are \emph{optimal} and yield the
\emph{value} of the game. More precisely, we have the following.

\begin{theorem}
\label{prop0401}For every graph $G$ and every initial state $s_{0}\in S$, the
profile of deterministic positional strategies $\widehat{\sigma}=\left(
\widehat{\sigma}^{1},\widehat{\sigma}^{2}\right)  $ specified by Theorem
\ref{prop0301} satisfies%
\[
\max_{\sigma^{1}}\min_{\sigma^{2}}Q^{1}\left(  s_{0},\sigma^{1},\sigma
^{2}\right)  =Q^{1}\left(  s_{0},\widehat{\sigma}^{1},\widehat{\sigma}%
^{2}\right)  =\min_{\sigma^{2}}\max_{\sigma^{1}}Q^{1}\left(  s_{0},\sigma
^{1},\sigma^{2}\right)  .
\]

\end{theorem}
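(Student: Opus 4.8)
The plan is to exploit the fact, established just before the theorem, that $\Gamma_2(G|s_0)$ is zero-sum with $Q^2=-Q^1$, so that the Nash equilibrium $\widehat{\sigma}=(\widehat{\sigma}^1,\widehat{\sigma}^2)$ furnished by Theorem \ref{prop0301} is automatically a saddle point of $Q^1$; the minimax identity then follows from the elementary saddle-point argument. First I would record the two inequalities that the equilibrium condition (\ref{eq02011}) gives for $N=2$. Writing $f(\sigma^1,\sigma^2)=Q^1(s_0,\sigma^1,\sigma^2)$ and $v=f(\widehat{\sigma}^1,\widehat{\sigma}^2)$, the case $n=1$ of (\ref{eq02011}) reads $f(\widehat{\sigma}^1,\widehat{\sigma}^2)\ge f(\sigma^1,\widehat{\sigma}^2)$ for every $\sigma^1$, so $v=\max_{\sigma^1}f(\sigma^1,\widehat{\sigma}^2)$. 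For $n=2$, (\ref{eq02011}) gives $Q^2(s_0,\widehat{\sigma}^1,\widehat{\sigma}^2)\ge Q^2(s_0,\widehat{\sigma}^1,\sigma^2)$; invoking $Q^2=-Q^1$ and reversing the inequality, this becomes $f(\widehat{\sigma}^1,\widehat{\sigma}^2)\le f(\widehat{\sigma}^1,\sigma^2)$ for every $\sigma^2$, i.e. $v=\min_{\sigma^2}f(\widehat{\sigma}^1,\sigma^2)$. The crucial point, already guaranteed by Theorem \ref{prop0301}, is that these deviation inequalities hold against \emph{all} legal deterministic strategies, not merely positional ones; this is exactly the quantifier range needed for the outer optimizations below.

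Next I would derive the two one-sided bounds. From $v=\min_{\sigma^2}f(\widehat{\sigma}^1,\sigma^2)$ and the trivial fact that an outer maximum dominates any particular choice, I get $\max_{\sigma^1}\min_{\sigma^2}f\ge\min_{\sigma^2}f(\widehat{\sigma}^1,\sigma^2)=v$. Symmetrically, from $v=\max_{\sigma^1}f(\sigma^1,\widehat{\sigma}^2)$ I get $\min_{\sigma^2}\max_{\sigma^1}f\le\max_{\sigma^1}f(\sigma^1,\widehat{\sigma}^2)=v$.

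Then I would combine these with the weak-duality inequality $\max_{\sigma^1}\min_{\sigma^2}f\le\min_{\sigma^2}\max_{\sigma^1}f$, which holds for any function of two arguments and needs no game-specific input (it follows from $\min_{\sigma^2}f(\sigma^1,\sigma^2)\le f(\sigma^1,\sigma^2{}')$ for all $\sigma^2{}'$, upon taking $\max_{\sigma^1}$ and then $\min_{\sigma^2{}'}$). Chaining the three inequalities yields $v\le\max_{\sigma^1}\min_{\sigma^2}f\le\min_{\sigma^2}\max_{\sigma^1}f\le v$, forcing every term to equal $v=Q^1(s_0,\widehat{\sigma}^1,\widehat{\sigma}^2)$. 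This is precisely the asserted chain of equalities, and as a by-product it shows the outer $\max$ and $\min$ are attained, at $\widehat{\sigma}^1$ and $\widehat{\sigma}^2$ respectively, which justifies the $\max$/$\min$ notation.

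There is essentially no deep obstacle: the content is the standard theorem that a Nash equilibrium of a two-player zero-sum game is a saddle point delivering the value. The only points that need care are (i) correctly translating the $n=2$ equilibrium inequality through $Q^2=-Q^1$ and reversing its direction, and (ii) confirming the suprema and infima are genuinely attained, which the squeeze settles since the common value $v$ is realized by the positional profile $\widehat{\sigma}$. I would also note that, because $Q^1(s_0,\sigma)=\gamma^{T_C}$ on deterministic play, the value $v$ encodes the optimal capture time, tying the result back to the shortest-capture-time interpretation stated above.
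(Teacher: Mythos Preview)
Your argument is correct and is essentially the same as the paper's: the paper does not spell out a proof but simply remarks that, since $\Gamma_{2}(G|s_{0})$ is zero-sum, the Nash equilibrium $\widehat{\sigma}$ furnished by Theorem~\ref{prop0301} is automatically optimal and yields the value. You have merely made explicit the standard saddle-point derivation (translating the $n=2$ equilibrium inequality through $Q^{2}=-Q^{1}$ and combining with weak duality) that the paper leaves implicit.
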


Furthermore, $\widehat{\sigma}^{1},\widehat{\sigma}^{2}$ and $Q^{n}\left(
s_{0},\widehat{\sigma}^{1},\widehat{\sigma}^{2}\right)  $ can be computed by a
\emph{value iteration algorithm} \cite{raghavanfilar1991}. Hence $\Gamma
_{2}\left(  G|s_{0}\right)  $ is completely solved.

Let us now discuss the connection of $\Gamma_{2}\left(  G|s_{0}\right)  $ to
the classic CR\ game. Note that the above description of $\Gamma_{2}\left(
G|s_{0}\right)  $ is almost identical to that of the \emph{time optimal
}version\emph{ }of the classic CR\ game (e.g., see \cite[Section
8.6]{bonato2011}). We only have the following differences.

\begin{enumerate}
\item In $\Gamma_{2}\left(  G|s_{0}\right)  $ time is measured in turns; in
classic CR it is measured in \emph{rounds}, where each round consists of one
$P_{1}$ turn and one $P_{2}$ turn.

\item In $\Gamma_{2}\left(  G|s_{0}\right)  $ the starting position $s_{0}$ is
\emph{given}; in classic CR it is chosen by the players, in an initial
\textquotedblleft\emph{placement}\textquotedblright\ round. In other words,
classic CR\ starts with an \textquotedblleft empty\textquotedblright\ graph;
in the first turn of the $0$-th round $P_{1}$ chooses his initial position; in
the second turn $P_{2}$, having observed $P_{1}$'s placement chooses his
initial position (after placement, classic CR is played exactly as $\Gamma
_{2}\left(  G|s_{0}\right)  $).
\end{enumerate}

\noindent At any rate, the important points are the following.

\begin{enumerate}
\item Having computed the values $u\left(  s_{0}\right)  $ of $\Gamma
_{2}\left(  G|s_{0}\right)  $ for every $s_{0}\in S$, we can easily obtain the
\emph{optimal capture time }$\widehat{T}_{C}$ of the classic CR
game\footnote{Up to a time rescaling, due to the abovementioned difference of
of time units.} as follows:%
\[
\widehat{T}_{C}=\frac{\log\left(  \max_{x^{1}}\min_{x^{2}}u^{1}\left(  \left(
x^{1},x^{2},1\right)  \right)  \right)  }{\log\gamma};
\]
furthermore, any $\widehat{x}^{1},\widehat{x}^{2}$ which satisfy $\widehat
{T}_{C}=\frac{\log u^{1}\left(  \left(  \widehat{x}^{1},\widehat{x}%
^{2},1\right)  \right)  }{\log\gamma}$ are optimal initial placements for
$P_{1}$ and $P_{2}$; and the optimal policies of $\Gamma_{2}\left(
G|s_{0}\right)  $ are time optimal policies (after placement)\ of the classic CR.

\item In the classic CR\ literature, a graph $G$ is called \emph{cop-win} iff
a single cop can capture the robber when both cop and robber play optimally on
$G$. In the more general case, where the cop player controls one or more cop
\emph{tokens}, the \emph{cop number} of $G$ is denoted by $c\left(  G\right)
$ and defined to be the smallest number of cop tokens which guarantees capture
when CR\ is played optimally on $G$. Clearly a graph is cop-win iff $c\left(
G\right)  =1$. It is easily seen that we can check whether $G$ is cop-win by
solving $\Gamma_{2}\left(  G|s_{0}\right)  $ (for all $s_{0}$) as indicated by
the following equivalence:
\begin{equation}
c\left(  G\right)  =1\Leftrightarrow\max_{x^{1}}\min_{x^{2}}u^{1}\left(
\left(  \widehat{x}^{1},\widehat{x}^{2},1\right)  \right)  >0.\label{eq0401}%
\end{equation}

\end{enumerate}

\noindent While the above questions regarding classic CR\ have been studied in
the related literature and answered using graph theoretic methods, the
connection to Game Theory appears to not have been previously exploited.

\section{GCR\ with Three Players\label{sec05}}

By substituting $N=3$ in the definitions of Section \ref{sec02} we obtain the
game $\Gamma_{3}\left(  G|s_{0}\right)  $; in particular we get the sets of
capture states%
\begin{align*}
\widetilde{S}^{1} &  =\left\{  s:\left(  x^{1},x^{2},x^{3},p\right)
,x^{1}=x^{2}\right\}  \text{ (}P_{1}\text{ captures }P_{2}\text{),}\\
\widetilde{S}^{2} &  =\left\{  s:\left(  x^{1},x^{2},x^{3},p\right)
,x^{2}=x^{3}\right\}  \text{ (}P_{2}\text{ captures }P_{3}\text{)}%
\end{align*}
and we use these to define the turn payoffs $q^{n}$ as follows%
\begin{equation}
q^{1}\left(  s\right)  =\left\{
\begin{array}
[c]{rll}%
1 & \text{iff} & s\in\widetilde{S}^{1},\\
0 & \text{else;} &
\end{array}
\right.  \quad q^{2}\left(  s\right)  =\left\{
\begin{array}
[c]{rll}%
-1 & \text{iff} & s\in\widetilde{S}^{1},\\
1 & \text{iff} & s\in\widetilde{S}^{2}\backslash\widetilde{S}^{1},\\
0 & \text{else;} &
\end{array}
\right.  \quad q^{3}\left(  s\right)  =\left\{
\begin{array}
[c]{rll}%
-1 & \text{iff} & s\in\widetilde{S}^{2}\backslash\widetilde{S}^{1},\\
0 & \text{else.} &
\end{array}
\right.  .\quad\label{eq0501a}%
\end{equation}
Note that, according to previous remarks, $P_{2}$ (resp. $P_{3}$) is rewarded
(resp. penalized)\ when $P_{2}$ captures $P_{3}$ \emph{and is not
simultaneously captured by }$P_{1}$. Also, recall that the total payoff
function is, as usual,%
\[
\forall n\in\left[  3\right]  :Q^{n}\left(  s_{0},s_{1},...\right)
=\sum_{t=0}^{\infty}\gamma^{t}q^{n}\left(  s_{t}\right)  .
\]
We are now ready to study $\Gamma_{3}\left(  G|s_{0}\right)  $.

\subsection{Nash Equilibria: Positional and Non-Positional\label{sec0501}}

By Theorem \ref{prop0301} we know that $\Gamma_{3}\left(  G|s_{0}\right)  $
has, for every $G$ and $s_{0}$, a NE in deterministic positional strategies.
In addition, as we will now show, $\Gamma_{3}\left(  G|s_{0}\right)  $ has at
least one NE in \emph{non-positional} deterministic strategies.

To this end we will introduce a family of \emph{auxiliary games} and
\emph{threat strategies} \cite{boros2009,chatterjee2003,thuijsman1997}. For
every $n\in\left[  3\right]  $\ we define the game $\widetilde{\Gamma}_{3}%
^{n}\left(  G|s_{0}\right)  $ played on $G$ (and starting at $s_{0}$) \ by
$P_{n}$ against a player $P_{-n}$ who controls the remaining two entities. For
example, in $\widetilde{\Gamma}_{3}^{1}\left(  G|s_{0}\right)  $, $P_{1}$
plays against $P_{-1}$ who controls $P_{2}$ and $P_{3}$. The $\widetilde
{\Gamma}_{3}^{n}\left(  G|s_{0}\right)  $ elements (e.g., movement sequence,
states, action sets, capturing conditions etc.) are the same as in $\Gamma
_{3}\left(  G|s_{0}\right)  $. $P_{n}$ uses a strategy $\sigma^{n}$ and
$P_{-n}$ uses a strategy profile $\sigma^{-n}$; these form a strategy profile
$\sigma=\left(  \sigma^{1},\sigma^{2},\sigma^{3}\right)  $ (which can also be
used in $\Gamma_{3}\left(  G|s_{0}\right)  $). The \ payoffs to $P_{n}$ and
$P_{-n}$ in $\widetilde{\Gamma}_{3}^{n}\left(  G|s_{0}\right)  $ are%
\[
\widetilde{Q}^{n}\left(  s_{0},\sigma\right)  =Q^{n}\left(  s_{0}%
,\sigma\right)  =\sum_{t=0}^{\infty}\gamma^{t}q^{n}\left(  s_{t}\right)
\text{\quad and\quad}\widetilde{Q}^{-n}\left(  s_{0},\sigma\right)
=-\widetilde{Q}^{n}\left(  s_{0},\sigma\right)  .
\]
Since the capture rules of $\widetilde{\Gamma}_{3}^{n}\left(  G|s_{0}\right)
$ are those of $\Gamma_{3}\left(  G|s_{0}\right)  $, $P_{-n}$ can use one of
his tokens to capture the other. For instance, in $\widetilde{\Gamma}_{3}%
^{1}\left(  G|s_{0}\right)  $, $P_{-1}$ can use $P_{2}$ to capture $P_{3}$ (as
will be seen in a later example, in certain cases this can be an optimal
move). Note however that in this case $P_{1}$ receives zero payoff (since he
did not capture) and $P_{-1}$ also receives zero payoff (since, by
construction, $\widetilde{\Gamma}_{3}^{1}\left(  G|s_{0}\right)  $, is a
zero-sum game).

In short, $\widetilde{\Gamma}_{3}^{n}\left(  G|s_{0}\right)  $ is a two-player
\emph{zero-sum }discounted stochastic game and the next Lemma follows from the
results of \cite[Theorem 4.3.2]{filar1996}.

\begin{lemma}
\label{prop0501}For every $n,G$ and $s_{0}$ the game $\widetilde{\Gamma}%
_{3}^{n}\left(  G|s_{0}\right)  $ has a value and the players have optimal
deterministic positional strategies.
\end{lemma}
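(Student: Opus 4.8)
The plan is to obtain the lemma as a specialization of the general theory of two-player zero-sum discounted stochastic games, followed by a perfect-information reduction to pure strategies that runs exactly parallel to the argument already used in the proof of Theorem~\ref{prop0301}.

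First I would invoke \cite[Theorem 4.3.2]{filar1996}: since $\widetilde{\Gamma}_3^n\left(G|s_0\right)$ is a two-player zero-sum discounted stochastic game with finitely many states and actions and discount factor $\gamma\in\left(0,1\right)$, it has a value $v\left(s\right)$ for every state $s$, and both $P_n$ and $P_{-n}$ possess optimal \emph{stationary} (positional) strategies. The value function is the unique fixed point of the associated Shapley operator: at each state $s$ the value $v\left(s\right)$ equals the minimax value of the one-shot matrix game whose entries are $q^n\left(s\right)+\gamma\sum_{s^\prime}\Pi\left(s^\prime|s,\cdot\right)v\left(s^\prime\right)$. This already delivers the existence of the value and of optimal positional strategies, but in general these are \emph{probabilistic}.

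The remaining step, and the only real content beyond citing \cite{filar1996}, is to show that the optimal positional strategies may be taken \emph{deterministic}. Here I would exploit the perfect-information structure that $\widetilde{\Gamma}_3^n\left(G|s_0\right)$ inherits from $\Gamma_3\left(G|s_0\right)$: at every nonterminal state exactly one entity has a nontrivial action set (the closed neighborhood of its current vertex) while every other entity is forced to play its singleton action, and every transition $\mathbf{T}\left(s,\cdot\right)$ is deterministic. Consequently, at each state $s$ exactly one of $P_n,P_{-n}$ is the active player, and the one-shot game at $s$ is not a genuine matrix game but a one-dimensional optimization: a pure maximization over $A^n\left(s\right)$ when the active token belongs to $P_n$, and a pure minimization when it belongs to $P_{-n}$. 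A scalar max (or min) over finitely many pure actions is always attained at a single action, so the optimal mixed strategy can be replaced, state by state, by the pure strategy placing all mass on an optimizing action, without altering the value. This is precisely the collapse carried out in (\ref{eq02016})-(\ref{eq02018}).

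I do not expect a genuine obstacle here, since both ingredients are in hand: \cite{filar1996} supplies the value together with positional optimality, and the perfect-information degeneracy of each stage game---identical to the one already exploited in Theorem~\ref{prop0301}---forces the optimizers to be pure. The one point deserving a line of care is that $P_{-n}$ controls two tokens; but because the fixed turn order lets only one token move at any given state, $P_{-n}$ is never required to choose between two simultaneous moves, so the stage game remains one-dimensional and the reduction to deterministic play goes through unchanged.
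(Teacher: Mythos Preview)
Your proposal is correct and follows the same route as the paper: the paper's entire proof is the single sentence that $\widetilde{\Gamma}_{3}^{n}\left(G|s_{0}\right)$ is a two-player zero-sum discounted stochastic game and that the lemma therefore follows from \cite[Theorem 4.3.2]{filar1996}. Your write-up is simply more explicit than the paper's, spelling out the perfect-information reduction to pure strategies (mirroring Theorem~\ref{prop0301}) that the paper leaves implicit in the citation; this extra care is sound and arguably clarifies why the optimal stationary strategies guaranteed by the general Shapley theory may be taken deterministic here.
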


Furthermore, the value and optimal strategies can be computed by Shapley's
value-iteration algorithm \cite{raghavanfilar1991}. Let us denote by
$\widehat{\phi}_{n}^{n}$ (resp. $\widehat{\phi}_{n}^{-n}$)\ the optimal
strategy of $P_{n}$ (resp. $P_{-n}$) in $\widetilde{\Gamma}_{3}^{n}\left(
G|s_{0}\right)  $. For example, in $\widetilde{\Gamma}_{3}^{1}\left(
G|s_{0}\right)  $, $P_{1}$ has the optimal strategy $\widehat{\phi}_{1}^{1}$
and $P_{-1}$ has the optimal strategy $\widehat{\phi}_{1}^{-1}$ $=\left(
\widehat{\phi}_{1}^{2},\widehat{\phi}_{1}^{3}\right)  $. In fact the same
$\widehat{\phi}_{n}^{m}$'s (for fixed $n$ and any $m\in\left[  3\right]  $)
are optimal in $\widetilde{\Gamma}_{3}^{n}\left(  G|s_{0}\right)  $ for every
initial position $s_{0}$.

We return to $\Gamma_{3}\left(  G|s_{0}\right)  $, and for each $P_{n}$ we
introduce the \emph{threat strategy} $\widehat{\pi}^{n}$ defined \ as follows:

\begin{enumerate}
\item as long as every player $P_{m}$ (with $m\neq n$) follows $\widehat{\phi
}_{m}^{m}$, $P_{n}$ follows $\widehat{\phi}_{n}^{n}$;

\item as soon as some player $P_{m}$ (with $m\neq n$)$\ $deviates from
$\widehat{\phi}_{m}^{m}$, $P_{n}$ switches to $\widehat{\phi}_{m}^{n}$ and
uses it for the rest of the game\footnote{Since $\Gamma_{3}\left(
G|s_{0}\right)  $ is a perfect information game, the deviation will be
detected immediately.}.
\end{enumerate}

Note that the $\widehat{\pi}^{n}$ strategies are \emph{not}\ positional. In
particular, the action of a player at time $t$ may be influenced by the action
(deviation) performed by another player at time $t-2$. However, as we will now
prove, $\left(  \widehat{\pi}^{1},\widehat{\pi}^{2},\widehat{\pi}^{3}\right)
$ is a (non-positional) NE\ in $\Gamma_{3}\left(  G|s_{0}\right)  $.

\begin{theorem}
\label{prop0502}For every $G,s_{0}$ and $\gamma$, we have:%
\begin{equation}
\forall n\in\left\{  1,2,3\right\}  ,\forall\pi^{n}:Q^{n}(s,\widehat{\pi}%
^{1},\widehat{\pi}^{2},\widehat{\pi}^{3})\geq Q^{n}(s,\pi^{n},\widehat{\pi
}^{-n}). \label{eq001}%
\end{equation}

\end{theorem}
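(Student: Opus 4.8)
The plan is to run the standard threat/punishment argument that the profile $\left(\widehat{\pi}^{1},\widehat{\pi}^{2},\widehat{\pi}^{3}\right)$ was built to support: any unilateral deviation by $P_{n}$ is detected immediately (perfect information) and triggers the other two players to play, jointly, the coalition-optimal strategy $\widehat{\phi}_{n}^{-n}$ of the auxiliary zero-sum game $\widetilde{\Gamma}_{3}^{n}$, which holds the deviator down to the value of that game. Fix $n$ (the three cases being identical by relabeling) and let $v^{n}\left(s\right)$ denote the value to $P_{n}$ of $\widetilde{\Gamma}_{3}^{n}\left(G|s\right)$, which exists by Lemma \ref{prop0501}. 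Two facts about this value are the engine of the proof: (i) by Shapley's optimality equations, at every $P_{n}$-node $s\in S^{n}$ one has $v^{n}\left(s\right)=\max_{a\in A^{n}\left(s\right)}\left[q^{n}\left(s\right)+\gamma v^{n}\left(\mathbf{T}\left(s,a\right)\right)\right]$ (the analogue of (\ref{eq02018}) for the zero-sum game); and (ii) the coalition's positional optimal strategy $\widehat{\phi}_{n}^{-n}$ guarantees that, when played from any state $s$, $P_{n}$ obtains a payoff of at most $v^{n}\left(s\right)$ regardless of what $P_{n}$ does.

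First I would bound the equilibrium payoff from below. When all three players use their threat strategies no deviation ever occurs, so the realized play is generated by $\left(\widehat{\phi}_{1}^{1},\widehat{\phi}_{2}^{2},\widehat{\phi}_{3}^{3}\right)$ and $P_{n}$ in particular executes its maximin strategy $\widehat{\phi}_{n}^{n}$ from $\widetilde{\Gamma}_{3}^{n}$. Since $\widehat{\phi}_{n}^{n}$ secures at least the value against every opponent strategy, the continuation payoff of $P_{n}$ under this profile, started from any state $s\in S^{n}$, is at least $v^{n}\left(s\right)$; in particular $Q^{n}\left(s_{0},\widehat{\pi}\right)\geq v^{n}\left(s_{0}\right)$.

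Next I would analyze an arbitrary deviation $\pi^{n}$. If the moves realized by $\pi^{n}$ coincide with those of $\widehat{\pi}^{n}$ along the induced play, the two profiles generate identical histories and payoffs and the required inequality holds with equality. Otherwise let $\tau$ be the first turn at which $P_{n}$'s move differs from $\widehat{\phi}_{n}^{n}$; since only $P_{n}$ can move at a $P_{n}$-node, $s_{\tau}\in S^{n}$, and because no deviation has been detected before $\tau$ the histories under $\left(\pi^{n},\widehat{\pi}^{-n}\right)$ and under $\widehat{\pi}$ agree on $s_{0},s_{1},\dots,s_{\tau}$. Writing $\widetilde{s}_{\tau+1}=\mathbf{T}\left(s_{\tau},a_{\tau}\right)$ for the state after $P_{n}$'s deviating action $a_{\tau}$, both other players observe the deviation and commit, at each of their subsequent turns, to $\widehat{\phi}_{n}^{m}$; hence from $\widetilde{s}_{\tau+1}$ the coalition plays $\widehat{\phi}_{n}^{-n}$ at every one of its turns, so by fact (ii) the continuation payoff of $P_{n}$ from $\widetilde{s}_{\tau+1}$ is at most $v^{n}\left(\widetilde{s}_{\tau+1}\right)$. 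Combining this with the Shapley equation (i) at the $P_{n}$-node $s_{\tau}$ bounds $P_{n}$'s continuation from $s_{\tau}$ under the deviation by $q^{n}\left(s_{\tau}\right)+\gamma v^{n}\left(\widetilde{s}_{\tau+1}\right)\leq\max_{a}\left[q^{n}\left(s_{\tau}\right)+\gamma v^{n}\left(\mathbf{T}\left(s_{\tau},a\right)\right)\right]=v^{n}\left(s_{\tau}\right)$, whereas the equilibrium continuation from $s_{\tau}$ is at least $v^{n}\left(s_{\tau}\right)$ by the preceding paragraph. Since the two plays share the identical prefix $s_{0},\dots,s_{\tau-1}$ and hence the common factor $\gamma^{\tau}$, these two continuation bounds give $Q^{n}\left(s_{0},\pi^{n},\widehat{\pi}^{-n}\right)\leq Q^{n}\left(s_{0},\widehat{\pi}\right)$, which is (\ref{eq001}).

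I expect the main obstacle to be the bookkeeping around the turn-based timing of the punishment. One must check that the reactive switches of the two punishers (each reacting to $P_{n}$, yet each also observing the other's reactive switch) do not cascade, which is ensured by reading $\widehat{\pi}^{m}$ as a permanent commitment to punish the first detected deviator; and one must verify that the single deviating move at turn $\tau$ is correctly absorbed, which is exactly what the $\max$ form of the Shapley equation at the $P_{n}$-node $s_{\tau}$ accomplishes, while positional optimality of $\widehat{\phi}_{n}^{-n}$ disposes of the entire post-deviation tail starting from $\widetilde{s}_{\tau+1}$. A secondary but essential subtlety is that one cannot bound the deviation payoff by $v^{n}\left(s_{0}\right)$ outright, because on the equilibrium path $P_{n}$ may earn strictly more than $v^{n}\left(s_{0}\right)$; this is precisely why the argument must be localized at the first deviation through the prefix/continuation decomposition rather than applied globally.
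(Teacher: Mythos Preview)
Your proposal is correct and follows essentially the same approach as the paper: decompose at the first deviation time, bound the equilibrium continuation from that state below by the value $v^{n}(s_{\tau})$ of the auxiliary zero-sum game (since $P_{n}$ is playing $\widehat{\phi}_{n}^{n}$), and bound the deviation continuation above by the same value (since the opponents switch to $\widehat{\phi}_{n}^{-n}$). The paper carries out exactly this sandwich argument, writing the common state as $s^{\ast}=\widehat{s}_{T_{1}-1}$ and passing through $\widetilde{Q}^{1}(s^{\ast},\widehat{\phi}_{1}^{1},\widehat{\phi}_{1}^{2},\widehat{\phi}_{1}^{3})$ rather than explicitly invoking the Shapley equation at $s_{\tau}$, but the two presentations are equivalent; your remark about avoiding a cascade of punishments (first deviator is the one punished, permanently) is a point the paper leaves implicit in its definition of $\widehat{\pi}^{n}$.
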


\begin{proof}
We choose some initial state $s_{0}$ and fix it for the rest of the proof. Now
let us prove (\ref{eq001}) for the case $n=1$. In other words, we will show
that%
\begin{equation}
\forall\pi^{1}:Q^{1}(s_{0},\widehat{\pi}^{1},\widehat{\pi}^{2},\widehat{\pi
}^{3})\geq Q^{1}(s_{0},\pi^{1},\widehat{\pi}^{2},\widehat{\pi}^{3}).
\label{eq002a}%
\end{equation}
We take any $\pi^{1}$ and let
\begin{align*}
\text{the history produced by }(\widehat{\pi}^{1},\widehat{\pi}^{2}%
,\widehat{\pi}^{3})\text{ be }\widehat{\mathbf{s}}  &  =\widehat{s}%
_{0}\widehat{s}_{1}\widehat{s}_{2}...,\\
\text{ the history produced by }(\pi^{1},\widehat{\pi}^{2},\widehat{\pi}%
^{3})\text{ be }\widetilde{\mathbf{s}}  &  =\widetilde{s}_{0}\widetilde{s}%
_{1}\widetilde{s}_{2}...,
\end{align*}
(where $\widehat{s}_{0}=\widetilde{s}_{0}=s_{0}$). We define $T_{1}$ as the
earliest time in which $\pi^{1}$ and $\widehat{\pi}^{1}$ produce different
states:
\[
T_{1}=\min\left\{  t:\widetilde{s}_{t}\neq\widehat{s}_{t}\right\}  ,
\]
If $T_{1}=\infty$, then $\widetilde{\mathbf{s}}=\widehat{\mathbf{s}}$ and
\begin{equation}
Q^{1}(s,\widehat{\pi}^{1},\widehat{\pi}^{2},\widehat{\pi}^{3})=Q^{1}(s,\pi
^{1},\widehat{\pi}^{2},\widehat{\pi}^{3}). \label{eq002b}%
\end{equation}
If $T_{1}<\infty$, on the other hand, then $\widetilde{s}_{t}=\widehat{s}_{t}$
for every $t<T_{1}$ and we have%
\begin{align}
Q^{1}(s,\widehat{\pi}^{1},\widehat{\pi}^{2},\widehat{\pi}^{3})  &  =\sum
_{t=0}^{T_{1}-2}\gamma^{t}q^{1}\left(  \widehat{s}_{t}\right)  +\sum
_{t=T_{1}-1}^{\infty}\gamma^{t}q^{1}\left(  \widehat{s}_{t}\right)
=\sum_{t=0}^{T_{1}-2}\gamma^{t}q^{1}\left(  \widetilde{s}_{t}\right)
+\sum_{t=T_{1}-1}^{\infty}\gamma^{t}q^{1}\left(  \widehat{s}_{t}\right)
,\label{eq003a}\\
Q^{1}(s,\pi^{1},\widehat{\pi}^{2},\widehat{\pi}^{3})  &  =\sum_{t=0}^{T_{1}%
-2}\gamma^{t}q^{1}\left(  \widetilde{s}_{t}\right)  +\sum_{t=T_{1}-1}^{\infty
}\gamma^{t}q^{1}\left(  \widetilde{s}_{t}\right)  =\sum_{t=0}^{T_{1}-2}%
\gamma^{t}q^{1}\left(  \widetilde{s}_{t}\right)  +\sum_{t=T_{1}-1}^{\infty
}\gamma^{t}q^{1}\left(  \widetilde{s}_{t}\right)  . \label{eq004a}%
\end{align}
We define $s^{\ast}=\widehat{s}_{T_{1}-1}=\widetilde{s}_{T_{1}-1}$ and proceed
to compare the sums in (\ref{eq003a}) and (\ref{eq004a}).

First consider $\sum_{t=T_{1}-1}^{\infty}\gamma^{t}q^{1}\left(  \widehat
{s}_{t}\right)  $. The history $\widehat{\mathbf{s}}=\widehat{s}_{0}%
\widehat{s}_{1}\widehat{s}_{2}...$ is produced by $(\widehat{\phi}_{1}%
^{1}\ ,\widehat{\phi}_{2}^{2}\ ,\widehat{\phi}_{3}^{3}\ )$ and, since the
$\widehat{\phi}_{n}^{n}$'s are positional strategies, we have%
\begin{equation}
\sum_{t=T_{1}-1}^{\infty}\gamma^{t}q^{1}\left(  \widehat{s}_{t}\right)
=\gamma^{T_{1}-1}\sum_{t=0}^{\infty}\gamma^{t}q^{1}\left(  \widehat{s}%
_{T_{1}-1+t}\right)  =\gamma^{T_{1}-1}\widetilde{Q}^{1}\left(  s^{\ast
},\widehat{\phi}_{1}^{1},\widehat{\phi}_{2}^{2},\widehat{\phi}_{3}^{3}\right)
, \label{eq005}%
\end{equation}
i.e., up to the multiplicative constant $\gamma^{T_{1}-1}$, the sum in
(\ref{eq005}) is the payoff to $P_{1}$ in $\widetilde{\Gamma}_{3}^{1}\left(
G|s^{\ast}\right)  $, under the strategies $\widehat{\phi}_{1}^{1},\left(
\widehat{\phi}_{2}^{2},\widehat{\phi}_{3}^{3}\right)  $. Since $\widetilde
{\Gamma}_{3}^{1}\left(  G|s^{\ast}\right)  $ is a zero-sum game in which the
optimal response to $\widehat{\phi}_{1}^{1}$ is $\left(  \widehat{\phi}%
_{1}^{2},\widehat{\phi}_{1}^{3}\right)  $; hence we have%
\begin{equation}
\gamma^{T_{1}-1}\widetilde{Q}^{1}\left(  s^{\ast},\widehat{\phi}_{1}%
^{1},\widehat{\phi}_{2}^{2},\widehat{\phi}_{3}^{3}\right)  \geq\gamma
^{T_{1}-1}\widetilde{Q}^{1}\left(  s^{\ast},\widehat{\phi}_{1}^{1}%
,\widehat{\phi}_{1}^{2},\widehat{\phi}_{1}^{3}\right)  .
\end{equation}

Next consider $\sum_{t=T_{1}-1}^{\infty}\gamma^{t}q^{1}\left(  \widetilde
{s}_{t}\right)  $. The history $\widetilde{\mathbf{s}}=\widetilde{s}%
_{0}\widetilde{s}_{1}\widetilde{s}_{2}...$ is produced by $(\pi^{1}%
,\widehat{\phi}_{1}^{2},\widehat{\phi}_{1}^{3})$ and, since $\pi^{1}$ is not
necessarily positional, $\widetilde{s}_{T_{1}}\widetilde{s}_{T_{1}%
+1}\widetilde{s}_{T_{1}+2}...$ \ may depend on $\widetilde{s}_{0}\widetilde
{s}_{1}...\widetilde{s}_{T_{1}-2}$. However, we can introduce a (not
necessarily positional) strategy $\rho^{1}$ \emph{ }which will produce the
same history $\widetilde{s}_{T_{1}}\widetilde{s}_{T_{1}+1}\widetilde{s}%
_{T_{1}+2}...$ \ as $\sigma^{1}$. \footnote{We define $\rho^{1}$ such that,
when combined with $\widetilde{s}_{T_{1}-1},\widehat{\phi}_{1}^{2}%
,\widehat{\phi}_{1}^{3}$, will produce the same history $\widetilde{s}_{T_{1}%
}\widetilde{s}_{T_{1}+1}\widetilde{s}_{T_{1}+2}...$ \ as $\sigma^{1}$. Note
that $\rho^{1}$ will in general depend (in an indirect way) on $\widetilde
{s}_{0}\widetilde{s}_{1}...\widetilde{s}_{T_{1}-2}$.} Then, since in
$\widetilde{\Gamma}_{3}^{1}\left(  G|s^{\ast}\right)  $ the optimal response
to $\left(  \widehat{\phi}_{1}^{2},\widehat{\phi}_{1}^{3}\right)  $ is
$\widehat{\phi}_{1}^{1}$, we have%
\begin{equation}
\gamma^{T_{1}-1}\widetilde{Q}^{1}\left(  s^{\ast},\widehat{\phi}_{1}%
^{1},\widehat{\phi}_{1}^{2},\widehat{\phi}_{1}^{3}\right)  \geq\gamma
^{T_{1}-1}\widetilde{Q}^{1}\left(  s^{\ast},\rho^{1},\widehat{\phi}_{1}%
^{2},\widehat{\phi}_{1}^{3}\right)  =\sum_{t=T_{1}-1}^{\infty}\gamma^{t}%
q^{1}\left(  \widetilde{s}_{t}\right)  . \label{eq006a}%
\end{equation}

Combining (\ref{eq003a})-(\ref{eq006a}) we have:%
\begin{align*}
Q^{1}(s_{0},\widehat{\pi}^{1},\widehat{\pi}^{2},\widehat{\pi}^{3})  &
=\sum_{t=0}^{T_{1}-2}\gamma^{t}q^{1}\left(  \widetilde{s}_{t}\right)
+\gamma^{T_{1}-1}\widetilde{Q}^{1}(s^{\ast},\widehat{\phi}_{1}^{1}%
,\widehat{\phi}_{2}^{2},\widehat{\phi}_{3}^{3})\\
&  \geq\sum_{t=0}^{T_{1}-2}\gamma^{t}q^{1}\left(  \widetilde{s}_{t}\right)
+\gamma^{T_{1}-1}\widetilde{Q}^{1}(s^{\ast},\widehat{\phi}_{1}^{1}%
,\widehat{\phi}_{1}^{2},\widehat{\phi}_{1}^{3})\\
&  \geq\sum_{t=0}^{T_{1}-2}\gamma^{t}q^{1}\left(  \widetilde{s}_{t}\right)
+\gamma^{T_{1}-1}\widetilde{Q}^{1}(s^{\ast},\rho^{1},\widehat{\phi}_{1}%
^{2},\widehat{\phi}_{1}^{3})=Q^{1}(s,\pi^{1},\widehat{\pi}^{2},\widehat{\pi
}^{3}).
\end{align*}
and we have proved (\ref{eq002a}), which is (\ref{eq001})\ for $n=1$. The
proof for the cases $n=2$ and $n=3$ is similar and hence omitted.
\end{proof}

We have seen that every $\Gamma_{3}\left(  G|s_{0}\right)  $ has at least two
deterministic NE (one in positional strategies and another in \emph{non}%
-positional ones); and in fact, as is well known, a stochastic game may
possess any number of NE. On the other hand, we only know how to compute a
single NE of $\Gamma_{3}\left(  G|s_{0}\right)  $, namely the non-positional
one of Theorem \ref{prop0502}, which is constructed in terms of the two-player
strategies of $\widetilde{\Gamma}_{3}^{n}\left(  G|s_{0}\right)  $. One may be
tempted to construct additional NE$\ $of $\Gamma_{3}\left(  G|s_{0}\right)  $
using the optimal strategies of $\Gamma_{2}\left(  G|s_{0}\right)  $. For
example, one may reason as follows: $P_{3}$'s best chance to avoid capture in
$\Gamma_{3}\left(  G|s_{0}\right)  $ is by ignoring $P_{1}$ and playing his
best (in $\Gamma_{2}\left(  G|s_{0}\right)  $) evasion strategy against
$P_{2}$. By a similar reasoning for the other players, one may conclude that,
$\left(  \widehat{\sigma}^{1},\widehat{\sigma}^{2},\widehat{\sigma}%
^{3}\right)  $ is a NE\ of $\Gamma_{3}\left(  G|s_{0}\right)  $ if
(i)$\ \left(  \widehat{\sigma}^{1},\widehat{\sigma}^{2}\right)  $ is a NE\ of
$\Gamma_{2}\left(  G|s_{0}\right)  $ played between $P_{1}$ and $P_{2}$, and
(ii)$\ \left(  \widehat{\sigma}^{2},\widehat{\sigma}^{3}\right)  $ is a NE\ of
$\Gamma_{2}\left(  G|s_{0}\right)  $ played between $P_{2}$ and $P_{3}$.
\footnote{A clarification is needed here: the domain of $\Gamma_{2}\left(
G|s_{0}\right)  $ (positional)\ strategies is $V\times V\times\left\{
1,2\right\}  $, while the the domain of $\Gamma_{3}\left(  G|s_{0}\right)  $
(positional)\ strategies is $V\times V\times V\times\left\{  1,2\right\}  $.
However we can \textquotedblleft extend\textquotedblright\ a $\Gamma
_{2}\left(  G|s_{0}\right)  $ strategy to use it in $\Gamma_{3}\left(
G|s_{0}\right)  $. For example,\ suppose $\sigma^{1}\left(  x^{1}%
,x^{2}\right)  $ is a $P_{1}$ strategy in $\Gamma_{2}\left(  G|s_{0}\right)
$; then it can also be extended to a $\Gamma_{3}\left(  G|s_{0}\right)  $
strategy $\widetilde{\sigma}^{1}\left(  x^{1},x^{2},x^{3}\right)  $ by letting%
\[
\forall x^{3}:\widetilde{\sigma}^{1}\left(  x^{1},x^{2},x^{3}\right)
=\sigma^{1}\left(  x^{1},x^{2}\right)  .
\]
In other words, $P_{1}$ applies $\sigma^{1}$ in $\Gamma_{3}\left(
G|s_{0}\right)  $ by ignoring $P_{3}$'s position. We will often use this and
similar constructions in what follows, without further comment; and we will
denote the $\Gamma_{2}\left(  G|s_{0}\right)  $ and $\Gamma_{3}\left(
G|s_{0}\right)  $ strategies by the same symbol, e.g., $\sigma^{n}$.} But this
conclusion is wrong, as shown by the following example.

\begin{example}
\label{prop0503}\normalfont Consider $\Gamma_{2}\left(  G|s_{0}\right)  $ when
$G$ is the graph of Figure \ref{fig0501} and $s_{0}=\left(  12,2,3\right)  $,
as indicated in the figure (for the time being suppose $P_{1}$ is not on the
graph). $P_{3}$ is the evader and his best strategy is to move towards vertex
10, postponing capture as long as possible; $P_{2}$ is the pursuer and his
best strategy is to always move toward $P_{3}$. Now consider $\Gamma
_{3}\left(  G|s_{0}\right)  $ with $s_{0}=\left(  1,12,2,3\right)  $. In this
game, $P_{3}$'s best strategy is to first move into vertex 1 and afterwards
always keep $P_{1}$ between himself and $P_{2}$; he can always achieve this
and thus avoid capture ad infinitum. And $P_{2}$'s best strategy is to stay at
vertex 12, keeping away from $P_{1}$ for as long as possible. So in this
example $P_{2}$ and $P_{3}$'s optimal $\Gamma_{2}\left(  G|s_{0}\right)  $
strategies are not good (and certainly not in NE)\ in $\Gamma_{3}\left(
G|s_{0}\right)  $.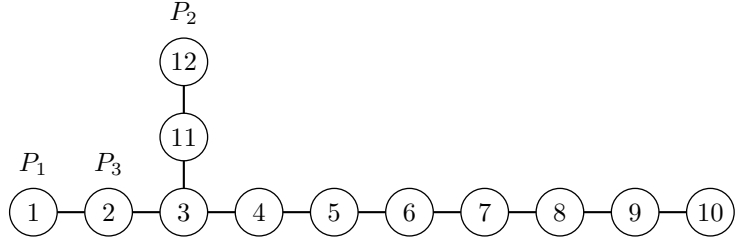
\begin{figure}[ptbh]
\begin{center}
\begin{tikzpicture}
\SetGraphUnit{2}
\Vertex[x= 0,y= 0]{1}
\Vertex[x= 1,y= 0]{2}
\Vertex[x= 2,y= 0]{3}
\Vertex[x= 3,y= 0]{4}
\Vertex[x= 4,y= 0]{5}
\Vertex[x= 5,y= 0]{6}
\Vertex[x= 6,y= 0]{7}
\Vertex[x= 7,y= 0]{8}
\Vertex[x= 8,y= 0]{9}
\Vertex[x= 9,y= 0]{10}
\Vertex[x= 2,y= 1]{11}
\Vertex[x= 2,y= 2]{12}
\node(A) [label=$P_1$] at (0,0.25) {};
\node(B) [label=$P_3$] at (1,0.25) {};
\node(C) [label=$P_2$] at (2,2.25) {};
\Edge(1)(2)
\Edge(2)(3)
\Edge(3)(4)
\Edge(4)(5)
\Edge(5)(6)
\Edge(6)(7)
\Edge(7)(8)
\Edge(8)(9)
\Edge(9)(10)
\Edge(3)(11)
\Edge(11)(12)
\SetVertexNoLabel
\end{tikzpicture}
\end{center}
\par
\label{fig0501}\caption{A case in which the CR optimal strategies do not
achieve NE in $\Gamma_{3}(G|s_{0})$.}%
\end{figure}
\end{example}

\subsection{Capturability\label{sec0502}}

In Section \ref{sec04} we have presented a connection between the cop number
of $G$ and \textquotedblleft capturability\textquotedblright\ in $\Gamma
_{2}\left(  G|s_{0}\right)  $; this was described by (\ref{eq0401}) which can
be equivalently rewritten as%
\begin{equation}
\left(  \forall s_{0}\text{ every optimal }\widehat{\sigma}\text{ of }%
\Gamma_{2}\left(  G|s_{0}\right)  \ \text{ results in capture}\right)
\Leftrightarrow c\left(  G\right)  =1. \label{eq0509}%
\end{equation}
The analog of (\ref{eq0509}) in $\Gamma_{3}\left(  G|s_{0}\right)  $ would
be:
\begin{equation}
\left(  \forall s_{0}\text{ every NE }\widehat{\sigma}\text{ of }\Gamma
_{3}\left(  G|s_{0}\right)  \text{ results in capture}\right)  \Leftrightarrow
c\left(  G\right)  =1. \label{eq0510}%
\end{equation}
As will be seen, \emph{(\ref{eq0510}) is not true}. But connections between
cop number and capturability exist, as will be established in the remainder of
this section. To this end, we first define the capture function $\mathbf{K}%
_{3}\left(  G|s_{0},\sigma\right)  $.

\begin{definition}
\label{prop0504}For the game $\Gamma_{3}\left(  G|s_{0}\right)  $ played with
strategies $\sigma=\left(  \sigma^{1},\sigma^{2},\sigma^{3}\right)  $, we
write
\[
\mathbf{K}_{3}\left(  G|s_{0},\sigma\right)  =\left\{
\begin{array}
[c]{lll}%
0 & \text{when} & Q^{1}\left(  s_{0},\sigma\right)  =Q^{2}\left(  s_{0}%
,\sigma\right)  =Q^{3}\left(  s_{0},\sigma\right)  =0,\\
1 & \text{when} & Q^{1}\left(  s_{0},\sigma\right)  >0,\\
2 & \text{when} & Q^{2}\left(  s_{0},\sigma\right)  >0.
\end{array}
\right.
\]

\end{definition}

\noindent Roughly, $\mathbf{K}_{3}\left(  G|s_{0},\sigma\right)  $ tells us
which player (if any)\ achieves a capture in $\Gamma_{3}\left(  G|s_{0}%
\right)  $ played with $\left(  \sigma^{1},\sigma^{2},\sigma^{3}\right)  $:

\begin{enumerate}
\item $\mathbf{K}_{3}\left(  G|s_{0},\sigma\right)  =0\Leftrightarrow$
$Q^{1}\left(  s_{0},\sigma\right)  =Q^{2}\left(  s_{0},\sigma\right)
=Q^{3}\left(  s_{0},\sigma\right)  =0$ $\Leftrightarrow$ no capture takes place;

\item $\mathbf{K}_{3}\left(  G|s_{0},\sigma\right)  =1\Leftrightarrow$
$Q^{1}\left(  s_{0},\sigma\right)  >0$ $\Leftrightarrow P_{1}$ captures
$P_{2}$;

\item $\mathbf{K}_{3}\left(  G|s_{0},\sigma\right)  =2\Leftrightarrow
\ Q^{2}\left(  s_{0},\sigma\right)  >0$ $\Leftrightarrow P_{2}$ captures
$P_{3}$ (and avoids being captured by $P_{1}$).
\end{enumerate}

\noindent A weaker version of (\ref{eq0510}) is:%
\[
\left(  \forall s_{0}\text{ there exists a capturing NE }\widehat{\sigma
}\text{ of }\Gamma_{3}\left(  G|s_{0}\right)  \right)  \Rightarrow c\left(
G\right)  =1
\]

\noindent and this can be rewritten and proved in terms of $\mathbf{K}%
_{3}\left(  G|s_{0},\sigma\right)  $, as follows.

\begin{theorem}
\label{prop0505}The following holds for every $G$:%
\begin{equation}
\left(  \forall s_{0}\text{ there exists a NE }\widehat{\sigma}\text{ of
}\Gamma_{3}\left(  G|s_{0}\right)  :\text{ }\mathbf{K}_{3}\left(
G|s_{0},\widehat{\sigma}\right)  >0\right)  \Rightarrow c\left(  G\right)  =1.
\label{eq0511}%
\end{equation}

\end{theorem}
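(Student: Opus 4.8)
The plan is to prove the contrapositive: assuming $c(G)\ge 2$, I will exhibit one initial state $s_0$ from which \emph{every} NE $\widehat{\sigma}$ of $\Gamma_3(G|s_0)$ satisfies $\mathbf{K}_3(G|s_0,\widehat{\sigma})=0$, which negates the hypothesis of (\ref{eq0511}). The engine is the characterization (\ref{eq0401}): $c(G)\ge 2$ means $\max_{x^1}\min_{x^2}u^1((x^1,x^2,1))=0$, and since every $u^1\ge 0$ this gives $\min_{x^2}u^1((x^1,x^2,1))=0$ for \emph{every} $x^1$. By the zero-sum value property of Theorem \ref{prop0401}, a value of $0$ together with the robber playing its optimal $\Gamma_2$ strategy holds the cop's payoff to $0$, i.e. guarantees evasion forever against any cop play. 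Thus: for every cop vertex there is a robber vertex from which a single robber evades a single cop indefinitely.

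Using this twice I would build a chain of evaders. Pick any vertex $a$ and set $x^1=a$; pick $b$ so that a robber at $b$ evades a cop at $a$, and set $x^2=b$; treating $P_2$ now as a lone cop, pick $c$ so that a robber at $c$ evades a cop at $b$, and set $x^3=c$; finally take $s_0=(a,b,c,1)$. Since evasion rules out coincidence of positions we have $a\neq b$ and $b\neq c$, so $s_0\in S_{NC}$. The key structural point is that in $\Gamma_3$ with $p_0=1$ the turn cycle $P_1,P_2,P_3$ restricts, for the pair $(P_1,P_2)$, to the alternation $P_1,P_2,P_1,\dots$ and, for the pair $(P_2,P_3)$, to $P_2,P_3,P_2,\dots$, with the ``cop'' of each pair moving first in both cases. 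The intervening move of the third player alters neither the coordinate pair $(x^1,x^2)$ governing capture in $\widetilde{S}^1$ nor the pair $(x^2,x^3)$ governing $\widetilde{S}^2$, and an evasion strategy that is good against \emph{all} single-cop plays remains good when the cop's moves also depend on the bystander's position. Hence the two $\Gamma_2$ evasion guarantees transfer verbatim into $\Gamma_3$.

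With $s_0$ fixed I would take an arbitrary NE $\widehat{\sigma}$ and separately exclude $\mathbf{K}_3=1$ and $\mathbf{K}_3=2$. If $\mathbf{K}_3=1$, then $P_1$ captures $P_2$ at some finite $T_C$ and $Q^2(s_0,\widehat{\sigma})=-\gamma^{T_C}<0$; but $P_2$ may unilaterally switch to its evasion strategy against $P_1$ (ignoring $x^3$), which keeps $x^1\neq x^2$ forever regardless of $\widehat{\sigma}^1$ and $\widehat{\sigma}^3$, so the play never enters $\widetilde{S}^1$ and $Q^2\ge 0$ --- a strictly profitable deviation contradicting the NE property. Symmetrically, if $\mathbf{K}_3=2$ then $Q^3(s_0,\widehat{\sigma})=-\gamma^{T_C}<0$, while $P_3$ may switch to its evasion strategy against $P_2$, keeping $x^2\neq x^3$ forever, so the play never enters $\widetilde{S}^2\backslash\widetilde{S}^1$ and $Q^3=0>-\gamma^{T_C}$, again a profitable deviation. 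Therefore every NE from this $s_0$ has $\mathbf{K}_3=0$, so no capturing NE exists, completing the contrapositive.

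The step requiring the most care is exactly the transfer in the second paragraph: I must check that inserting the bystander's move into the turn sequence preserves both the pairwise dynamics and the parity of who-moves-first, and that ``guaranteed evasion against every cop strategy'' survives when the opposing cop's strategy is now a function of the extra coordinate as well. The rest is routine: the deduction $\min_{x^2}u^1((x^1,x^2,1))=0$ from (\ref{eq0401}), the payoff signs $Q^2=-\gamma^{T_C}$ and $Q^3=-\gamma^{T_C}$ at a capture, and the fact that an evading player accrues no negative turn-payoff. It is worth stating explicitly that the two deviation arguments are logically independent --- neither presupposes that the other capture has already been excluded --- so there is no circularity, and each follows solely from the corresponding player's ability to guarantee non-negative payoff by evasion.
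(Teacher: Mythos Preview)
Your proof is correct and follows essentially the same approach as the paper's own proof: both argue the contrapositive by constructing a state $s_0=(x^1,x^2,x^3,1)$ via a chain of $\Gamma_2$ evasion positions (robber at $x^2$ evades cop at $x^1$; robber at $x^3$ evades cop at $x^2$), and then rule out $\mathbf{K}_3=1$ and $\mathbf{K}_3=2$ by having $P_2$ (resp.\ $P_3$) deviate unilaterally to its $\Gamma_2$ evasion strategy, yielding a strictly better payoff and contradicting the NE property. Your treatment is in fact somewhat more careful than the paper's in justifying the transfer of the $\Gamma_2$ evasion guarantee into $\Gamma_3$ (the turn-order and bystander discussion), a point the paper handles in one clause.
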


\begin{proof}
To prove the theorem we will assume
\begin{equation}
\left(  \forall s_{0}\text{ there exists a NE }\widehat{\sigma}\text{ of
}\Gamma_{3}\left(  G|s_{0}\right)  :\text{ }\mathbf{K}_{3}\left(
G|s_{0},\widehat{\sigma}\right)  >0\right)  \text{ and }c\left(  G\right)  >1
\end{equation}
\noindent and reach a contradiction. To this end choose $s_{0}=\left(
x^{1},x^{2},x^{3},1\right)  $ as follows.

\begin{enumerate}
\item Take arbitrary $x^{1}$.

\item Take some $x^{2}$ such that for $s_{0}^{\prime}=\left(  x^{1}%
,x^{2},1\right)  $ there exists a $\overline{\sigma}^{2}$ which is escaping in
$\Gamma_{2}\left(  G|s_{0}^{\prime}\right)  $ (this is always possible, since
$c\left(  G\right)  >1$).

\item Take some $x^{3}$ such that for $s_{0}^{\prime\prime}=\left(
x^{2},x^{3},1\right)  $ there exists a $\overline{\sigma}^{3}$ which is
escaping in $\Gamma_{2}\left(  G|s_{0}^{\prime\prime}\right)  $ (this is
always possible, since $c\left(  G\right)  >1$).
\end{enumerate}

\noindent Now let $\widehat{\sigma}=\left(  \widehat{\sigma}^{1}%
,\widehat{\sigma}^{2},\widehat{\sigma}^{3}\right)  $ be a capturing NE of
$\Gamma_{3}\left(  G|s_{0}\right)  $ and consider the following cases.

\begin{enumerate}
\item $\mathbf{K}_{3}\left(  G|s_{0},\widehat{\sigma}\right)  =1$; then, for
some $T_{1}$ we will have%
\[
Q^{2}\left(  s_{0},\widehat{\sigma}^{1},\widehat{\sigma}^{2},\widehat{\sigma
}^{3}\right)  =-\gamma^{T_{1}}<0\leq Q^{2}\left(  s_{0},\widehat{\sigma}%
^{1},\overline{\sigma}^{2},\widehat{\sigma}^{3}\right)  .
\]
Because, when $P_{1}$ and $P_{2}$ play $\widehat{\sigma}^{1}$ and
$\overline{\sigma}^{2}$, respectively, $P_{2}$ will always escape $P_{1}$
(since $P_{3}$ can never influence $P_{2}$ moves). And furthermore, $P_{2}$
may in fact capture $P_{3}$, since $\widehat{\sigma}^{3}$ is not necessarily
an escaping strategy.

\item $\mathbf{K}_{3}\left(  G|s_{0},\widehat{\sigma}\right)  =2$; then, for
some $T_{2}$ we will have%
\[
Q^{3}\left(  s_{0},\widehat{\sigma}^{1},\widehat{\sigma}^{2},\widehat{\sigma
}^{3}\right)  =-\gamma^{T_{2}}<0=Q^{3}\left(  s_{0},\widehat{\sigma}%
^{1},\widehat{\sigma}^{2},\overline{\sigma}^{3}\right)
\]
Because, when $P_{2}$ and $P_{3}$ play $\widehat{\sigma}^{2}$ and
$\overline{\sigma}^{3}$, respectively, $P_{3}$ will always escape $P_{2}$
(since $P_{1}$ can never influence $P_{3}$ moves).
\end{enumerate}

\noindent Since in every case some $P_{n}$ can unilaterally improve
$Q^{n}\left(  s_{0},\widehat{\sigma}\right)  $, $\widehat{\sigma}$ cannot be a
NE\ of $\Gamma_{3}\left(  G|s_{0}\right)  $.
\end{proof}

Hence for every cop-win graph $G$ and every starting state $s_{0}$,
$\Gamma_{3}\left(  G|s_{0}\right)  $ has a capturing NE. However, perhaps
surprisingly, there exists cop-win graphs and starting states for which
$\Gamma_{3}\left(  G|s_{0}\right)  $ also has noncapturing NE, as the
following example shows.

\begin{example}
\label{prop0506}\normalfont Take a path with $P_{1}$ and $P_{2}$ at the
endpoints and $P_{3}$ at the middle, as shown in Figure \ref{fig0502}.
\begin{figure}[ptbh]
\begin{center}
\begin{tikzpicture}
\SetGraphUnit{2}
\Vertex[x= 0,y= 0]{1}
\Vertex[x= 1,y= 0]{2}
\Vertex[x= 2,y= 0]{3}
\Vertex[x= 3,y= 0]{4}
\Vertex[x= 4,y= 0]{5}
\node(A) [label=$P_1$] at (0,0.25) {};
\node(B) [label=$P_3$] at (2,0.25) {};
\node(C) [label=$P_2$] at (4,0.25) {};
\Edge(1)(2)
\Edge(2)(3)
\Edge(3)(4)
\Edge(4)(5)
\SetVertexNoLabel
\end{tikzpicture}
\end{center}
\caption{A graph $G$ in which $\Gamma_{3}\left(  G|s_{0}\right)  $ has a
noncapturing NE.}%
\label{fig0502}%
\end{figure}
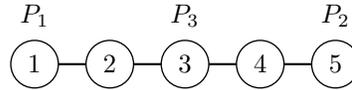The strategy profile $\overline{\sigma}=\left(  \overline{\sigma
}^{1},\overline{\sigma}^{2},\overline{\sigma}^{3}\right)  $ is defined as follows.
\end{example}

\begin{enumerate}
\item $\overline{\sigma}^{1}$: $P_{1}$ stays in place as long as $P_{2}$ does
not move; if $P_{2}$ moves, $P_{1}$ chases him.

\item $\overline{\sigma}^{2}$: $P_{2}$ stays in place as long as $P_{3}$ does
not move; if $P_{3}$ moves, $P_{2}$ chases him.

\item $\overline{\sigma}^{3}$: $P_{3}$ stays in place as long as nobody moves;
if $P_{1}$ moves, $P_{3}$ goes towards $P_{2}$; if $P_{2}$ moves, $P_{3}$ goes
towards $P_{1}$.
\end{enumerate}

\noindent We will now show $\overline{\sigma}$ is a noncapturing NE of
$\Gamma\left(  G|s_{0}\right)  $. Obviously we have%
\[
\forall n\in\left\{  1,2,3\right\}  :Q^{n}\left(  s_{0},\left(  \overline
{\sigma}^{1},\overline{\sigma}^{2},\overline{\sigma}^{3}\right)  \right)  =0.
\]
We will show no player profits by unilaterally changing his strategy.

\begin{enumerate}
\item Say $P_{1}$ uses any strategy $\sigma^{1}$. If, by $\sigma^{1}$, he
moves at some time, then $P_{3}$ goes towards $P_{2}$ and $P_{2}$ goes towards
$P_{3}$ resulting in a capture of $P_{3}$ by $P_{2}$. Hence
\begin{equation}
Q^{1}\left(  s_{0},\left(  \sigma^{1},\overline{\sigma}^{2},\overline{\sigma
}^{3}\right)  \right)  =0=Q^{1}\left(  s_{0},\left(  \overline{\sigma}%
^{1},\overline{\sigma}^{2},\overline{\sigma}^{3}\right)  \right)  .
\label{eq0501}%
\end{equation}

\item Say $P_{2}$ uses any strategy $\sigma^{2}$. If, by $\sigma^{2}$, he
moves at some time, then $P_{3}$ goes towards $P_{1}$ and $P_{1}$ goes towards
$P_{2}$ resulting in a capture of $P_{2}$ by $P_{1}$. Hence
\begin{equation}
Q^{2}\left(  s_{0},\left(  \overline{\sigma}^{1},\sigma^{2},\overline{\sigma
}^{3}\right)  \right)  <0=Q^{2}\left(  s_{0},\left(  \overline{\sigma}%
^{1},\overline{\sigma}^{2},\overline{\sigma}^{3}\right)  \right)  .
\label{eq0502}%
\end{equation}

\item Say $P_{3}$ uses any strategy $\sigma^{3}$. If, by $\sigma^{3}$, he
moves at some time, then $P_{2}$ goes towards $P_{3}$ and then $P_{1}$ goes
towards $P_{2}$. Depending on $P_{3}$'s moves we may have a capture of $P_{2}$
by $P_{1}$ or of $P_{3}$ by $P_{2}$. In either case
\begin{equation}
Q^{3}\left(  s_{0},\left(  \overline{\sigma}^{1},\overline{\sigma}^{2}%
,\sigma^{3}\right)  \right)  \leq0=Q^{3}\left(  s_{0},\left(  \overline
{\sigma}^{1},\overline{\sigma}^{2},\overline{\sigma}^{3}\right)  \right)  .
\label{eq0503}%
\end{equation}

\end{enumerate}

\noindent Combining (\ref{eq0501})-(\ref{eq0503}) we get%
\[
\forall n\in\left\{  1,2,3\right\}  ,\forall\sigma^{n}:Q^{n}\left(
s_{0},\left(  \sigma^{n},\overline{\sigma}^{-n}\right)  \right)  \leq
Q^{n}\left(  s_{0},\overline{\sigma}\right)
\]
which shows that $\overline{\sigma}$ is a noncapturing NE\ of $\Gamma\left(
G|s_{0}\right)  $.

\normalfont The above example shows that the converse of Theorem
\ref{prop0505} does not hold, i.e., there exist cop-win graphs $G$ and initial
states $s_{0}$ such that $\Gamma_{3}\left(  G|s_{0}\right)  $ has noncapturing
NE. However we can prove a weaker result:\ the converse does hold when $G$ is
a tree.

The first step in our proof is to revisit the two-player game $\widetilde
{\Gamma}_{3}^{2}\left(  G|s_{0}\right)  $ of Section \ref{sec0501}. Recall
that it is played between $P_{2}$ and $P_{-2}$ who controls the \emph{tokens}
$P_{1}$ and $P_{3}$. We now prove the following.

\begin{theorem}
\label{prop0507}If $c\left(  G\right)  =1$ then every optimal profile
$\widehat{\sigma}=\left(  \widehat{\sigma}^{1},\widehat{\sigma}^{2}%
,\widehat{\sigma}^{3}\right)  $ of $\widetilde{\Gamma}_{3}^{2}\left(
G|s_{0}\right)  $ is a NE of $\Gamma_{3}\left(  G|s_{0}\right)  $.
\end{theorem}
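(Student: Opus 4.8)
The plan is to use the fact that $\widetilde{\Gamma}_3^2(G|s_0)$ is a two-player zero-sum game in which $P_2$ maximizes $Q^2$ and the coalition $P_{-2}=(P_1,P_3)$ minimizes it, so that an optimal profile $\widehat{\sigma}=(\widehat{\sigma}^1,\widehat{\sigma}^2,\widehat{\sigma}^3)$ is a saddle point. By Lemma \ref{prop0501} the value $v$ exists and the optimal strategies satisfy, for every $\sigma^2$ and every $\sigma^{-2}=(\sigma^1,\sigma^3)$,
\[
Q^2(s_0,\sigma^2,\widehat{\sigma}^{-2})\leq Q^2(s_0,\widehat{\sigma})=v\leq Q^2(s_0,\widehat{\sigma}^2,\sigma^{-2}).
\]
The whole argument then reduces to transporting these two inequalities onto the individual payoffs $Q^1,Q^2,Q^3$ of $\Gamma_3(G|s_0)$.

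The key observation will be a pair of pointwise identities relating all three payoffs to $Q^2$ alone. For any deterministic profile $\sigma$ exactly one of three things happens: $P_1$ captures $P_2$ (possibly simultaneously with $P_2$ reaching $P_3$), $P_2$ captures $P_3$ strictly first, or no capture ever occurs. Reading off (\ref{eq0501a}) in each case gives
\[
Q^1(s_0,\sigma)=\max\bigl(0,-Q^2(s_0,\sigma)\bigr)\qquad\text{and}\qquad Q^3(s_0,\sigma)=-\max\bigl(0,Q^2(s_0,\sigma)\bigr).
\]
Hence $Q^1$ is nondecreasing in $-Q^2$ and $Q^3$ is nonincreasing in $Q^2$. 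The only mildly delicate point is the simultaneous-capture state $x^1=x^2=x^3$, where one checks $q^1=1$, $q^2=-1$, $q^3=0$, so both identities survive.

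With the identities in hand the three equilibrium conditions follow mechanically. For $P_2$, the left saddle inequality is precisely the statement that $\widehat{\sigma}^2$ best-responds to $\widehat{\sigma}^{-2}$, so $P_2$ cannot raise $Q^2$. For $P_1$, a unilateral deviation $\widehat{\sigma}^1\to\sigma^1$ in $\Gamma_3$ is identical to the coalition deviation $\widehat{\sigma}^{-2}\to(\sigma^1,\widehat{\sigma}^3)$ in $\widetilde{\Gamma}_3^2$; by the right saddle inequality $Q^2$ weakly increases, so $-Q^2$ weakly decreases, so $Q^1=\max(0,-Q^2)$ weakly decreases, and $P_1$ cannot improve. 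Symmetrically $\widehat{\sigma}^3\to\sigma^3$ is the coalition deviation $\widehat{\sigma}^{-2}\to(\widehat{\sigma}^1,\sigma^3)$, which again weakly raises $Q^2$ and hence weakly lowers $Q^3=-\max(0,Q^2)$. This yields the NE inequalities of the form (\ref{eq001}) for all three players, so $\widehat{\sigma}$ is a NE of $\Gamma_3(G|s_0)$.

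I expect the main subtlety to lie not in the NE argument above — which, notably, uses only the zero-sum structure of $\widetilde{\Gamma}_3^2$ and \emph{not} the hypothesis $c(G)=1$ — but in the role that the cop-number assumption is really playing, namely guaranteeing that this NE is \emph{capturing} (the property actually needed for the tree converse of Theorem \ref{prop0505}). When $c(G)=1$, $P_1$ can catch $P_2$ regardless of how $P_2$ moves, so $P_2$ can never secure $Q^2\geq 0$ by mere evasion: against $\widehat{\sigma}^2$ the coalition could always switch $P_1$ to a pure cop strategy and force a capture of $P_2$ in finite time unless $P_2$ preempts by catching $P_3$. The hard part will be turning this into a clean proof that the optimal value satisfies $v\neq 0$, ruling out the degenerate no-capture outcome while correctly handling the $P_1$–$P_2$ versus $P_2$–$P_3$ timing race.
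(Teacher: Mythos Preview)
Your proof is correct and in fact cleaner than the paper's. Both arguments rest on the same observation --- that $\widetilde{Q}^2=Q^2$ and that the saddle-point inequalities of $\widetilde{\Gamma}_3^2$ control all three payoffs --- but you package the bookkeeping differently. The paper performs a nested case split: an outer split on $\mathbf{K}_3(G|s_0,\widehat{\sigma})\in\{0,1,2\}$ and then, for each player's deviation, an inner split on $\mathbf{K}_3$ of the deviated profile, checking in each sub-case how $Q^n$ relates to $\widetilde{Q}^{-2}$. Your two pointwise identities $Q^1=\max(0,-Q^2)$ and $Q^3=-\max(0,Q^2)$ absorb that entire case analysis into a monotonicity statement, so the three NE inequalities drop out in one line each. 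What the paper's longer argument buys is only explicitness; your route is shorter and arguably more transparent.

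You are also right that the hypothesis $c(G)=1$ plays no role in the NE argument, and the paper's own proof confirms this: it treats all three values of $\mathbf{K}_3$, including $\mathbf{K}_3=0$, without ever invoking the cop number. Your last paragraph, however, over-reads the situation. The theorem as stated only asserts that $\widehat{\sigma}$ is a NE, not that it is capturing; the paper obtains the capturing conclusion separately and only for trees, via Lemma~\ref{prop0510} and Theorem~\ref{prop0511}, and then combines that with the present theorem in Theorem~\ref{prop0512}. So there is no hidden ``hard part'' about $v\neq 0$ lurking in this statement: the hypothesis $c(G)=1$ is simply not used in the proof of Theorem~\ref{prop0507} as written.
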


\begin{proof}
Let us choose some initial state $s_{0}$ and some optimal (in $\widetilde
{\Gamma}_{3}^{2}\left(  G|s_{0}\right)  $) profile $\widehat{\sigma}=\left(
\widehat{\sigma}^{1},\widehat{\sigma}^{2},\widehat{\sigma}^{3}\right)  $, and
keep them fixed for the rest of the proof.

For any $\sigma=\left(  \sigma^{1},\sigma^{2},\sigma^{3}\right)  $, the
capture function $\mathbf{K}_{3}\left(  G|s_{0},\sigma\right)  $ will take a
value in $\left\{  0,1,2\right\}  $. The values correspond to three outcomes
in $\Gamma_{3}\left(  G|s_{0}\right)  $ and the same outcomes are obtained in
$\widetilde{\Gamma}_{3}^{2}\left(  G|s_{0}\right)  $ (the two games differ in
their payoffs but are played by the same rules):

\begin{enumerate}
\item $\mathbf{K}_{3}\left(  G|s_{0},\sigma\right)  =1$ means $P_{1}$ captures
$P_{2}$;

\item $\mathbf{K}_{3}\left(  G|s_{0},\sigma\right)  =2$ means $P_{2}$ captures
$P_{3}$ (and is not captured by $P_{1}$);

\item $\mathbf{K}_{3}\left(  G|s_{0},\sigma\right)  =0$ means neither $P_{2}$
nor $P_{3}$ is captured.
\end{enumerate}

\noindent So we will consider the three mutually exclusive cases
separately.\medskip\medskip

\noindent\underline{\textbf{I. }$\mathbf{K}_{3}\left(  G|s_{0},\left(
\widehat{\sigma}^{1},\widehat{\sigma}^{2},\widehat{\sigma}^{3}\right)
\right)  =1$ }. Let us examine each player's payoff.

\begin{enumerate}
\item For all $\left(  \sigma^{1},\sigma^{3}\right)  $ such that
$\mathbf{K}_{3}\left(  G|s_{0},\left(  \sigma^{1},\widehat{\sigma}^{2}%
,\sigma^{3}\right)  \right)  =1$, $\left(  \widehat{\sigma}^{1},\widehat
{\sigma}^{2},\widehat{\sigma}^{3}\right)  $ optimality in $\widetilde{\Gamma
}_{3}^{2}\left(  G|s_{0}\right)  $, implies%
\[
Q^{1}\left(  s_{0},\widehat{\sigma}^{1},\widehat{\sigma}^{2},\widehat{\sigma
}^{3}\right)  =\widetilde{Q}^{-2}\left(  s_{0},\widehat{\sigma}^{1}%
,\widehat{\sigma}^{2},\widehat{\sigma}^{3}\right)  \geq\widetilde{Q}%
^{-2}\left(  s_{0},\sigma^{1},\widehat{\sigma}^{2},\sigma^{3}\right)
=Q^{1}\left(  s_{0},\sigma^{1},\widehat{\sigma}^{2},\sigma^{3}\right)  .
\]
And for all $\left(  \sigma^{1},\sigma^{3}\right)  $ such that $\mathbf{K}%
_{3}\left(  G|s_{0},\left(  \sigma^{1},\widehat{\sigma}^{2},\sigma^{3}\right)
\right)  \neq1$ we have
\[
Q^{1}\left(  s_{0},\widehat{\sigma}^{1},\widehat{\sigma}^{2},\widehat{\sigma
}^{3}\right)  >0=Q^{1}\left(  s_{0},\sigma^{1},\widehat{\sigma}^{2},\sigma
^{3}\right)  .
\]
Hence%
\begin{equation}
\forall\sigma^{1},\sigma^{3}:Q^{1}\left(  s_{0},\widehat{\sigma}^{1}%
,\widehat{\sigma}^{2},\widehat{\sigma}^{3}\right)  \geq Q^{1}\left(
s_{0},\sigma^{1},\widehat{\sigma}^{2},\sigma^{3}\right)  \Rightarrow
\forall\sigma^{1}:Q^{1}\left(  s_{0},\widehat{\sigma}^{1},\widehat{\sigma}%
^{2},\widehat{\sigma}^{3}\right)  \geq Q^{1}\left(  s_{0},\sigma^{1}%
,\widehat{\sigma}^{2},\widehat{\sigma}^{3}\right)  \label{eqNE101}%
\end{equation}

\item For all $\left(  \sigma^{1},\sigma^{2},\sigma^{3}\right)  $ we have%
\[
Q^{2}\left(  s_{0},\sigma^{1},\sigma^{2},\sigma^{3}\right)  =\widetilde{Q}%
^{2}\left(  s_{0},\sigma^{1},\sigma^{2},\sigma^{3}\right)  ;
\]
combining with optimality in $\widetilde{\Gamma}_{3}^{2}\left(  G|s_{0}%
\right)  $ we get%
\begin{equation}
\forall\sigma^{2}:Q^{2}\left(  s_{0},\widehat{\sigma}^{1},\widehat{\sigma}%
^{2},\widehat{\sigma}^{3}\right)  =\widetilde{Q}^{2}\left(  s_{0}%
,\widehat{\sigma}^{1},\widehat{\sigma}^{2},\widehat{\sigma}^{3}\right)
\geq\widetilde{Q}^{2}\left(  s_{0},\widehat{\sigma}^{1},\sigma^{2}%
,\widehat{\sigma}^{3}\right)  =Q^{2}\left(  s_{0},\widehat{\sigma}^{1}%
,\sigma^{2},\widehat{\sigma}^{3}\right)  . \label{eqNE102}%
\end{equation}

\item And finally
\begin{equation}
\forall\sigma^{1},\sigma^{3}:Q^{3}\left(  s_{0},\widehat{\sigma}^{1}%
,\widehat{\sigma}^{2},\widehat{\sigma}^{3}\right)  =0\geq Q^{3}\left(
s_{0},\sigma^{1},\widehat{\sigma}^{2},\sigma^{3}\right)  \Rightarrow
\forall\sigma^{3}:Q^{3}\left(  s_{0},\widehat{\sigma}^{1},\widehat{\sigma}%
^{2},\widehat{\sigma}^{3}\right)  =0\geq Q^{3}\left(  s_{0},\widehat{\sigma
}^{1},\widehat{\sigma}^{2},\sigma^{3}\right)  . \label{eqNE103}%
\end{equation}

\end{enumerate}

Combining (\ref{eqNE101})-(\ref{eqNE103}), we see that \textbf{ }%
\begin{equation}
\mathbf{K}_{3}\left(  G|s_{0},\left(  \widehat{\sigma}^{1},\widehat{\sigma
}^{2},\widehat{\sigma}^{3}\right)  \right)  =1\Rightarrow\forall
n,\forall\sigma^{n}:Q^{n}\left(  s_{0},\widehat{\sigma}\right)  \geq
Q^{n}\left(  s_{0},\sigma^{n},\widehat{\sigma}^{-n}\right)  . \label{eqNE141}%
\end{equation}

\noindent\underline{\textbf{II. }$\mathbf{K}_{3}\left(  G|s_{0},\left(
\widehat{\sigma}^{1},\widehat{\sigma}^{2},\widehat{\sigma}^{3}\right)
\right)  =2$}. Then the following hold.

\begin{enumerate}
\item From $\left(  \widehat{\sigma}^{1},\widehat{\sigma}^{3}\right)  $
optimality we have%
\begin{equation}
\forall\sigma^{1},\sigma^{3}:0=Q^{1}\left(  s_{0},\widehat{\sigma}%
^{1},\widehat{\sigma}^{2},\widehat{\sigma}^{3}\right)  >\widetilde{Q}%
^{-2}\left(  s_{0},\widehat{\sigma}^{1},\widehat{\sigma}^{2},\widehat{\sigma
}^{3}\right)  \geq\widetilde{Q}^{-2}\left(  s_{0},\sigma^{1},\widehat{\sigma
}^{2},\sigma^{3}\right)  . \label{eqNE105}%
\end{equation}
We cannot have $\mathbf{K}_{3}\left(  G|s_{0},\left(  \sigma^{1}%
,\widehat{\sigma}^{2},\sigma^{3}\right)  \right)  =1$, because then we would
also have $\widetilde{Q}^{-2}\left(  s_{0},\sigma^{1},\widehat{\sigma}%
^{2},\sigma^{3}\right)  >0$, which contradicts (\ref{eqNE105}). If we have
either $\mathbf{K}_{3}\left(  G|s_{0},\left(  \sigma^{1},\widehat{\sigma}%
^{2},\sigma^{3}\right)  \right)  =2$ or $\mathbf{K}_{3}\left(  G|s_{0},\left(
\sigma^{1},\widehat{\sigma}^{2},\sigma^{3}\right)  \right)  =0$ then
\[
Q^{1}\left(  s_{0},\widehat{\sigma}^{1},\widehat{\sigma}^{2},\widehat{\sigma
}^{3}\right)  =0=Q^{1}\left(  s_{0},\sigma^{1},\widehat{\sigma}^{2},\sigma
^{3}\right)  .
\]
In short%
\begin{equation}
\forall\sigma^{1},\sigma^{3}:Q^{1}\left(  s_{0},\widehat{\sigma}^{1}%
,\widehat{\sigma}^{2},\widehat{\sigma}^{3}\right)  =0=Q^{1}\left(
s_{0},\sigma^{1},\widehat{\sigma}^{2},\sigma^{3}\right)  \Rightarrow
\forall\sigma^{1}:Q^{1}\left(  s_{0},\widehat{\sigma}^{1},\widehat{\sigma}%
^{2},\widehat{\sigma}^{3}\right)  =Q^{1}\left(  s_{0},\sigma^{1}%
,\widehat{\sigma}^{2},\widehat{\sigma}^{3}\right)  \label{eqNE111}%
\end{equation}

\item By the same argument as in the previous case we get
\begin{equation}
\forall\sigma^{2}:Q^{2}\left(  s_{0},\widehat{\sigma}^{1},\widehat{\sigma}%
^{2},\widehat{\sigma}^{3}\right)  \geq Q^{2}\left(  s_{0},\widehat{\sigma}%
^{1},\sigma^{2},\widehat{\sigma}^{3}\right)  . \label{eqNE112}%
\end{equation}

\item Finally%
\begin{equation}
\forall\sigma^{1},\sigma^{3}:0>Q^{3}\left(  s_{0},\widehat{\sigma}%
^{1},\widehat{\sigma}^{2},\widehat{\sigma}^{3}\right)  =\widetilde{Q}%
^{-2}\left(  s_{0},\widehat{\sigma}^{1},\widehat{\sigma}^{2},\widehat{\sigma
}^{3}\right)  \geq\widetilde{Q}^{-2}\left(  s_{0},\sigma^{1},\widehat{\sigma
}^{2},\sigma^{3}\right)  . \label{eqNE113}%
\end{equation}
We cannot have $\mathbf{K}_{3}\left(  G|s_{0},\left(  \sigma^{1}%
,\widehat{\sigma}^{2},\sigma^{3}\right)  \right)  =0$ or $\mathbf{K}%
_{3}\left(  G|s_{0},\left(  \sigma^{1},\widehat{\sigma}^{2},\sigma^{3}\right)
\right)  =1$, because then we would also have $\widetilde{Q}^{-2}\left(
s_{0},\sigma^{1},\widehat{\sigma}^{2},\sigma^{3}\right)  \geq0$ which
contradicts (\ref{eqNE113}). Hence $\mathbf{K}_{3}\left(  G|s_{0},\left(
\sigma^{1},\widehat{\sigma}^{2},\sigma^{3}\right)  \right)  =2$ and then
\[
Q^{3}\left(  s_{0},\sigma^{1},\widehat{\sigma}^{2},\sigma^{3}\right)
=\widetilde{Q}^{-2}\left(  s_{0},\sigma^{1},\widehat{\sigma}^{2},\sigma
^{3}\right)  ;
\]
hence from (\ref{eqNE113}) we get
\begin{equation}
\forall\sigma^{1},\sigma^{3}:Q^{3}\left(  s_{0},\widehat{\sigma}^{1}%
,\widehat{\sigma}^{2},\widehat{\sigma}^{3}\right)  \geq Q^{3}\left(
s_{0},\sigma^{1},\widehat{\sigma}^{2},\sigma^{3}\right) \nonumber
\end{equation}
and then%
\begin{equation}
\forall\sigma^{3}:Q^{3}\left(  s_{0},\widehat{\sigma}^{1},\widehat{\sigma}%
^{2},\widehat{\sigma}^{3}\right)  =Q^{3}\left(  s_{0},\widehat{\sigma}%
^{1},\widehat{\sigma}^{2},\sigma^{3}\right)  . \label{eqNE114a}%
\end{equation}

\end{enumerate}

\noindent Combining (\ref{eqNE111})-(\ref{eqNE114a}),we see that
\begin{equation}
\mathbf{K}_{3}\left(  G|s_{0},\left(  \widehat{\sigma}^{1},\widehat{\sigma
}^{2},\widehat{\sigma}^{3}\right)  \right)  =2\Rightarrow\forall
n,\forall\sigma^{n}:Q^{n}\left(  s_{0},\widehat{\sigma}\right)  \geq
Q^{n}\left(  s_{0},\sigma^{n},\widehat{\sigma}^{-n}\right)  . \label{eqNE142}%
\end{equation}

\noindent\underline{\textbf{III.} $\mathbf{K}_{3}\left(  G|s_{0},\left(
\widehat{\sigma}^{1},\widehat{\sigma}^{2},\widehat{\sigma}^{3}\right)
\right)  =0$ }.

\begin{enumerate}
\item For all $\sigma^{1},\sigma^{3}$ we have%
\begin{equation}
Q^{1}\left(  s_{0},\widehat{\sigma}^{1},\widehat{\sigma}^{2},\widehat{\sigma
}^{3}\right)  =0=\widetilde{Q}^{-2}\left(  s_{0},\widehat{\sigma}^{1}%
,\widehat{\sigma}^{2},\widehat{\sigma}^{3}\right)  \geq\widetilde{Q}%
^{-2}\left(  s_{0},\sigma^{1},\widehat{\sigma}^{2},\sigma^{3}\right)  .
\label{eqNE131}%
\end{equation}
We cannot have $\mathbf{K}_{3}\left(  G|s_{0},\left(  \sigma^{1}%
,\widehat{\sigma}^{2},\sigma^{3}\right)  \right)  =1$, because then we would
also have $\widetilde{Q}^{-2}\left(  s_{0},\sigma^{1},\widehat{\sigma}%
^{2},\sigma^{3}\right)  >0$, which would contradict (\ref{eqNE131}). If
$\mathbf{K}_{3}\left(  G|s_{0},\left(  \sigma^{1},\widehat{\sigma}^{2}%
,\sigma^{3}\right)  \right)  =2$ or $\mathbf{K}_{3}\left(  G|s_{0},\left(
\sigma^{1},\widehat{\sigma}^{2},\sigma^{3}\right)  \right)  =0$ then
$Q^{1}\left(  s_{0},\sigma^{1},\widehat{\sigma}^{2},\sigma^{3}\right)
=0\ $and so%
\begin{equation}
\forall\sigma^{1},\sigma^{3}:0=Q^{1}\left(  s_{0},\widehat{\sigma}%
^{1},\widehat{\sigma}^{2},\widehat{\sigma}^{3}\right)  =Q^{1}\left(
s_{0},\sigma^{1},\widehat{\sigma}^{2},\sigma^{3}\right)  \Rightarrow
\forall\sigma^{1}:Q^{1}\left(  s_{0},\widehat{\sigma}^{1},\widehat{\sigma}%
^{2},\widehat{\sigma}^{3}\right)  =Q^{1}\left(  s_{0},\sigma^{1}%
,\widehat{\sigma}^{2},\widehat{\sigma}^{3}\right)  . \label{eqNE133}%
\end{equation}

\item By the same argument as in the previous cases we get%
\begin{equation}
\forall\sigma^{2}:Q^{2}\left(  s_{0},\widehat{\sigma}^{1},\widehat{\sigma}%
^{2},\widehat{\sigma}^{3}\right)  \geq Q^{2}\left(  s_{0},\widehat{\sigma}%
^{1},\sigma^{2},\widehat{\sigma}^{3}\right)  . \label{eqNE132a}%
\end{equation}

\item Finally, we have seen that, for all $\left(  \sigma^{1},\sigma
^{3}\right)  $, either $\mathbf{K}_{3}\left(  G|s_{0},\left(  \sigma
^{1},\widehat{\sigma}^{2},\sigma^{3}\right)  \right)  =2$ or $\mathbf{K}%
_{3}\left(  G|s_{0},\left(  \sigma^{1},\widehat{\sigma}^{2},\sigma^{3}\right)
\right)  =0$; in both cases
\[
\forall\sigma^{1},\sigma^{3}:Q^{3}\left(  s_{0},\widehat{\sigma}^{1}%
,\widehat{\sigma}^{2},\widehat{\sigma}^{3}\right)  =\widetilde{Q}^{-2}\left(
s_{0},\widehat{\sigma}^{1},\widehat{\sigma}^{2},\widehat{\sigma}^{3}\right)
\geq\widetilde{Q}^{-2}\left(  s_{0},\sigma^{1},\widehat{\sigma}^{2},\sigma
^{3}\right)  =Q^{3}\left(  s_{0},\sigma^{1},\widehat{\sigma}^{2},\sigma
^{3}\right)
\]
and so%
\begin{equation}
\forall\sigma^{3}:Q^{3}\left(  s_{0},\widehat{\sigma}^{1},\widehat{\sigma}%
^{2},\widehat{\sigma}^{3}\right)  \geq Q^{3}\left(  s_{0},\widehat{\sigma}%
^{1},\widehat{\sigma}^{2},\widehat{\sigma}^{3}\right)  . \label{eqNE132}%
\end{equation}

\end{enumerate}

Combining (\ref{eqNE133})-(\ref{eqNE132}), we see that
\begin{equation}
\mathbf{K}_{3}\left(  G|s_{0},\left(  \widehat{\sigma}^{1},\widehat{\sigma
}^{2},\widehat{\sigma}^{3}\right)  \right)  =0\Rightarrow\forall
n,\forall\sigma^{n}:Q^{n}\left(  s_{0},\widehat{\sigma}\right)  \geq
Q^{n}\left(  s_{0},\sigma^{n},\widehat{\sigma}^{-n}\right)  . \label{eqNE143}%
\end{equation}

\noindent In conclusion, combining (\ref{eqNE141}), (\ref{eqNE142}) and
(\ref{eqNE143}) we see that:\ every profile $\left(  \widehat{\sigma}%
^{1},\widehat{\sigma}^{2},\widehat{\sigma}^{3}\right)  $ which is optimal in
$\widetilde{\Gamma}_{3}^{2}\left(  G|s_{0}\right)  $ is also a NE\ of
$\Gamma_{3}\left(  G|s_{0}\right)  $.
\end{proof}

Before we prove additional facts about $\widetilde{\Gamma}_{3}^{2}\left(
G|s_{0}\right)  $ we need the following.

\begin{definition}
\label{prop0508}A graph $G$ is called \emph{median} if for every three
vertices $x$, $y$, and $z$ there exists a \emph{unique} vertex $m\left(
x,y,z\right)  $ (the \emph{median} vertex of $x,y,z$) which belongs to
shortest paths between each pair of $x,y,z$.
\end{definition}

The following facts are well known \cite{TreeIsMedian}. First, every tree is a
median graph. Second, in a tree the union of the three (unique)\ shortest
paths between the pairs of vertices $x$, $y$, and $z$ is

\begin{enumerate}
\item either a path, in which case the median $m\left(  x,y,z\right)  $ is
equal to one of $x$, $y$, or $z$;

\item or a subtree formed by three paths meeting at a single central
node,which is the median of $x$, $y$, and $z$.
\end{enumerate}

Now we can prove some additional properties of $\widetilde{\Gamma}_{3}%
^{2}\left(  G|s_{0}\right)  $.

\begin{theorem}
\label{prop0509}If $G$ is a path then, for any $s_{0}$, every strategy profile
$\widehat{\sigma}=\left(  \widehat{\sigma}^{1},\widehat{\sigma}^{2}%
,\widehat{\sigma}^{3}\right)  $ which is optimal in $\widetilde{\Gamma}%
_{3}^{2}\left(  G|s_{0}\right)  $ is capturing.
\end{theorem}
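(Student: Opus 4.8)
The plan is to reduce the statement to a single claim about the value of the zero-sum game. Since $\widetilde{\Gamma}_{3}^{2}(G|s_{0})$ has a value $v(s_{0})$ (Lemma \ref{prop0501}) and every optimal profile realizes payoff $Q^{2}=v(s_{0})$, and since $Q^{2}>0$, $Q^{2}<0$, $Q^{2}=0$ correspond exactly to a $P_{2}$-capture, a $P_{1}$-capture, and no capture respectively, the theorem is equivalent to showing $v(s_{0})\neq 0$ for every $s_{0}$. Thus I would prove: \emph{on a path no starting position yields value $0$}, i.e.\ under optimal play some capture always occurs.

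The first ingredient is the \emph{non-crossing invariant}: on a path the vertices carry a linear order, and two tokens can exchange their relative order only by first occupying a common vertex; for the pairs $(P_{1},P_{2})$ and $(P_{2},P_{3})$ a common vertex is a capture, so until the game ends the orders of $x^{1}$ vs $x^{2}$ and of $x^{2}$ vs $x^{3}$ are frozen (the pair $(P_{1},P_{3})$ may cross, since $x^{1}=x^{3}$ is harmless). The second, decisive, ingredient is an \emph{asymmetry of obstruction}: $P_{1}$'s pursuit of $P_{2}$ is never obstructed by $P_{3}$, because $x^{1}=x^{3}$ is not a capture and $P_{1}$ may move through $P_{3}$'s vertex; hence the policy ``$P_{1}$ always steps toward $P_{2}$'' is exactly a lone cop chasing a lone robber on a path, capturing $P_{2}$ within at most $\mathrm{diam}(G)$ rounds \emph{unless} the game ends earlier by $P_{2}$ capturing $P_{3}$. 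By contrast $P_{2}$'s pursuit of $P_{3}$ \emph{is} obstructed: $P_{2}$ cannot pass through $P_{1}$ (that would be a $P_{1}$-capture), so on a finite path the only thing that can save $P_{3}$ from $P_{2}$ is $P_{1}$ acting as a \emph{shield} strictly between $P_{2}$ and $P_{3}$ --- a wall alone merely corners $P_{3}$ and leads to capture.

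With these tools I would argue by the following dichotomy, justified by determinacy of the (finite, perfect-information) zero-sum game. Either (Case A) $P_{2}$ can force a capture of $P_{3}$, in which case $P_{2}$ guarantees a strictly positive payoff and $v(s_{0})>0$; or (Case B) $P_{2}$ cannot, in which case $P_{-2}$ can keep $P_{3}$ safe forever, which by the shield characterization means $P_{-2}$ can drive the configuration into the ``$P_{1}$-in-the-middle'' order, with $P_{1}$ strictly between $P_{3}$ and $P_{2}$, and maintain it. The point of this order is that one and the same motion --- ``$P_{1}$ steps toward $P_{2}$'' --- simultaneously advances the unobstructed pursuit of $P_{2}$ and, moving away from $P_{3}$, preserves the shield; combined with the first ingredient this forces a $P_{1}$-capture while $P_{3}$ is never reached, so $v(s_{0})<0$. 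In either case $v(s_{0})\neq 0$, which is what we need.

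The main obstacle is making Case B rigorous, i.e.\ proving the shield characterization and the reduction ``$P_{-2}$ can prevent a $P_{2}$-capture $\Rightarrow$ $P_{-2}$ can reach and then exploit a $P_{1}$-in-the-middle order.'' This is where all the bookkeeping lives: one must track the turn indicator $p$ and the move order $P_{1}\to P_{2}\to P_{3}$, handle the degenerate coincidences $x^{1}=x^{3}$, and verify the cornering counts on a finite path, so that ``cannot evade'' genuinely yields capture rather than indefinite postponement. I expect to formalize this with a well-founded potential --- for example the length of the interval between $P_{1}$ and $P_{2}$ that still contains $P_{3}$, together with $P_{2}$'s distance to the relevant endpoint of that interval --- and to show that under the described policies this potential strictly decreases until a capture occurs, so that a draw (an infinite non-capturing optimal play, which by positionality would be eventually periodic) is impossible.
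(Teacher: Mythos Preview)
Your reduction to $v(s_{0})\neq 0$ is exactly the inference the paper draws in its last sentence. For the substantive step, however, the paper takes a much shorter route than your determinacy dichotomy: since on a path any three vertices are collinear, it simply splits on which of $x_{0}^{1},x_{0}^{2},x_{0}^{3}$ is the median and, for each of the three cases, writes down explicit positional ``path strategies'' $\overline{\sigma}=(\overline{\sigma}^{1},\overline{\sigma}^{2},\overline{\sigma}^{3})$ (for instance ``$P_{1}$ moves toward $P_{2}$, $P_{2}$ moves away from $P_{1}$, $P_{3}$ stays'') that visibly terminate in a capture and are asserted to be optimal; one optimal profile with nonzero payoff forces all optimal profiles to be capturing. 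Your shield-and-chase analysis and the potential-function bookkeeping you flag as the main obstacle are sound in outline, and in fact supply the rigor the paper omits (it never verifies optimality of its $\overline{\sigma}$), but they are heavier than needed. The paper's trick --- replacing your dynamic dichotomy ``can $P_{2}$ force a $P_{3}$-capture?'' by the static datum ``which token sits in the middle?'' --- collapses your Case~B into a one-line observation per median case, and the resulting $\overline{\sigma}$ are then reused verbatim as the base case for the extension to trees in Lemma~\ref{prop0510}.
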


\begin{proof}
If $s_{0}$ is a capture state, the theorem is obviously true. Take any
non-capture starting state $s_{0}=\left(  x_{0}^{1},x_{0}^{2},x_{0}%
^{3},p\right)  $; since $G$ is a path, it is a median graph and one of
$x_{0}^{1},x_{0}^{2},x_{0}^{3}$ is the median of the other two (we will also
say that either $x_{0}^{1},x_{0}^{2},x_{0}^{3}$\ or $P_{1},P_{2},P_{3}$ are
\emph{collinear}). We define strategies $\overline{\sigma}^{1}$,
$\overline{\sigma}^{2}$, $\overline{\sigma}^{3}$ for each case.

\begin{enumerate}
\item If $x_{0}^{1}$ is the median of $x_{0}^{2}$ and $x_{0}^{3}$, then:
$P_{1}$ moves towards $P_{2}$, $P_{2}$ moves away from $P_{1}$ and $P_{3}$
stays in place; eventually $P_{2}$ is captured.

\item If $x_{0}^{2}$ is the median of $x_{0}^{1}$ and $x_{0}^{3}$, then:
$P_{1}$ moves towards $P_{2}$, $P_{2}$ moves towards $P_{3}$ and $P_{3}$ moves
away from $P_{2}$; eventually $P_{3}$ is captured.

\item If $x_{0}^{3}$ is the median of $x_{0}^{1}$ and $x_{0}^{2}$, then:
$P_{1}$ moves towards $P_{2}$, $P_{2}$ moves away from $P_{1}$ and $P_{3}$
moves away from $P_{2}$; eventually $P_{2}$ is captured.
\end{enumerate}

\noindent\noindent In every case $\overline{\sigma}=\left(  \overline{\sigma
}^{1},\overline{\sigma}^{2},\overline{\sigma}^{3}\right)  $ is capturing
\emph{and} optimal. Since $\overline{\sigma}$ is capturing, the same holds for
\emph{every} optimal profile $\widehat{\sigma}$, because they all yield the
same payoff.
\end{proof}

The above defined $\overline{\sigma}$ will be called \emph{path strategies}
and will be used to prove the following lemma, needed to extend Theorem
\ref{prop0509} to trees.

\begin{lemma}
\label{prop0510}If $G$ is a tree then there exists a positional profile
$\widetilde{\sigma}=\left(  \widetilde{\sigma}^{1},\widetilde{\sigma}%
^{2},\widetilde{\sigma}^{3}\right)  $ for which the following hold in
$\widetilde{\Gamma}_{3}^{2}\left(  G|s_{0}\right)  $. \ 

\begin{enumerate}
\item For every $s_{0}$: $\left(  s_{0},\widetilde{\sigma}^{1},\widetilde
{\sigma}^{2},\widetilde{\sigma}^{3}\right)  $ results in capture.

\item If $\left(  s_{0},\widetilde{\sigma}^{1},\widetilde{\sigma}%
^{2},\widetilde{\sigma}^{3}\right)  $ results in capture \emph{of} $P_{2}$
then, for every $\sigma^{2}$, $\left(  s_{0},\widetilde{\sigma}^{1},\sigma
^{2},\widetilde{\sigma}^{3}\right)  $ results in capture \emph{of} $P_{2}$.

\item If $\left(  s_{0},\widetilde{\sigma}^{1},\widetilde{\sigma}%
^{2},\widetilde{\sigma}^{3}\right)  $ results in capture \emph{by} $P_{2}$
then for every $\sigma^{1},\sigma^{3}$, $\left(  s_{0},\sigma^{1}%
,\widetilde{\sigma}^{2},\sigma^{3}\right)  $ results in capture \emph{by}
$P_{2}$.
\end{enumerate}
\end{lemma}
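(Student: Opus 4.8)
The plan is to define $\widetilde{\sigma}$ explicitly by ``geodesic-directed'' moves that generalize the path strategies $\overline{\sigma}$ of Theorem~\ref{prop0509}, and then to reduce the tree game to the path game using the median structure of Definition~\ref{prop0508}. Concretely, for a state $s=(x^1,x^2,x^3,p)$ let $m=m(x^1,x^2,x^3)$ be the median vertex. I let $\widetilde{\sigma}^1$ move $P_1$ to the neighbour of $x^1$ on the unique geodesic toward $x^2$; I let $\widetilde{\sigma}^3$ move $P_3$ to the neighbour of $x^3$ that increases $d(x^2,x^3)$ (retreat from $P_2$); and I let $\widetilde{\sigma}^2$ move $P_2$ along the geodesic toward $x^3$ when $x^2$ is (or becomes) the median, and otherwise retreat from $P_1$, mirroring the three median cases in the proof of Theorem~\ref{prop0509}. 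A fixed neighbour-ordering breaks ties, so all three maps depend only on $(x^1,x^2,x^3)$ and are positional; since every ``toward'' and ``away'' move is along tree geodesics, each map is well defined and legal.

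For Property 1 the idea is to show these moves collinearise the configuration and then invoke Theorem~\ref{prop0509}. By the median facts, the mutual geodesics of $x^1,x^2,x^3$ form either a path (collinear) or a tripod centred at $m$; in the tripod case the directed moves strictly decrease $d(x^1,m)+d(x^2,m)$ while $m$ is unchanged, so after finitely many turns one pursuer reaches $m$ and the configuration becomes collinear, with the arriving player as the new median. Once collinear the geometry is exactly that of Theorem~\ref{prop0509}: the designated pursuer drives its target toward a leaf, where, the tree being finite, it is cornered and a capture is forced. Hence $(s_0,\widetilde{\sigma})$ always captures.

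Properties 2 and 3 I would obtain from a betweenness/separation invariant together with the cop-win property of trees. For Property 3, assume the reference play captures \emph{by} $P_2$; the claim is that $\widetilde{\sigma}^2$ keeps $P_2$ on the geodesic separating $P_1$ from $P_3$. On a tree, moving toward $P_3$ preserves being ``between'', and $P_3$ cannot cross $P_2$ without being captured, so $P_1$ can never reach $P_2$ no matter how $\sigma^1,\sigma^3$ are chosen; meanwhile $\widetilde{\sigma}^2$ is the classical cop strategy, which captures the robber $P_3$ on any tree. This yields capture by $P_2$ for every $(\sigma^1,\sigma^3)$. Property 2 is the mirror image: when the reference captures \emph{of} $P_2$, the pair $(\widetilde{\sigma}^1,\widetilde{\sigma}^3)$ keeps the separating player (namely $P_1$, in the $x^1$-median regime) on the geodesic between $P_2$ and $P_3$, so $P_2$ can never reach $P_3$, while $\widetilde{\sigma}^1$, a cop on the tree, corners $P_2$; hence capture of $P_2$ for every $\sigma^2$. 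Equivalently, these invariants show that $\widetilde{\sigma}$ is optimal in the zero-sum game $\widetilde{\Gamma}_3^2$, whence 2 and 3 also follow by reading the sign of the value $V$ from Lemma~\ref{prop0501}: since $V>0$, $V<0$, $V=0$ correspond respectively to capture by $P_2$, capture of $P_2$, and no capture, the sign of $V$ is robust to unilateral deviation.

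The main obstacle is the separation invariant under adversarial deviation combined with the turn order. The delicate point is that $\widetilde{\sigma}^1$ (pursue $P_2$) and $\widetilde{\sigma}^2$ (pursue $P_3$) are defined relative to the \emph{current} positions, so when the opponent deviates the geodesics, and the median, shift; one must verify that ``the guarding player stays between the other two'' is preserved move-by-move, in particular through the $P_1\!\to\!P_2\!\to\!P_3$ sub-step ordering, and that the cornering at a leaf completes \emph{before} the capture race can flip. I expect the clean route to both the invariant and to a finite cornering bound to be induction on the tree via leaf-retraction, i.e.\ the standard proof that trees are cop-win: retract a leaf onto its neighbour, apply the inductive guarantee in the smaller tree, and lift the strategy back. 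This simultaneously supplies the finite cornering time needed for Property 1 and the preservation of separation needed for Properties 2 and 3.
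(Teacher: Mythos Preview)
Your strategy differs from the paper's in one crucial respect: you have $P_3$ (and $P_2$) move \emph{away} from the median in the non-collinear phase, whereas the paper sends \emph{all three} players toward the median until collinearity is reached. This difference is what breaks Property~2 for your profile.

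First, a minor point: the potential $d(x^1,m)+d(x^2,m)$ is not strictly decreasing under your moves. In the tripod regime your $P_2$ retreats from $P_1$, i.e.\ moves \emph{away} from $m$, so $d(x^2,m)$ goes up while $d(x^1,m)$ goes down; the sum is constant. (Collinearisation still occurs, but only because $d(x^1,m)$ alone decreases.)

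The real gap is Property~2. Take a tripod with centre $m$ and three long branches $A,B,C$; place $P_1$ far out on $A$ (say distance $10$), $P_2$ close on $B$ (distance $1$), $P_3$ far on $C$. Under your $\widetilde{\sigma}$, only $P_1$ approaches $m$, so $P_1$ eventually becomes the median and the reference play captures $P_2$. Now let $P_2$ deviate by heading straight for $m$ and then chasing $P_3$ along $C$. Your $\widetilde{\sigma}^3$ has $P_3$ retreating from $P_2$, and your $\widetilde{\sigma}^1$ has $P_1$ chasing $P_2$; once $P_2$ passes $m$, both $d(P_1,P_2)$ and $d(P_2,P_3)$ stay constant round by round until $P_3$ hits a leaf---at which point $P_2$ captures $P_3$ while $P_1$ is still five steps behind. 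Thus $(\widetilde{\sigma}^1,\sigma^2,\widetilde{\sigma}^3)$ results in capture \emph{by} $P_2$, contradicting Property~2. Your separation invariant (``$P_1$ stays between $P_2$ and $P_3$'') never gets started here: in the tripod phase nobody is between, and $P_2$ reaches $m$ before $P_1$ does.

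What makes the paper's construction work is precisely that $\widetilde{\sigma}^2$ is ``race to the median'' in the tripod phase, and $\widetilde{\sigma}^3$ also races to $m$ rather than fleeing. Then the outcome of the reference play is dictated by who is closest to $m$, and since $P_2$'s deviation can only make him reach $m$ \emph{no sooner}, he cannot flip a capture-of-$P_2$ outcome into a capture-by-$P_2$ outcome. The paper carries this out via an explicit case analysis on the distances $d(x^i_{t_0},m_{t_0})$ at the moment the first player is one step from $m$; this bookkeeping is tedious but is exactly what certifies Properties~2 and~3. Your leaf-retraction idea does not supply this: retraction handles a single cop--robber pair, but the content of the lemma is the interaction of the two pursuits through the shared median, and for that you need $P_3$ to play an active blocking role, not merely evade.
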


\begin{proof}
A rough description of $\widetilde{\sigma}\ $is quite simple:\ each player
tries to reach the median as fast as possible; as soon as this happens the
players are collinear and they start playing their path strategies. We next
give a (straightforward but rather tedious)\ rigorous proof. In what follows,
we denote the median of $x_{t}^{1},x_{t}^{2},x_{t}^{3}\ $ by $m_{t}$.

\bigskip

\noindent For Part 1 of the theorem, we distinguish two cases.

\noindent\underline{\textbf{Case A}}\textbf{.} Suppose that at some time $t$
the game state is $\left(  x_{t}^{1},x_{t}^{2},x_{t}^{3},p\right)  $ where one
of $x_{t}^{1},x_{t}^{2},x_{t}^{3}$ is the median of the other two (they are
collinear). In this case $\widetilde{\sigma}=\overline{\sigma}$, i.e., the
players use the path strategies of Theorem \ref{prop0509}. It is easily
checked that the players remain collinear for the rest of the game and a
capture results.

\noindent\underline{\textbf{Case B}}\textbf{.} Suppose that $x_{0}^{1}%
,x_{0}^{2},x_{0}^{3}$ are not collinear. The initial part of $\widetilde
{\sigma}^{1},\widetilde{\sigma}^{2},\widetilde{\sigma}^{3}$ prescribes that
every player moves directly towards the median $m_{t}$. As a result, let
$t_{0}$ denote the first time when a (single) player $P_{n}$ is at distance 1
from $m_{t}$, as depicted in Figures \ref{fig0503a} and \ref{fig0504} (in the
figures we only show the subtree of $G$ which is defined by the positions of
$P_{1}$, $P_{2}$ and $P_{3}$; dotted lines indicate paths of length one or
more). We will now define $\widetilde{\sigma}^{1},\widetilde{\sigma}%
^{2},\widetilde{\sigma}^{3}$ depending on which $P_{n}$ first reaches $m_{t}$;
when some move is not specified, the respective strategy can be defined arbitrarily.

\begin{enumerate}
\item Suppose $P_{n}=P_{1}$, i.e. at $t_{0}$ we have $d\left(  x_{t_{0}}%
^{1},m_{t_{0}}\right)  =1$, as shown in Fig. \ref{fig0503a}.a. $P_{3}$ stays
in place at $t_{0}+2$, $P_{1}$ enters $m_{t}$ at $t_{0}+3$ and the players
become collinear. Now every player starts using his path strategy. It is easy
to check that this results in capture of $P_{2}$.\begin{figure}[ptb]
\begin{center}
\begin{tikzpicture}
\tikzstyle{every node}=[draw,shape=circle];
\node[label=below:$P_1$](v1) at (0,0){};
\node(v1) at (0,0){};
\node[label=above:$m_{t_0}$](v2)   at (1,0){};
\node[label=below:$P_3$](v3) at (1,-1){};
\node[label=below:$P_2$](v4) at (2,0){};
\draw[line width=0.5mm](v1) -- (v2);
\draw[line width=0.5mm,dotted](v2) -- (v3);
\draw[line width=0.5mm,dotted](v2) -- (v4);
\node[draw=none,fill=none] at (-0.7,-1.7) {\textbf{(a)}};
\draw (-1,1) -- (3,1) -- (3,-2) -- (-1,-2) -- (-1,1);
\end{tikzpicture}
\begin{tikzpicture}
\tikzstyle{every node}=[draw,shape=circle];
\node[label=below:$P_1$](v1) 	at (-0.6,0){};
\node[label=above:$a$](v1a) at (0.2,0){};
\node[label=above:$m_{t_0}$](v2)   	at (1,0){};
\node[label=below:$P_3$](v3) 	at (1,-1){};
\node[label=below:$P_2$](v4) 	at (2,0){};
\node(v4) 	at (2,0){};
\draw[line width=0.5mm](v1) -- (v1a);
\draw[line width=0.5mm](v1a) -- (v2);
\draw[line width=0.5mm,dotted](v2) -- (v3);
\draw[line width=0.5mm](v2) -- (v4);
\node[draw=none,fill=none] at (-0.7,-1.7) {\textbf{(b)}};
\draw (-1,1) -- (3,1) -- (3,-2) -- (-1,-2) -- (-1,1);
\end{tikzpicture}
\begin{tikzpicture}
\tikzstyle{every node}=[draw,shape=circle];
\node[label=below:$P_1$](v1) at (-0.6,0){};
\node[label=above:$a$](v1a) at (0.2,0){};
\node[label=above:$b$](v1b) at (1.0,0){};
\node[label=above:$m_{t_0}$](v2)   at (1.8,0){};
\node[label=below:$P_3$](v3) at (1.8,-1){};
\node[label=below:$P_2$](v4) at (2.6,0){};
\draw[line width=0.5mm,dotted](v1) -- (v1a);
\draw[line width=0.5mm](v1a) -- (v1b);
\draw[line width=0.5mm](v1b) -- (v2);
\draw[line width=0.5mm,dotted](v2) -- (v3);
\draw[line width=0.5mm](v2) -- (v4);
\node[draw=none,fill=none] at (-0.7,-1.7) {\textbf{(c)}};
\draw (-1,1) -- (3,1) -- (3,-2) -- (-1,-2) -- (-1,1);
\end{tikzpicture}
\end{center}
\caption{Possible placements of $P_{1},P_{2},P_{3}$ in the proof of Theorem
\ref{prop0510}.}%
\label{fig0503a}%
\end{figure}

\item Suppose $P_{n}=P_{2}$, i.e., at $t_{0}$ we have $d\left(  x_{t_{0}}%
^{2},m_{t_{0}}\right)  =1$. Now we must distinguish two sub-cases.

\begin{enumerate}
\item Say $d\left(  x_{t_{0}}^{1},m_{t_{0}}\right)  =2$, as in Fig.
\ref{fig0503a}.b. $P_{3}$ stays in place at $t_{0}+1$ and $t_{0}+4$, $P_{1}$
moves to $a$ at $t_{0}+2$ and to\ $m_{t}$ at $t_{0}+5$, when the players
become collinear and start playing their path strategies; eventually $P_{2}$
is captured.

\item Say $d\left(  x_{t_{0}}^{1},m_{t_{0}}\right)  \geq3$, as in Fig.
\ref{fig0503a}.c. $\ P_{2}$ enters $m_{t}$ at time $t_{0}+3$, the players
become collinear and start playing their path strategies; eventually $P_{3}$
is captured.
\end{enumerate}

\item Suppose $P_{n}=P_{3}$, i.e. at $t_{0}$ we have $d\left(  x_{t_{0}}%
^{3},m_{t_{0}}\right)  =1$. Now we must distinguish four sub-cases.

\begin{figure}[ptbh]
\begin{center}
\begin{tikzpicture}
\tikzstyle{every node}=[draw,shape=circle];
\node(v1) at (0.0,0){};
\node[draw=none](w1) at (0.0,0.4){$P_1$};
\node(v2) at (0.6,0){};
\node[draw=none](w2) at (0.6,0.4){$a$};
\node(v3)  at (1.2,0){};
\node[draw=none](w3)   at (1.2,0.4){$m_{t_0}$};
\node(v4)  at (1.8,0.0){};
\node[draw=none](w4) at (1.8,0.4){$u$};
\node(v5) at (2.4,0){};
\node[draw=none](w5) at (2.4,0.4){$P_2$};
\node(v6)  at (1.2,-1.0){};
\node[draw=none](w6) at (1.2,-1.5){$P_3$};
\draw[line width=0.5mm](v1) -- (v2);
\draw[line width=0.5mm](v2) -- (v3);
\draw[line width=0.5mm](v3) -- (v4);
\draw[line width=0.5mm](v4) -- (v5);
\draw[line width=0.5mm](v3) -- (v6);
\node[draw=none,fill=none] at (0.1,-1.8) {\textbf{(a)}};
\draw (-0.3,1.0) -- (2.6,1.0) -- (2.6,-2.0) -- (-0.2,-2.0) -- (-0.3,1.0);
\end{tikzpicture}
\begin{tikzpicture}
\tikzstyle{every node}=[draw,shape=circle];
\node(v1) at (0.0, 0.0){};
\node[draw=none](w1) at (0.0, 0.5){$P_1$};
\node(v2) at (0.6, 0.0){};
\node[draw=none](w2) at (0.6, 0.5){$a$};
\node(v3) at (1.2, 0.0){};
\node[draw=none](w3) at (1.2, 0.5){$b$};
\node(v4) at (1.8, 0.0){};
\node[draw=none](w4) at (1.8, 0.5){$m_{t_0}$};
\node(v5) at (2.4, 0.0){};
\node[draw=none](w5) at (2.4, 0.5){$u$};
\node(v6) at (3.0, 0.0){};
\node[draw=none](w6) at (3.0, 0.5){$P_2$};
\node(v7) at (1.8,-1.0){};
\node[draw=none](w7a) at (1.8,-1.5){$P_3,$};
\draw[line width=0.5mm](v1) -- (v2);
\draw[line width=0.5mm](v2) -- (v3);
\draw[line width=0.5mm](v3) -- (v4);
\draw[line width=0.5mm](v4) -- (v5);
\draw[line width=0.5mm](v5) -- (v6);
\draw[line width=0.5mm](v4) -- (v7);
\node[draw=none,fill=none] at (0.1,-1.8) {\textbf{(b)}};
\draw (-0.3,1.0) -- (3.2,1.0) -- (3.2,-2.0) -- (-0.3,-2.0) -- (-0.3,1.0);
\end{tikzpicture}
\begin{tikzpicture}
\tikzstyle{every node}=[draw,shape=circle];
\node(v1) at (0.0, 0.6){};
\node[draw=none](w1) at (0.0, 1.1){$P_1$};
\node(v2) at (0.6, 0.6){};
\node[draw=none](w2) at (0.6, 1.1){$a$};
\node(v3) at (1.2, 0.6){};
\node[draw=none](w3) at (1.2, 1.1){$b$};
\node(v4) at (1.8, 0.6){};
\node[draw=none](w4) at (1.8, 1.1){$c$};
\node(v5) at (2.4, 0.6){};
\node[draw=none](w5) at (2.4, 1.1){$m_{t_0}$};
\node(v6) at (3.0, 0.6){};
\node[draw=none](w6) at (3.0, 1.1){$u$};
\node(v7) at (3.6, 0.6){};
\node[draw=none](w7) at (3.6, 1.1){$P_2$};
\node(v8) at (2.4,-0.4){};
\node[draw=none](w8a) at (2.4,-1.1){$P_3$};
\draw[line width=0.5mm,dotted](v1) -- (v2);
\draw[line width=0.5mm](v2) -- (v3);
\draw[line width=0.5mm](v3) -- (v4);
\draw[line width=0.5mm](v4) -- (v5);
\draw[line width=0.5mm](v5) -- (v6);
\draw[line width=0.5mm](v6) -- (v7);
\draw[line width=0.5mm](v5) -- (v8);
\node[draw=none,fill=none] at (0.0,-1.3) {\textbf{(c)}};
\draw (-0.3,1.45) -- (4.0,1.45) -- (4.0,-1.50) -- (-0.3,-1.50) -- (-0.3,1.45);
\end{tikzpicture}
\begin{tikzpicture}
\tikzstyle{every node}=[draw,shape=circle];
\node(v1) at(-0.2,0.6){};
\node[draw=none](w1) at (-0.2,1.1){$P_1$};
\node(v2) at (0.6,0.6){};
\node[draw=none](w2) at ( 0.6,1.1){$m_{t_0}$};
\node(v3)  at (1.2,0.6){};
\node[draw=none](w2)   at (1.2,1.1){$v$};
\node(v4)  at (1.8, 0.6){};
\node[draw=none](w3) at (1.8,1.1){$u$};
\node(v5) at (2.8,0.6){};
\node[draw=none](w5a) at (2.8,1.1){$P_2,$};
\node(v6)  at (0.6,-0.4){};
\node[draw=none](w6a) at (0.6,-1.1){$P_3,$};
\draw[line width=0.5mm,dotted](v1) -- (v2);
\draw[line width=0.5mm](v2) -- (v3);
\draw[line width=0.5mm](v3) -- (v4);
\draw[line width=0.5mm,dotted](v4) -- (v5);
\draw[line width=0.5mm](v2) -- (v6);
\node[draw=none,fill=none] at (-0.2,-1.3) {\textbf{(d)}};
\draw (-0.5,1.50) -- (3.4,1.50) -- (3.4,-1.45) -- (-0.5,-1.45) -- (-0.5,1.50);
\end{tikzpicture}
\end{center}
\caption{Possible placements of $P_{1},P_{2},P_{3}$ in the proof of Theorem
\ref{prop0510}.}%
\label{fig0504}%
\end{figure}

\begin{enumerate}
\item Say $d\left(  x_{t}^{1},m_{t}\right)  =2$, $d\left(  x_{t}^{2}%
,m_{t}\right)  =2$, as in Fig. \ref{fig0504}.a. $P_{1}$ enters $a$ at
$t_{0}+1$, $P_{2}$ enters $u$ at $t_{0}+2$ and $P_{3}$ stays in place at
$t_{0}+3$. At $t_{0}+4$ $P_{1}$ enters $m_{t}$, the players become collinear
and start playing their path strategies; eventually $P_{2}$ is captured.

\item Say $d\left(  x_{t}^{1},m_{t}\right)  =3$, $d\left(  x_{t}^{2}%
,m_{t}\right)  =2$, as in Fig. \ref{fig0504}.b. $P_{1}$ enters $a$ at
$t_{0}+1$, $P_{2}$ enters $u$ at $t_{0}+2$ and $P_{3}$ stays in place at
$t_{0}+3$. At $t_{0}+4$ $P_{1}$ enters $b$ and at $t_{0}+7$ he enters $m_{t}$,
the players become collinear and start playing their path strategies;
eventually $P_{2}$ is captured.

\item Say $d\left(  x_{t}^{1},m_{t}\right)  \geq4$, $d\left(  x_{t}^{2}%
,m_{t}\right)  =2$, as in Fig. \ref{fig0504}.c. $P_{2}$ enters $u$ at
$t_{0}+2$ and $m_{t}$ at $t_{0}+5$, at $\ $which time the players become
collinear and start playing their path strategies; eventually $P_{3}$ is captured.

\item Say $d\left(  x_{t}^{2},m_{t}\right)  \geq3$, as in Fig. \ref{fig0504}%
.d. At $t_{0}+3$ $P_{3}$ enters $m_{t}$ , the players become collinear and
start playing their path strategies; eventually $P_{2}$ is
captured.\FloatBarrier

\end{enumerate}
\end{enumerate}

\noindent\noindent This completes the description of $\widetilde{\sigma
}=\left(  \widetilde{\sigma}^{1},\widetilde{\sigma}^{2},\widetilde{\sigma}%
^{3}\right)  $ and it is readily seen that it always leads to capture; so the
first part of the theorem has been proved.

\bigskip

\noindent For Part 2 of the theorem, assume that $\left(  s_{0},\widetilde
{\sigma}^{1},\widetilde{\sigma}^{2},\widetilde{\sigma}^{3}\right)  $ leads to
$P_{2}$ capture; there are two ways for this to happen. Either the players are
collinear in $s_{0}$ and $P_{2}$ is \emph{not} in the middle; in this case
$P_{2}$ is captured for every $\sigma^{2}$ he uses. Or the players are not
collinear in $s_{0}$ but eventually reach one of cases 1, 2.a, 3.a, 3.b, 3.d;
in this case, if $P_{2}$ uses a $\sigma^{2}$ which deviates from
$\widetilde{\sigma}^{2}$, he will approach $m_{t}$ no faster than if he used
$\widetilde{\sigma}^{2}$ and a straightforward examination of cases 1, 2.a,
3.a, 3.b, 3.d\ shows that $\left(  s_{0},\widetilde{\sigma}^{1},\sigma
^{2},\widetilde{\sigma}^{3}\right)  $ will also lead to capture of $P_{2}$.

\bigskip

\noindent The proof of Part 3 is similar to that of Part 2 and hence omitted.
\end{proof}

Now we can expand Theorem \ref{prop0509} from paths to trees.

\begin{theorem}
\label{prop0511}If $G$ is a tree then, for any $s_{0}$, every strategy profile
$\widehat{\sigma}=\left(  \widehat{\sigma}^{1},\widehat{\sigma}^{2}%
,\widehat{\sigma}^{3}\right)  $ which is optimal in $\widetilde{\Gamma}%
_{3}^{2}\left(  G|s_{0}\right)  $ is capturing.
\end{theorem}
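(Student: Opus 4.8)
The plan is to derive the theorem as a zero-sum corollary of Lemma \ref{prop0510}, reducing the whole statement to showing that the \emph{value} of $\widetilde{\Gamma}_3^2(G|s_0)$ is nonzero. First I would record the decisive elementary fact that in $\widetilde{\Gamma}_3^2(G|s_0)$ the payoff $\widetilde{Q}^2(s_0,\sigma)=Q^2(s_0,\sigma)$ equals $-\gamma^{T_C}$ when $P_1$ captures $P_2$, equals $+\gamma^{T_C}$ when $P_2$ captures $P_3$ (i.e. $\mathbf{K}_3=2$), and equals $0$ exactly when no capture occurs. Thus a profile is capturing if and only if its payoff is nonzero. By Lemma \ref{prop0501} the game has a value $V$ and an optimal positional profile $\widehat{\sigma}=(\widehat{\sigma}^1,\widehat{\sigma}^2,\widehat{\sigma}^3)$, and a standard zero-sum argument gives $\widetilde{Q}^2(s_0,\widehat{\sigma})=V$. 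Consequently every optimal profile is capturing precisely when $V\neq 0$, so it suffices to prove $V\neq 0$.

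To establish $V\neq 0$ I would argue by contradiction, assuming $V=0$, and invoke Lemma \ref{prop0510} to produce the positional profile $\widetilde{\sigma}$, which by Part 1 is capturing for this particular $s_0$; I then split on the type of capture. If $(s_0,\widetilde{\sigma})$ is a capture \emph{by} $P_2$, then Part 3 guarantees that when $P_2$ plays $\widetilde{\sigma}^2$ the capture of $P_3$ occurs against \emph{every} opponent play, in particular against the optimal, positional $\widehat{\sigma}^{-2}=(\widehat{\sigma}^1,\widehat{\sigma}^3)$. Since $\widetilde{\sigma}^2$ and $\widehat{\sigma}^{-2}$ are both positional and the state space is finite, this play reaches the capture at some finite time $T'$, so $\widetilde{Q}^2(s_0,\widehat{\sigma}^1,\widetilde{\sigma}^2,\widehat{\sigma}^3)=\gamma^{T'}>0$. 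This contradicts the optimality of $\widehat{\sigma}^{-2}$, which forces $\max_{\sigma^2}\widetilde{Q}^2(s_0,\sigma^2,\widehat{\sigma}^{-2})=V=0$.

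Symmetrically, if $(s_0,\widetilde{\sigma})$ is a capture \emph{of} $P_2$, then Part 2 guarantees that when $P_{-2}$ plays $(\widetilde{\sigma}^1,\widetilde{\sigma}^3)$ the capture of $P_2$ occurs against every $\sigma^2$, in particular against the optimal $\widehat{\sigma}^2$, yielding $\widetilde{Q}^2(s_0,\widetilde{\sigma}^1,\widehat{\sigma}^2,\widetilde{\sigma}^3)=-\gamma^{T''}<0$ for some finite $T''$; this contradicts $\min_{\sigma^{-2}}\widetilde{Q}^2(s_0,\widehat{\sigma}^2,\sigma^{-2})=V=0$. Either branch yields a contradiction, so $V\neq 0$, and therefore every optimal profile is capturing.

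I expect the only delicate point to be the bookkeeping in the reduction: one must verify that ``capture by $P_2$'' genuinely corresponds to the case $\mathbf{K}_3=2$ with strictly positive payoff (the captured state lying in $\widetilde{S}^2\backslash\widetilde{S}^1$) and ``capture of $P_2$'' to a strictly negative payoff, and that once the deviating strategy and the fixed optimal response are both positional the capture time is an actual finite number, so the payoff is bounded away from $0$ rather than a mere infimum approaching it. All the heavy combinatorial work — constructing on a tree a positional profile that is robustly capturing in the senses of Parts 1--3 — has already been discharged in Lemma \ref{prop0510}, so the present theorem requires essentially no further graph-theoretic argument beyond this zero-sum translation.
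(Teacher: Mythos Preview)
Your proposal is correct and follows essentially the same route as the paper: both proofs use Parts~2 and~3 of Lemma~\ref{prop0510} to show that one side of $\widetilde{\Gamma}_3^2(G|s_0)$ can force a strictly signed payoff regardless of the opponent's play, which pins the value away from $0$ and hence makes every optimal profile capturing. The only cosmetic difference is that the paper argues directly (bounding $\min_{\sigma^{-2}}\max_{\sigma^2}\widetilde{Q}^2$ by $\max_{\sigma^2}\widetilde{Q}^2(s_0,\widetilde{\sigma}^1,\sigma^2,\widetilde{\sigma}^3)<0$ in the capture-of-$P_2$ branch, and dually in the other), whereas you frame the same inequality as a contradiction with $V=0$.
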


\begin{proof}
Let $\widehat{\sigma}=\left(  \widehat{\sigma}^{1},\widehat{\sigma}%
^{2},\widehat{\sigma}^{3}\right)  $ be an optimal (for any $s_{0}$)\ strategy
profile and take some $s_{0}$ such that $\left(  s_{0},\widetilde{\sigma}%
^{1},\widetilde{\sigma}^{2},\widetilde{\sigma}^{3}\right)  $ leads to capture
of $P_{2}$. Then, by Part 2 of Lemma \ref{prop0510}, we have%
\[
\forall\sigma^{2}:\widetilde{Q}^{2}\left(  s_{0},\widetilde{\sigma}^{1}%
,\sigma^{2},\widetilde{\sigma}^{3}\right)  <0
\]
and so%
\[
\widetilde{Q}^{2}\left(  s_{0},\widehat{\sigma}^{1},\widehat{\sigma}%
^{2},\widehat{\sigma}^{3}\right)  =\min_{\sigma^{1},\sigma^{3}}\max
_{\sigma^{2}}\widetilde{Q}^{2}\left(  s_{0},\sigma^{1},\sigma^{2},\sigma
^{3}\right)  \leq\max_{\sigma^{2}}\widetilde{Q}^{2}\left(  s_{0}%
,\widetilde{\sigma}^{1},\sigma^{2},\widetilde{\sigma}^{3}\right)  <0;
\]
in other words, $\left(  s_{0},\widehat{\sigma}^{1},\widehat{\sigma}%
^{2},\widehat{\sigma}^{3}\right)  $ leads to capture of $P_{2}$. Similarly, by
Part 3 of Lemma \ref{prop0510} we can prove that, for every $s_{0}$ such that
$\left(  s_{0},\widetilde{\sigma}^{1},\widetilde{\sigma}^{2},\widetilde
{\sigma}^{3}\right)  $ leads to capture of $P_{3}$, the same holds for
$\left(  s_{0},\widehat{\sigma}^{1},\widehat{\sigma}^{2},\widehat{\sigma}%
^{3}\right)  $. Since, by Part 1 of Lemma \ref{prop0510}, for every $s_{0}$,
$\left(  s_{0},\widetilde{\sigma}^{1},\widetilde{\sigma}^{2},\widetilde
{\sigma}^{3}\right)  $ leads to capture (of either $P_{2}$ or $P_{3}$) we have
proved the theorem.
\end{proof}

Now we return to the three-player game $\Gamma_{3}\left(  G|s_{0}\right)  $
and show that: if $G$ is a tree, then $\Gamma_{3}\left(  G|s_{0}\right)  $ has
a capturing NE for every initial state $s_{0}$ (hence, while the converse of
Theorem \ref{prop0505} does not hold for every cop-win graph, it holds for the
special case of trees).

\begin{theorem}
\label{prop0512}If $G$ is a tree, then%
\[
\forall s_{0}\text{ there exists a NE }\widehat{\sigma}\text{ of }\Gamma
_{3}\left(  G|s_{0}\right)  :\mathbf{K}_{3}\left(  G|s_{0},\widehat{\sigma
}\right)  >0.
\]

\end{theorem}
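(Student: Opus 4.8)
The plan is to produce the required capturing NE explicitly by combining the two structural results already established for the auxiliary zero-sum game $\widetilde{\Gamma}_{3}^{2}\left(  G|s_{0}\right)  $, rather than constructing anything new from scratch. The entire argument rests on a single external fact together with Theorems \ref{prop0507} and \ref{prop0511}.

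First I would record the classical observation that every tree is \emph{cop-win}, i.e., $c\left(  G\right)  =1$ whenever $G$ is a tree. (Intuitively, a single cop can always decrease the distance to the robber by moving toward the unique subtree containing the robber; this is well known from the CR literature, e.g.\ \cite{nowakowski1983}.) This places us in the regime where the previous theorems are applicable.

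Next, by Lemma \ref{prop0501}, the auxiliary game $\widetilde{\Gamma}_{3}^{2}\left(  G|s_{0}\right)  $ has a value and both players possess optimal deterministic positional strategies; hence there exists an optimal profile $\widehat{\sigma}=\left(  \widehat{\sigma}^{1},\widehat{\sigma}^{2},\widehat{\sigma}^{3}\right)  $. I would then apply the two structural theorems to this \emph{same} profile. On the one hand, since $c\left(  G\right)  =1$, Theorem \ref{prop0507} guarantees that $\widehat{\sigma}$ is a NE of $\Gamma_{3}\left(  G|s_{0}\right)  $. On the other hand, since $G$ is a tree, Theorem \ref{prop0511} guarantees that $\widehat{\sigma}$ is capturing, i.e.\ $\mathbf{K}_{3}\left(  G|s_{0},\widehat{\sigma}\right)  >0$. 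Both theorems are stated for \emph{every} optimal profile, so there is no obstruction in requiring the two conclusions to hold for one and the same $\widehat{\sigma}$: any single optimal profile inherits both properties. This yields a capturing NE for the arbitrarily chosen $s_{0}$, and since $s_{0}$ was arbitrary the theorem follows.

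There is essentially no serious obstacle here, because the heavy lifting has already been done in Lemma \ref{prop0510} and Theorems \ref{prop0507}, \ref{prop0511}; the present statement is a clean corollary assembled from them. The only point requiring care is the invocation of the cop-win property of trees, which is the lone ingredient not proved earlier in the paper, and the (trivial but worth stating) remark that both prior theorems quantify over \emph{all} optimal profiles, so a common witness exists automatically.
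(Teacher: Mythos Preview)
Your proposal is correct and follows essentially the same route as the paper's own proof: take an optimal profile of $\widetilde{\Gamma}_{3}^{2}(G|s_{0})$, invoke Theorem~\ref{prop0511} to conclude it is capturing, and invoke Theorem~\ref{prop0507} (using $c(G)=1$ for trees) to conclude it is a NE of $\Gamma_{3}(G|s_{0})$. The paper additionally remarks that capturing in $\widetilde{\Gamma}_{3}^{2}(G|s_{0})$ transfers to $\Gamma_{3}(G|s_{0})$ because the two games are played by the same rules, a point you leave implicit but which is immediate.
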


\begin{proof}
Since $G$ is a tree, every optimal profile $\widehat{\sigma}=\left(
\widehat{\sigma}^{1},\widehat{\sigma}^{2},\widehat{\sigma}^{3}\right)  $ of
$\widetilde{\Gamma}_{3}^{2}\left(  G|s_{0}\right)  $ is capturing in both
$\widetilde{\Gamma}_{3}^{2}\left(  G|s_{0}\right)  $ (by Theorem
\ref{prop0511}); and in $\Gamma_{3}\left(  G|s_{0}\right)  $ (since the two
games are played by the same rules). Hence $\mathbf{K}_{3}\left(
G|s_{0},\widehat{\sigma}\right)  >0$. And $\widehat{\sigma}\ $is a NE of
$\Gamma_{3}\left(  G|s_{0}\right)  $ by Theorem \ref{prop0507} (since every
tree $G$ has $c\left(  G\right)  =1$).
\end{proof}

We conclude this section with a result for graphs which are not cop-win.

\begin{theorem}
\label{prop0513}$c\left(  G\right)  >1\Rightarrow\left(  \exists s_{0}%
:\Gamma_{3}\left(  G|s_{0}\right)  \text{ has a noncapturing NE }%
\widehat{\sigma}\right)  $
\end{theorem}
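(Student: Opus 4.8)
The plan is to build, for a well-chosen $s_0$, an explicit noncapturing profile $\overline{\sigma}=(\overline{\sigma}^1,\overline{\sigma}^2,\overline{\sigma}^3)$ in which $P_2$ perpetually evades $P_1$ \emph{and} $P_3$ perpetually evades $P_2$, where both evasions are carried out by escape strategies borrowed from the two-player game $\Gamma_2$. The starting observation is that $c(G)>1$ supplies such escape configurations. Since $q^1\in\{0,1\}$ we have $u^1\geq0$, so (\ref{eq0401}) forces $\max_{x^1}\min_{x^2}u^1((x^1,x^2,1))=0$; as every term is nonnegative, for \emph{every} cop vertex there is a robber vertex from which the two-player value is $0$, i.e.\ (by Theorem \ref{prop0401}) from which the evader has a positional strategy guaranteeing $T_C=\infty$ against every pursuer strategy. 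So I would first fix any $x^1$, then pick $x^2$ with an escape strategy $e^2$ for $P_2$ against $P_1$ from $(x^1,x^2)$; applying the same fact to the cop vertex $x^2$, pick $x^3$ with an escape strategy $e^3$ for $P_3$ against $P_2$ from $(x^2,x^3)$; and set $s_0=(x^1,x^2,x^3,1)$. Because escape precludes immediate capture, $x^1\neq x^2$ and $x^2\neq x^3$, so $s_0\in S_{NC}$.

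Next I would lift these two-player strategies to $\Gamma_3(G|s_0)$: define $\overline{\sigma}^2$ by $\overline{\sigma}^2(x^1,x^2,x^3,2)=e^2(x^1,x^2,2)$ (ignoring $x^3$), define $\overline{\sigma}^3$ by $\overline{\sigma}^3(x^1,x^2,x^3,3)=e^3(x^2,x^3,3)$ (ignoring $x^1$), and let $\overline{\sigma}^1$ be arbitrary (say $P_1$'s optimal pursuit). The crucial lemma to verify is that these lifts \emph{still} guarantee evasion in $\Gamma_3$. The point is that in $\Gamma_3$ the pair $(x^1,x^2)$ is frozen on every $P_3$-turn and changes on $P_1$- and $P_2$-turns exactly as in $\Gamma_2$; hence any play of $\Gamma_3$ in which $P_2$ uses $\overline{\sigma}^2$ restricts, on the subsequence of turns that alter $(x^1,x^2)$, to a legal play of the two-player game in which $P_2$ uses $e^2$, so $x^1_t\neq x^2_t$ for all $t$ regardless of the other players' choices. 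Symmetrically $\overline{\sigma}^3$ forces $x^2_t\neq x^3_t$ for all $t$ against every play of $P_1,P_2$. This interleaving argument, routine but needing care about the idle $P_3$- and $P_1$-turns, is the one step I expect to be the main obstacle.

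Finally I would check the three equilibrium inequalities using the turn payoffs (\ref{eq0501a}). On the equilibrium path both evasions hold at once, so no capture ever occurs, $Q^n(s_0,\overline{\sigma})=0$ for all $n$, and $\mathbf{K}_3(G|s_0,\overline{\sigma})=0$. For $P_1$: since $\overline{\sigma}^2$ evades \emph{every} $P_1$ strategy, $x^1_t\neq x^2_t$ always, so no state in $\widetilde{S}^1$ is reached and $Q^1(s_0,\sigma^1,\overline{\sigma}^2,\overline{\sigma}^3)=0$ for every $\sigma^1$, making $\overline{\sigma}^1$ a best response. For $P_2$: since $\overline{\sigma}^3$ evades every $P_2$ strategy, $P_2$ never reaches $\widetilde{S}^2$, so $q^2\leq0$ throughout and $Q^2(s_0,\overline{\sigma}^1,\sigma^2,\overline{\sigma}^3)\leq0=Q^2(s_0,\overline{\sigma})$ for every $\sigma^2$. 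For $P_3$: $q^3\leq0$ always, so $Q^3\leq0=Q^3(s_0,\overline{\sigma})$ for every $\sigma^3$. Thus no player can unilaterally improve, $\overline{\sigma}$ is a NE, and it is noncapturing, which establishes the theorem.

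Two remarks guide the write-up. The construction deliberately does not require $\overline{\sigma}^2$ to be $P_2$'s globally optimal strategy: because $\overline{\sigma}^3$ caps $P_2$'s attainable payoff at $0$, any strategy achieving $0$ (in particular the $P_1$-evading $\overline{\sigma}^2$) is already a best response, so the only substantive facts needed are the two evasion guarantees plus the nonnegativity/nonpositivity sign patterns of $q^1$ and $q^3$. The subtlety I would be most careful about, beyond the interleaving lemma, is matching the ``whose-turn-it-is'' index when chaining the two escape configurations, so that $e^3$ is invoked from a state consistent with the turn order $P_1\to P_2\to P_3$ inside $\Gamma_3$.
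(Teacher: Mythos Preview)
Your argument is correct, but it differs from the paper's construction. The paper uses only \emph{one} escape configuration: it fixes $(x^{1},x^{2})$ with an evasion strategy $\widetilde{\sigma}^{2}$ for $P_{2}$ against $P_{1}$, then places $P_{3}$ at the \emph{same} vertex as $P_{1}$, setting $s_{0}=(x^{1},x^{2},x^{1},1)$, and lets both $P_{1}$ and $P_{3}$ stay in place forever. The point is that $P_{2}$ cannot capture $P_{3}$ without stepping onto $x^{1}$ and being captured by $P_{1}$ first, so $Q^{2}\leq 0$ for every $\sigma^{2}$; the other two inequalities follow exactly as in your write-up. This co-location trick avoids the second escape strategy and the interleaving lemma entirely, so the paper's proof is shorter and needs no care about turn indices for the $P_{2}$--$P_{3}$ pair. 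Your approach, by contrast, mirrors the two-escape construction already used in the proof of Theorem~\ref{prop0505}; it is more symmetric and shows that one can take $\overline{\sigma}^{1}$ to be an arbitrary (even optimal) pursuit strategy rather than ``stay put'', at the cost of the extra interleaving verification you flag.
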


\begin{proof}
We will construct the required $s_{0}$ and $\widehat{\sigma}=\left(
\widehat{\sigma}^{1},\widehat{\sigma}^{2},\widehat{\sigma}^{3}\right)  $.
Since $c\left(  G\right)  >1$, there exist an $\widetilde{s}_{0}=\left(
x^{1},x^{2},p\right)  $ and a $\Gamma_{2}\left(  G|\widetilde{s}_{0}\right)
$-optimal noncapturing profile $\widetilde{\sigma}=\left(  \widetilde{\sigma
}^{1},\widetilde{\sigma}^{2}\right)  $. Now let $s_{0}=\left(  x^{1}%
,x^{2},x^{1},1\right)  $ and define the $\Gamma_{3}\left(  G|\widetilde{s}%
_{0}\right)  $ strategies as follows:\ $\widehat{\sigma}^{2}$ is
$\widetilde{\sigma}^{2}$ (expanded to work in $\Gamma_{3}\left(
G|\widetilde{s}_{0}\right)  $ )\ and, for $n\in\left\{  1,3\right\}  $,
$\widehat{\sigma}^{n}$ specifies that $P_{n}$ always stays in place. Then,
since $\widehat{\sigma}^{2}$ is an optimal evasion strategy we have:
\[
\forall\sigma^{1}:Q^{1}\left(  s_{0},\widehat{\sigma}^{1},\widehat{\sigma}%
^{2},\widehat{\sigma}^{3}\right)  =0=Q^{1}\left(  s_{0},\sigma^{1}%
,\widehat{\sigma}^{2},\widehat{\sigma}^{3}\right)  ;\text{ }%
\]
also, since $P_{2}$ must enter $x^{1}$ to capture $P_{3}$, but then he would
first be captured by $P_{1}$, we have
\[
\forall\sigma^{2}:Q^{2}\left(  s_{0},\widehat{\sigma}^{1},\widehat{\sigma}%
^{2},\widehat{\sigma}^{3}\right)  =0\geq Q^{2}\left(  s_{0},\widehat{\sigma
}^{1},\sigma^{2},\widehat{\sigma}^{3}\right)  ;\text{ }%
\]
and since $P_{3}$ never receives positive payoff we have
\[
\text{ }\forall\sigma^{3}:Q^{3}\left(  s_{0},\widehat{\sigma}^{1}%
,\widehat{\sigma}^{2},\widehat{\sigma}^{3}\right)  =0\geq Q^{3}\left(
s_{0},\widehat{\sigma}^{1},\widehat{\sigma}^{2},\sigma^{3}\right)  .\text{ }%
\]
So $s_{0}$ is a noncapturing NE of $\Gamma\left(  G|s_{0}\right)  $.
\end{proof}

\section{GCR\ with $N$ Players, $N\geq4$\label{sec06}}

We will now briefly examine $\Gamma_{N}\left(  G|s_{0}\right)  $ for $N\geq4$.
Most of the game elements have been defined in Section \ref{sec02}; we define
the turn payoffs $q^{n}$ by generalizing (\ref{eq0501a}). Namely, at every
turn $P_{n}$ receives:

\begin{enumerate}
\item a payoff of $-1$ if he is captured by $P_{n}$;

\item a payoff of $1$ if he captures $P_{n+1}$, but is not simultaneously
captured by $P_{n-1}$;

\item a payoff of $0$ in every other case.
\end{enumerate}

\noindent The above turn payoffs and the total payoff $Q^{n}$ of
(\ref{eq02001})\ complete the specification of $\Gamma_{N}\left(
G|s_{0}\right)  $.

Since it is a multi-player discounted stochastic game of perfect information,
$\Gamma_{N}\left(  G|s_{0}\right)  $ \ has (by Theorem \ref{prop0301}) a NE in
deterministic positional strategies. The $N$-player analog of Theorem
\ref{prop0502} also holds.

\begin{theorem}
\label{prop0601}For every $G,s_{0}$ and $\gamma$, $\Gamma_{N}\left(
G|s_{0}\right)  $ has a NE $\widehat{\pi}=(\widehat{\pi}^{1},\widehat{\pi}%
^{2},...,\widehat{\pi}^{N})$ in deterministic (generally non-positional) strategies.
\end{theorem}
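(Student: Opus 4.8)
The plan is to mirror the three-player construction of Theorem \ref{prop0502} almost verbatim, replacing the two auxiliary games by $N$ of them; the specific form of the turn payoffs $q^n$ for $N\ge4$ plays no role, since the whole construction uses only zero-sum auxiliary games built from $Q^n$ and $-Q^n$. First I would define, for each $n\in[N]$, the auxiliary two-player game $\widetilde{\Gamma}_N^n(G|s_0)$ in which $P_n$ plays against a single opponent $P_{-n}$ controlling the remaining $N-1$ tokens, with payoffs $\widetilde{Q}^n=Q^n$ and $\widetilde{Q}^{-n}=-Q^n$. Exactly as in Lemma \ref{prop0501}, each $\widetilde{\Gamma}_N^n(G|s_0)$ is a two-player zero-sum discounted stochastic game, so by \cite[Theorem 4.3.2]{filar1996} it has a value and both sides possess optimal deterministic positional strategies; denote by $\widehat{\phi}_n^n$ the optimal strategy of $P_n$ and by $\widehat{\phi}_n^{-n}=(\widehat{\phi}_n^m)_{m\neq n}$ the optimal coalition strategy, these being optimal for every $s_0$.

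Next I would define the threat strategy $\widehat{\pi}^n$ exactly as before: $P_n$ follows $\widehat{\phi}_n^n$ as long as every other player $P_m$ follows $\widehat{\phi}_m^m$, and the instant some $P_m$ (with $m\neq n$) is first seen to deviate from $\widehat{\phi}_m^m$, $P_n$ switches permanently to $\widehat{\phi}_m^n$, his component of the coalition that punishes the deviator $P_m$ in $\widetilde{\Gamma}_N^m$. The claim is that $\widehat{\pi}=(\widehat{\pi}^1,\dots,\widehat{\pi}^N)$ is a NE. To verify the NE inequality for a fixed $n$, I would take an arbitrary $\pi^n$, let $\widehat{\mathbf{s}}$ and $\widetilde{\mathbf{s}}$ be the histories produced by $(\widehat{\pi}^n,\widehat{\pi}^{-n})$ and $(\pi^n,\widehat{\pi}^{-n})$, and set $T_n=\min\{t:\widetilde{s}_t\neq\widehat{s}_t\}$. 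If $T_n=\infty$ the two total payoffs coincide; otherwise the summands agree for $t<T_n-1$, and from $t=T_n-1$ onward I split each total payoff into the common prefix plus a tail, exactly as in (\ref{eq003a})--(\ref{eq006a}).

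The crucial point that makes the punishment work is that $P_n$'s first deviation is, under perfect information, detected simultaneously by all the other players, each of whom then switches to his $\widehat{\phi}_n^m$ component; collectively they play the full coalition strategy $\widehat{\phi}_n^{-n}$, which is minimax-optimal against $P_n$ in the zero-sum game $\widetilde{\Gamma}_N^n(G|s^\ast)$ started from $s^\ast=\widehat{s}_{T_n-1}$. The cooperative tail equals $\gamma^{T_n-1}\widetilde{Q}^n(s^\ast,\widehat{\phi}_1^1,\dots,\widehat{\phi}_N^N)$, which by optimality of $\widehat{\phi}_n^n$ for $P_n$ is at least the value $\gamma^{T_n-1}\widetilde{Q}^n(s^\ast,\widehat{\phi}_n^n,\widehat{\phi}_n^{-n})$; the deviating tail, re-expressed via an auxiliary (possibly non-positional) strategy $\rho^n$ that reproduces $P_n$'s post-$T_n$ play, is at most $\gamma^{T_n-1}\widetilde{Q}^n(s^\ast,\rho^n,\widehat{\phi}_n^{-n})\le\gamma^{T_n-1}\widetilde{Q}^n(s^\ast,\widehat{\phi}_n^n,\widehat{\phi}_n^{-n})$ by optimality of the coalition response $\widehat{\phi}_n^{-n}$. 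Chaining these inequalities gives $Q^n(s_0,\widehat{\pi})\ge Q^n(s_0,\pi^n,\widehat{\pi}^{-n})$.

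The step I expect to require the most care is the deviation-accounting once $N\ge4$. When $P_n$ deviates, each punisher $P_m$ itself departs from $\widehat{\phi}_m^m$, and one must check that a third player $P_{m'}$ does not misread $P_m$'s induced departure as a new ``first deviation'' and thereby punish the wrong player. The resolution is the clause ``for the rest of the game'' in the definition of $\widehat{\pi}$: each player reacts only to the first deviation it observes and then locks permanently onto punishing that single deviator. Since detection is simultaneous under perfect information—all of $P_{-n}$ observe the first divergent state $\widetilde{s}_{T_n}$ at once and all attribute it to $P_n$—the coalition plays $\widehat{\phi}_n^{-n}$ consistently from $T_n$ onward, with no cascade of spurious switches. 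Making this simultaneous, unambiguous identification precise (using that moves are sequential and perfectly observed) is the only genuinely new ingredient relative to the $N=3$ proof; the payoff-splitting of (\ref{eq003a})--(\ref{eq004a}) and the zero-sum optimality bounds transfer unchanged.
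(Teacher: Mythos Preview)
Your proposal is correct and follows essentially the same approach as the paper: define the $N$ auxiliary zero-sum games $\widetilde{\Gamma}_N^n(G|s_0)$, build threat strategies from their optimal positional strategies $(\widehat{\phi}_n^m)_{m,n\in[N]}$, and repeat the payoff-splitting argument of Theorem~\ref{prop0502}. The paper's own proof is in fact only a sketch that points back to Theorem~\ref{prop0502}; your write-up supplies more detail (notably the discussion of simultaneous detection and the ``lock-in'' clause preventing cascading punishments), but the underlying construction and logic are identical.
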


\begin{proof}
The proof involves the use of the auxiliary two-player, zero-sum games
$\widetilde{\Gamma}_{N}^{1}\left(  G|s_{0}\right)  ,...,\widetilde{\Gamma}%
_{N}^{N}\left(  G|s_{0}\right)  $ . In $\widetilde{\Gamma}_{N}^{n}\left(
G|s_{0}\right)  $, $P_{n}$ plays against $P_{-n}$, who controls the tokens
$P_{1},...,P_{n-1},P_{n+1},...,P_{N}$. The threat strategies $\widehat{\pi
}=(\widehat{\pi}^{1},\widehat{\pi}^{2},...,\widehat{\pi}^{N})$ are defined in
the same manner as in Section \ref{sec0501}, in terms of the strategies
$\left(  \widehat{\phi}_{n}^{m}\right)  _{m,n\in\left[  N\right]  }$ which are
optimal in the corresponding $\widetilde{\Gamma}_{N}^{n}\left(  G|s_{0}%
\right)  $ games. The rest of the proof is omitted, since it follows closely
that of Theorem \ref{prop0502}.
\end{proof}

Similarly to $\Gamma_{3}\left(  G|s_{0}\right)  $, if $\Gamma_{N}\left(
G|s_{0}\right)  $ has a capturing NE for every initial state $s_{0}$, then $G$
is cop-win. This is stated in the following theorem, where $\mathbf{K}%
_{N}\left(  G|s_{0},\sigma\right)  $ is the obvious generalization of the
capturability function $\mathbf{K}_{3}\left(  G|s_{0},\sigma\right)  $ (the
proof is omitted, since it is similar to that of Theorem \ref{prop0505}).

\begin{theorem}
\label{prop0602}The following holds for every $G$:%
\begin{equation}
\left(  \forall s_{0}\text{ there exists a NE }\widehat{\sigma}\text{ of
}\Gamma_{N}\left(  G|s_{0}\right)  :\text{ }\mathbf{K}_{N}\left(
G|s_{0},\widehat{\sigma}\right)  >0\right)  \Rightarrow c\left(  G\right)
=1.\label{eq0601}%
\end{equation}

\end{theorem}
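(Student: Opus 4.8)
The plan is to mirror the proof of Theorem \ref{prop0505} and argue by contradiction. I would assume that the left-hand side of (\ref{eq0601}) holds (for every $s_{0}$ there is a NE $\widehat{\sigma}$ with $\mathbf{K}_{N}(G|s_{0},\widehat{\sigma})>0$) and, simultaneously, that $c(G)>1$, and then exhibit a player who can profitably deviate from the supposed capturing NE, contradicting the NE property.

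First I would build the initial state $s_{0}=(x^{1},x^{2},\ldots,x^{N},1)$ by a greedy placement generalizing steps 1--3 of the proof of Theorem \ref{prop0505}. Choose $x^{1}$ arbitrarily, and then, for each $n=1,\ldots,N-1$ in turn, having fixed $x^{n}$, choose $x^{n+1}$ so that in the two-player game $\Gamma_{2}(G|(x^{n},x^{n+1},1))$ the evader $P_{n+1}$ has an escaping strategy $\overline{\sigma}^{n+1}$. This is always possible: by the equivalence (\ref{eq0401}) the inequality $\max_{x^{1}}\min_{x^{2}}u^{1}((x^{1},x^{2},1))>0$ fails when $c(G)>1$, so for every pursuer position $x^{n}$ there is an evader position $x^{n+1}$ from which the evader escapes forever against any pursuer play. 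Moreover, the turn order of $\Gamma_{N}$, namely $P_{1}\to P_{2}\to\cdots\to P_{N}$, places $P_{n}$ immediately before $P_{n+1}$ in every cycle, so the induced dynamics on the pair of coordinates $(x^{n},x^{n+1})$ coincides exactly with the pursuer-first order of $\Gamma_{2}(G|(x^{n},x^{n+1},1))$; hence each $\overline{\sigma}^{n+1}$ is usable inside $\Gamma_{N}$ once extended to ignore the positions of all players other than $P_{n}$.

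Next I would invoke the hypothesis to fix a capturing NE $\widehat{\sigma}$ of $\Gamma_{N}(G|s_{0})$, so that $\mathbf{K}_{N}(G|s_{0},\widehat{\sigma})=n$ for some $n\in[N-1]$; that is, $P_{n}$ captures $P_{n+1}$ at some finite time $T$ and, since the game ends at the first capture, $Q^{n+1}(s_{0},\widehat{\sigma})=-\gamma^{T}<0$. I would then let $P_{n+1}$ deviate to $\overline{\sigma}^{n+1}$ while all other players keep their $\widehat{\sigma}$-strategies. The key observation, identical to the one used for $n=1,2$ in Theorem \ref{prop0505}, is that the coordinates $x^{n}$ and $x^{n+1}$ are altered only by $P_{n}$ and $P_{n+1}$, so none of the remaining players can influence whether $P_{n}$ catches $P_{n+1}$; consequently $\overline{\sigma}^{n+1}$ keeps $P_{n+1}$ uncaptured by $P_{n}$ throughout the play. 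Since the only source of negative payoff for $P_{n+1}$ is being captured by $P_{n}$, this deviation gives $Q^{n+1}(s_{0},\overline{\sigma}^{n+1},\widehat{\sigma}^{-(n+1)})\geq 0>Q^{n+1}(s_{0},\widehat{\sigma})$, with equality to $0$ when $n+1=N$ and possibly a strict gain when $P_{n+1}$ now captures $P_{n+2}$. Thus $P_{n+1}$ profits by a unilateral deviation, contradicting that $\widehat{\sigma}$ is a NE; so $c(G)>1$ is untenable and $c(G)=1$.

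I expect the only delicate point to be the transfer of the two-player escaping strategy into the $N$-player game: one must check that interleaving the moves of the other $N-2$ players neither lets $P_{n}$ gain ground on $P_{n+1}$ nor terminates the game, through an unrelated capture, in a way that harms $P_{n+1}$. Both are settled by the coordinate-isolation observation together with the fact that \emph{any} first capture not involving $P_{n+1}$ leaves $Q^{n+1}=0$, so in every outcome after the deviation $Q^{n+1}\geq 0$. Everything else is a case-uniform repackaging of the $N=3$ argument, so the remainder is routine and can be omitted exactly as the statement proposes.
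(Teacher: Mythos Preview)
Your proposal is correct and follows exactly the approach the paper intends: the paper explicitly omits the proof, stating it is similar to that of Theorem \ref{prop0505}, and your argument is precisely the natural $N$-player generalization of that proof (greedy construction of $s_{0}$ via iterated escaping positions, followed by a profitable deviation of the captured player $P_{n+1}$ to his two-player escaping strategy). Your discussion of the ``coordinate-isolation'' point and of the case where an unrelated capture ends the game is also apt and matches the implicit reasoning behind the paper's omission.
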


On the other hand, Theorem \ref{prop0507} does not generalize to the case
$N\geq4$. The following example shows that, even when $G$ is a path, there may
exist optimal profiles $\widehat{\sigma}$ of $\widetilde{\Gamma}_{N}%
^{n}\left(  G|s_{0}\right)  $ which are not NE of $\Gamma_{N}\left(
G|s_{0}\right)  $.

\begin{example}
\label{prop0603}\normalfont In Figure \ref{fig0601} $G$ is a path, the tokens
are positioned as depicted and $P_{4}$ has the starting move; in short
$s_{0}=\left(  1,3,4,5,4\right)  $. 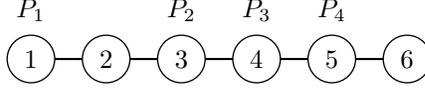
\begin{figure}[ptbh]
\begin{center}
\begin{tikzpicture}
\label{fig0601}
\SetGraphUnit{2}
\Vertex[x=-1,y= 0]{1}
\Vertex[x= 0,y= 0]{2}
\Vertex[x= 1,y= 0]{3}
\Vertex[x= 2,y= 0]{4}
\Vertex[x= 3,y= 0]{5}
\Vertex[x= 4,y= 0]{6}
\node(A) [label=$P_1$] at(-1,0.25) {};
\node(B) [label=$P_2$] at (1,0.25) {};
\node(C) [label=$P_3$] at (2,0.25) {};
\node(D) [label=$P_4$] at (3,0.25) {};
\Edge(1)(2)
\Edge(2)(3)
\Edge(3)(4)
\Edge(4)(5)
\Edge(5)(6)
\SetVertexNoLabel
\end{tikzpicture}
\end{center}
\caption{A path $G$ in which a $\widetilde{\Gamma}_{4}^{2}\left(
G|s_{0}\right)  $-optimal strategy profile is not a NE of $\Gamma_{4}%
^{2}\left(  G|s_{0}\right)  $.}%
\end{figure}In the game $\widetilde{\Gamma}_{4}^{2}\left(  G|s_{0}\right)  $,
$P_{2}$ plays against $P_{-2}$ who controls $P_{1},P_{3},P_{4}$. Clearly the
optimal $P_{-2}$ move from $s_{0}$ is to move $P_{4}$ into vertex $4$, since
then the game ends and $P_{-2}$ receives his maximum possible payoff of 0
(otherwise, on his first move $P_{2}$ captures $P_{3}$ and $P_{-2}$ receives
negative payoff). So every $\widehat{\sigma}^{-2}=\left(  \widehat{\sigma}%
^{1},\widehat{\sigma}^{3},\widehat{\sigma}^{4}\right)  $ which is optimal in
$\widetilde{\Gamma}_{4}^{2}\left(  G|s_{0}\right)  $ must satisfy
$\widehat{\sigma}^{4}\left(  s_{0}\right)  =4$. But such a $\widehat{\sigma
}^{-2}$ cannot be (part of) a NE\ of $\Gamma_{4}\left(  G|s_{0}\right)  $,
because in this game $P_{4}$ can improve his payoff by moving from 5 to 6,
rather than 4.
\end{example}

In Section \ref{sec05} we have shown (Theorem \ref{prop0512})\ that, when $G$
is a tree, for every $s_{0}$ there exists a capturing NE of $\Gamma_{3}\left(
G|s_{0}\right)  $; the proof depended on Theorem \ref{prop0507} which, as
seen, does not generalize for $N\geq4$. Hence we have not been able to
generalize Theorem \ref{prop0512} either. On the other hand, we have not found
a counterexample (i.e., a tree and some initial state for which no capturing
NE exists) hence the matter remains open.

The following generalizes Theorem \ref{prop0513} and is proved very similarly.

\begin{theorem}
\label{prop0604a}For every $N\geq3$ we have
\[
c\left(  G\right)  >1\Rightarrow\left(  \exists s_{0}:\Gamma_{N}\left(
G|s_{0}\right)  \text{ has a noncapturing NE }\widehat{\sigma}\right)  .
\]

\end{theorem}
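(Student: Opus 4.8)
The plan is to generalize the construction in the proof of Theorem~\ref{prop0513} by placing all odd-indexed players together on a single ``shelter'' vertex and letting every even-indexed player run a genuine optimal evasion. Since $c(G)>1$ the graph is not cop-win, so (exactly as in Theorem~\ref{prop0513}) there is a two-player escape configuration: a state $(x^1,x^2,p)$ of $\Gamma_2$ whose value is $0$, together with a $\Gamma_2$-optimal robber strategy $\widetilde{\sigma}^2$. Because the value is $0$, $\widetilde{\sigma}^2$ guarantees $Q^1\le 0$, i.e.\ $T_C=\infty$, against \emph{every} cop strategy; in particular the robber is never captured by a cop that stays forever at $x^1$. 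I would then take the initial state $s_0$ in which every odd-indexed player sits at $A:=x^1$ and every even-indexed player sits at $B:=x^2$ (with any fixed turn, say $p=1$). Since $A\neq B$, consecutive players then occupy distinct vertices, so $s_0$ is a non-capture state; note that for $N=3$ this is precisely the state $(x^1,x^2,x^1,1)$ of Theorem~\ref{prop0513}.

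Next I define $\widehat{\sigma}$ as follows: each odd-indexed player stays put at $A$ forever, and each even-indexed player $P_{2k}$ plays $\widetilde{\sigma}^2$ against its predecessor $P_{2k-1}$, treating $P_{2k-1}$ as ``the cop'' and ignoring all other positions, as licensed by the extension convention of Section~\ref{sec05}. Along the resulting play the odd players remain at $A$ and, since $\widetilde{\sigma}^2$ is escaping, every even player stays off $A$ at all times; hence every pair of consecutive players always occupies distinct vertices, no capture ever occurs, and $Q^n(s_0,\widehat{\sigma})=0$ for all $n$. Thus $\widehat{\sigma}$ is noncapturing, and it remains only to verify that it is a NE.

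For the NE check I would rule out profitable unilateral deviations by parity. An odd player $P_{2k-1}$ can only gain by capturing its successor $P_{2k}$; but $P_{2k}$ uses $\widetilde{\sigma}^2$, which escapes every cop strategy, so $P_{2k-1}$ can never capture it, while staying at $A$ keeps $P_{2k-1}$ itself uncaptured (its even predecessor, if any, stays off $A$). Hence its payoff is $0$, the best possible. An even player $P_{2k}$ can only gain by capturing the odd player $P_{2k+1}$, who sits at $A$; since $x^{2k+1}=A$ throughout, $P_{2k}$ must move onto $A$ to do so, but then $x^{2k-1}=x^{2k}=A$ and the stationary predecessor $P_{2k-1}$ captures $P_{2k}$ at the same instant. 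By the turn-payoff rule this gives $q^{2k}=-1$ (the reward for capturing $P_{2k+1}$ is cancelled because $P_{2k}$ is simultaneously captured by $P_{2k-1}$), so the deviation yields $Q^{2k}=-\gamma^{T}<0$; and no capture can occur before $P_{2k}$ reaches $A$, so this is the only outcome of trying to pursue. Thus following $\widetilde{\sigma}^2$ (payoff $0$) is its best response. The last player, odd or even, only evades and never captures, so its maximal payoff is $0$, already attained. Combining the cases gives $\forall n,\forall\sigma^n:\ Q^n(s_0,\widehat{\sigma})\ge Q^n(s_0,\sigma^n,\widehat{\sigma}^{-n})$, the required noncapturing NE.

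The delicate point, and the main thing to get right, is that the two deterrents must be robust against multi-move deviations under the fixed cyclic turn order, not merely against a single myopic move; this is exactly why the construction keeps the two mechanisms separate. An even player is stopped by the \emph{stationary} odd shelter at $A$, which is present whatever route and timing the even player chooses, while an odd player is stopped by the even players' \emph{genuine} optimal evasion, which escapes every cop behaviour and hence any deviation regardless of timing. Both deterrents are therefore reliable with no fragile synchronization argument. By contrast, the naive ``everyone stands still'' profile fails precisely because $P_1$ has no predecessor to punish it and can simply walk over to capture $P_2$; this is the reason $P_2$, and then by the same propagation every even player, must truly evade rather than merely stand guard. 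The only places the hypothesis $c(G)>1$ enters are the existence of the escape configuration at the shelter vertex $A$ and the escaping property of $\widetilde{\sigma}^2$ against a stationary cop, both of which follow from the value being $0$, exactly as in Theorem~\ref{prop0513}.
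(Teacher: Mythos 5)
Your proposal is correct and follows essentially the same route the paper intends: the paper states that Theorem \ref{prop0604a} ``generalizes Theorem \ref{prop0513} and is proved very similarly,'' and your alternating shelter/evader configuration (odd players stationary at $x^{1}$, even players running the optimal $\Gamma_{2}$ evasion from $x^{2}$) is precisely the natural generalization of the state $\left(x^{1},x^{2},x^{1},1\right)$ and strategy profile used there, reducing to it when $N=3$. The only difference is that you supply the deviation analysis for all $N$ explicitly (including the simultaneous-capture penalty that deters even players and the guaranteed-evasion argument that deters odd players), which the paper omits.
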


\section{More Multi-player Pursuit Games\label{sec07}}

In Section \ref{sec02} we have developed a framework which we have used in
Sections \ref{sec04}, \ref{sec05} and \ref{sec06} to study the game
$\Gamma_{N}\left(  G|s_{0}\right)  $, for various $N$ values. As we will now
explain, this framework applies to a wider family of graph pursuit games.

We have in mind games played by players $P_{1}$, $P_{2}$, ..., $P_{N}$ who
take turns in moving tokens along the edges of a graph. For the time being
assume that each player controls one token and has, in general, two goals:
(i)\ to capture some (other players')\ tokens and (ii)\ to avoid capture of
his own token.

Any such situation can be described, by the formulation of Section
\ref{sec02}, as a multi-player discounted stochastic game of perfect
information. Assuming, without loss of generality, that the players move in
the sequence implied by their numbering, the actual \textquotedblleft capture
relationship\textquotedblright\ will be encoded by the turn payoff functions
$q^{n}$. To preserve the semantics of pursuit / evasion, they should have the
form\footnote{The conditions in (\ref{eq0701}) encode ``minimum'' requirements,
additional restrictions may be imposed, e.g., no simultaneous captures.}
\begin{equation}
q^{n}\left(  \left(  x^{1},...,x^{N},p\right)  \right)  =\left\{
\begin{array}
[c]{rll}%
1 & \text{when for some }m: & x^{n}=x^{m}\text{, }m\in A^{n},\\
-1 & \text{when for some }m: & x^{n}=x^{m}\text{, }m\in B^{n},\\
0 & \text{else.} &
\end{array}
\right.  \label{eq0701}%
\end{equation}
where

\begin{enumerate}
\item $A^{n}$ is the set of $P_{n}$'s \textquotedblleft
targets\textquotedblright\ (i.e., the players whom he can capture)\ and

\item $B^{n}$ is the set of $P_{n}$'s \textquotedblleft
pursuers\textquotedblright\ (i.e., the players who can capture him).
\end{enumerate}

\noindent For example, in in $\Gamma_{2}\left(  G|s_{0}\right)  $ we have
players $P_{1}$ and $P_{2}$ with respective sets%
\[
A^{1}=\left\{  P^{2}\right\}  ,B^{1}=\emptyset\text{,\qquad}A^{2}%
=\emptyset,B^{2}=\left\{  P_{1}\right\}  \text{;}%
\]
while in $\Gamma_{3}\left(  G|s_{0}\right)  $ we have players $P_{1},P_{2}$
and $P_{3}$ with respective sets%
\[
A^{1}=\left\{  P^{2}\right\}  ,B^{1}=\emptyset\text{,\qquad}A^{2}=\left\{
P^{3}\right\}  ,B^{2}=\left\{  P_{1}\right\}  \text{,}\qquad A^{3}%
=\emptyset,B^{3}=\left\{  P_{2}\right\}
\]
and the additional condition of no simultaneous captures (which, as,
requires a small modification of (\ref{eq0701})).

As a final example, consider a game which we could call \textquotedblleft%
\emph{Cyclic Cops and Robbers}\textquotedblright; it involves players
$P_{1},P_{2}$ and $P_{3}$ in which: $P_{1}$ chases $P_{2}$ and avoids $P_{3}$;
$P_{2}$ chases $P_{3}$ and avoids $P_{1}$; $P_{3}$ chases $P_{1}$ and avoids
$P_{2}$. In this game we will have
\[
A^{1}=\left\{  P^{2}\right\}  ,B^{1}=\left\{  P^{3}\right\}  \text{,\qquad
}A^{2}=\left\{  P^{3}\right\}  ,B^{2}=\left\{  P_{1}\right\}  \text{,}\qquad
A^{3}=\left\{  P_{1}\right\}  ,B^{3}=\left\{  P_{2}\right\}  \text{.}%
\]
This game has some interesting properties; they will be fully described in a
separate publication, but as an example suppose it is played on the star graph
of Figure \ref{fig0701a}, with initial positions as indicated. It is easily
checked that, even though the star graph is cop-win, the game has \emph{only
}noncapturing NE. 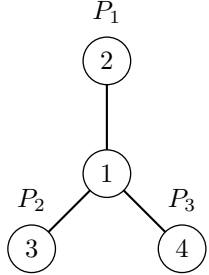
\begin{figure}[ptbh]
\begin{center}
\begin{tikzpicture}
\SetGraphUnit{2}
\Vertex[x= 0,y= 0]{1}
\Vertex[x= 0,y= 1.5]{2}
\Vertex[x=-1,y=-1]{3}
\Vertex[x= 1,y=-1]{4}
\node(A) [label=$P_1$] at (0,1.75) {};
\node(B) [label=$P_2$] at (-1,-0.75) {};
\node(C) [label=$P_3$] at (1,-0.75) {};
\Edge(1)(2)
\Edge(1)(3)
\Edge(1)(4)
\SetVertexNoLabel
\end{tikzpicture}
\end{center}
\par
\label{fig0701}\caption{On this graph Cyclic Cops and Robbers have only
noncapturing NE.}%
\label{fig0701a}%
\end{figure}

Many similar games can be constructed along these lines and all of them will
(i) fall within the game-theoretic framework of Section \ref{sec02} and
(ii)\ possess a well defined game theoretic solution, namely a NE\ in
deterministic positional strategies, according to Theorem \ref{prop0301}.

In fact the framework of Section \ref{sec02} can accommodate further
generalizations for which Theorem \ref{prop0301} will still hold. We list some
additional generalizations to the idea of graph pursuit game.

\begin{enumerate}
\item \emph{Payoffs}. The turn payoffs $q^{n}$ can take values in $\left[
-1,1\right]  $ rather than $\left\{  -1,1\right\}  $. As an example, we have
introduced and studied the game of \emph{Selfish Cops and Robbers}
\cite{kehagias2017}, in which two cops pursue a robber but do not split the
capture payoff equally; instead the capturing (resp. noncapturing)\ cop
receives payoff $\left(  1-\varepsilon\right)  $ (resp. $\varepsilon$), where
$\varepsilon\in\left[  0,\frac{1}{2}\right]  $. Hence each cop has a motive to
be the one who actually captures the robber; if this \textquotedblleft
selfishness\textquotedblright\ is sufficiently strong (this will depend on the
$\varepsilon$ value) it can be exploited by the robber to avoid capture ad infinitum.

\item \emph{Teams}. So far we have assumed that each player controls a single
token. But we can also assume that a game is played by $N$ players (with
$N\geq2$) with $P_{n}$ controlling $K_{n}$ tokens. An example of this is the
classic CR\ game with more than one cop tokens (all of them controlled by a
single \emph{cop player}). Another example are the $\widetilde{\Gamma}_{N}%
^{n}\left(  G|s_{0}\right)  $ auxiliary games of Sections \ref{sec0501} and
\ref{sec06}. These are two-player games, but the idea can be applied to
multi-player games as well. \ For example we could have the three-player
GCR\ game with $P_{1}$ controlling two pursuer tokens and each of $P_{2}$ and
$P_{3}$ controlling one pursuer and one evader token.

\item \emph{Game termination}. So far we have assumed that the game terminates
upon the first capture, but this can also be modified. For example the game
could end upon the elimination of all tokens of one player, or when no more
captures are possible.
\end{enumerate}

Since all of the above modifications can be accommodated by the formulation of
Section \ref{sec02}, the respective games can be analyzed by game-theoretic
methods. At the very least, by Theorem \ref{prop0301} they all possess NE;
further results can be obtained by exploiting the special characteristics of
each game.

\section{Conclusion\label{sec08}}

In this paper we have introduced and studied the \emph{Generalized Cops and
Robbers} game $\Gamma_{N}\left(  G|s_{0}\right)  $, a multi-player pursuit
game in graphs. The two-player version $\Gamma_{2}\left(  G|s_{0}\right)  $ is
essentially equivalent to the classic CR\ game. The three-player version
$\Gamma_{3}\left(  G|s_{0}\right)  $ can be understood as two CR games played
simultaneously on the same graph; a player can simultaneously be pursuer
\emph{and }evader. This also holds for $\Gamma_{N}\left(  G|s_{0}\right)  $
when $N\geq4$.

Using a formulation of $\Gamma_{N}\left(  G|s_{0}\right)  $ as a discounted
stochastic game of perfect information we have proved that it has at least one
NE in positional deterministic strategies. Using auxiliary two-player games
$\widetilde{\Gamma}_{N}^{n}\left(  G|s_{0}\right)  $ we have also proved the
existence of an additional NE\ in nonpositional deterministic strategies. We
have also studied the capturing properties of the $\Gamma_{N}\left(
G|s_{0}\right)  $ NE\ in connection to the cop-number $c\left(  G\right)  $.

Both $\Gamma_{N}\left(  G|s_{0}\right)  $ and $\widetilde{\Gamma}_{N}%
^{n}\left(  G|s_{0}\right)  $ are members of a general family of graph pursuit
games, which can be described by the framework of Section \ref{sec02} and its
generalizations, presented in Section \ref{sec07}. This family is a broad
generalization of the two-player graph pursuit games previously studied to the
multi-player case; it contains games with rather unexpected properties and
hence, we believe, it deserves additional study.

\end{document}